\crefname{observation}{observation}{observations}
\Crefname{observation}{Observation}{Observations}
\newcommand{\F}{\mathbb{F}}
\newtheorem{theorem}{Theorem}[section]
\newtheorem{lemma}[theorem]{Lemma}
\newtheorem{claim}[theorem]{Claim}
\newtheorem{observation}[theorem]{Observation}
\DeclareMathOperator{\ad}{adj}
\newcommand{\adj}[1]{(#1)^{\ad}}
\newcommand\DE{\stackrel{\mathclap{\scalebox{.5}{\mbox{DS}}}}{=}}
\newcommand\dge{\stackrel{\mathclap{\scalebox{.5}{\mbox{DE}}}}{=}}
\newcommand\PME{\stackrel{\mathclap{\scalebox{.4}{\mbox{PME}}}}{=}}
\newcommand{\abs}[1]{\lvert #1 \rvert}
\theoremstyle{definition}
\newtheorem{definition}[theorem]{Definition}
\newtheorem{remark}[theorem]{Remark}
\algnewcommand{\algorithmicinput}{\textbf{Input:}}
\algnewcommand{\Input}{\item[\algorithmicinput]}
\algnewcommand{\algorithmicoutput}{\textbf{Output:}}
\algnewcommand{\Output}{\item[\algorithmicoutput]}
\DeclarePairedDelimiterX{\Set}[2]{\{}{\}}{\,{#1}\,:\,{#2}\,}
\DeclareMathOperator*{\tw}{ct}
\DeclareMathOperator*{\diag}{diag}
\DeclareMathOperator*{\rank}{rank}
\newcommand{\comp}[1]{\overline{#1}}
\newcommand{\sgc}[1]{}
\title{{Characterizing and Testing Principal Minor Equivalence of Matrices}}
\author[1]{Abhranil Chatterjee \thanks{\texttt{abhneil@gmail.com}}}
\author[2]{Sumanta Ghosh
\thanks{ \texttt{besusumanta@gmail.com}}}
\author[3]{Rohit Gurjar \thanks{\texttt{rgurjar@cse.iitb.ac.in}, supported by SERB MATRICS grant}}
\author[3]{Roshan Raj \thanks{ \texttt{roshanraj@cse.iitb.ac.in}}}
\affil[1]{Indian Statistical Institute, Kolkata}
\affil[2]{Chennai Mathematical Institute}
\affil[3]{Indian Institute of Technology Bombay}
\date{\today}
\begin{document}
\maketitle  
\begin{abstract}
    Two matrices are said to be principal minor equivalent if 
     they have equal corresponding principal minors of all orders.
    We give a characterization of principal minor equivalence 
    and a deterministic polynomial time algorithm to check if two given matrices are principal minor equivalent. 
    Earlier such results were known for certain special cases like symmetric matrices, skew-symmetric matrices with {0, 1, -1}-entries, and matrices with no cuts (i.e., for any non-trivial partition of the indices, the top right block or the bottom left block must have rank more than 1).
    
    As an immediate application, we get an algorithm to check if the determinantal point processes corresponding to two given kernel matrices (not necessarily symmetric) are the same. 
    As another application, we give a deterministic polynomial-time test to check equality of two multivariate polynomials, each computed by a symbolic determinant with a rank 1 constraint on coefficient matrices.
    \end{abstract}

\section{Introduction}
The determinant of a matrix is a fundamental object of study in mathematics that has found numerous applications throughout computer science, physics, and other fields.
A \emph{minor} of a matrix is the determinant of one of its square submatrices and its order is the size of the corresponding submatrix.
%
A \emph{principal minor} of a matrix is a minor  
obtained by deleting the same set of rows and columns.
%
Principal minors play an important role in a variety of applications,
for example,
convexity of functions and positive semidefinite matrices~\cite{BoydVandenberghe2004}, 
the linear complementarity problem and P-matrices~\cite{Murty1972},
counting number of forests via the Laplacian matrix~\cite{BapatSiva2011},
 and inverse eigenvalue problems~\cite{Friedland1977}.

In this paper, we investigate a basic question about principal minors -- what is the relationship between two  $n\times n$ matrices 
which have equal corresponding principal minors of all orders (i.e., two 
matrices $A$ and $B$ such that 
for all $S \subseteq \{1,2,\dots, n\}$, $\det(A[S,S]) =\det(B[S,S])$). 
We call two such matrices to be \emph{principal minor equivalent (PME)}.
Observe that two matrices are PME if and only if all their corresponding principal submatrices have the same set of eigenvalues.
We seek answers of the following two questions.
\begin{description}
\item[Question~1 (characterization).] Can we identify a property $\mathcal{P}$ such that two matrices are PME if and only if they satisfy $\mathcal{P}$?

\item[Question~2 (efficient algorithm).] Can we efficiently check whether two matrices are PME or not?
\end{description}

The question of characterizing the relationship between two PME matrices has been extensively studied \cite{ES80, Loewy86, Ahmadieh23, BoussairiChaichaaCherguiLakhlifi2021, BC16}.
One motivation for studying this question comes from the problem of 
learning determinantal point processes~\cite{KuleszaTaskar2012, Urschel2017, Brunel2018}
and the closely related principal minor assignment problem~\cite{GriffinTsatsomeros2006, RisingKuleszaTaskar2015, BrunelUrschel2024}. 
While our original motivation to study this question came from an application to the polynomial identity testing problem (see Section~\ref{sec:intro-applications}). 

To move towards characterizing PME matrices, 
let us first consider some trivial operations which preserve all the principal minors. 
Two matrices $A$ and $B$ are called \emph{diagonally similar} if there exists an invertible diagonal matrix $D$ such that $A = D B D^{-1}$.
We call two matrices $A$ and $B$ \emph{diagonally equivalent} if $A$ is diagonally similar to $B$ or $B^T$.
It is easy to verify that any two diagonally equivalent matrices are PME.
Interestingly, Engel and Schneider~\cite{ES80} showed that the converse is also true when one of the matrices is symmetric.
That is, principal minor equivalence of a symmetric matrix with another matrix implies their diagonal equivalence (in fact, diagonal similarity). 
As one can efficiently check whether two matrices are diagonally equivalent or not, it also yields an efficient algorithm to decide principal minor equivalence in this case.

In general, principal minor equivalence does not imply diagonal equivalence,
as demonstrated by the following example. 
Consider the following block diagonal matrix $A$ and a block upper triangular matrix $B$:
\begin{equation}
A = \begin{pmatrix}
A_1 &0\\
0 &A_2
\end{pmatrix},
\quad\quad
B = 
\begin{pmatrix}
A_1 &A_3\\
0 &A_2
\end{pmatrix}.
\end{equation}
It is easy to see that $A$ and $B$
are principal minor equivalent oblivious to the entries of $A_3$, but they are not diagonally equivalent. 
Such matrices 
	that can be written as a block upper triangular matrix by permuting some rows and corresponding columns are called \emph{reducible} matrices (and irreducible otherwise). 
 For any $n\times n$ matrix $A$, define a graph with the vertex set $[n]$ and allow an edge $(i,j)$ if and only if the $(i,j)$-th entry of $A$ is nonzero. 
 We can equivalently define reducible matrices as the ones whose graph has more than one \emph{strongly connected components}. 
One can show that two matrices are PME if and only if
they have the same set of irreducible blocks
and their corresponding irreducible blocks are PME (see, for example, \cite[Section 5]{Ahmadieh23}). 
Hence, we can restrict our focus to irreducible matrices. 

In a series of works, Hartfiel and Loewy \cite{Hartfiell84}, and Loewy \cite{Loewy86} extended the result of Engel and Schneider~\cite{ES80} to general irreducible matrices with no \emph{cuts}. 
An $n\times n$ matrix $A$ is said to have a cut $X\subseteq [n]$, if $2 \leq |X| \leq n-2$ and both the submatrices $A[X, \comp{X}]$ and $A[\comp{X}, X]$  have rank one (the submatrices cannot have rank zero if $A$ is irreducible).
They showed that for any irreducible matrix $A$ with no cuts and any matrix $B$, if $A$ and $B$ are PME, then $A$ and $B$ are also diagonally equivalent. 
So, the case which remained unclear was that of irreducible matrices with cuts. 
Engel and Schneider~\cite[Example 3.7]{ES80} had given an example of two $4 \times 4$ matrices which are PME, but not diagonally equivalent. 
Clearly, both these matrices must have a cut. 

\paragraph{The cut-transpose operation.}
Recently, Ahmadieh \cite[Lemma 4.5]{Ahmadieh23} gave a general recipe that for any irreducible matrix $A$ with a cut, finds another matrix $B$ that is PME to $A$, but not necessarily diagonally equivalent to $A$. 
For this they define an operation on matrices with a cut, which we 
refer as cut-transpose. 
Consider a matrix $A$ and let $X$ be a cut of $A$. 
From the definition of a cut, $A$ must be of the following form:
\[
A = 
\begin{pmatrix}
	M &pq^T\\
	uv^T &N
\end{pmatrix},
\]
where the submatrix $A(X,X) = M$ and $A(\comp{X}, \comp{X}) = N$
and $p, q, u, v$ are column vectors of appropriate dimensions.
Define a \emph{cut-transpose} operation on $A$ with respect to cut $X$, which transforms $A$ to a new matrix $\tw(A, X)$ as follows:
\[
\tw(A,X) = 
\begin{pmatrix}
	M &pu^T\\
	qv^T &N^T
\end{pmatrix}.
\]
Ahmadieh~\cite{Ahmadieh23} showed that cut-transpose is a principal minor preserving operation.
A natural conjecture would be that any two irreducible PME matrices are related by a sequence of cut-transpose operations. 
To elaborate, let us define any two matrices $A$ and $B$ as \emph{cut-transpose equivalent} if 
there is a sequence $A = A_0$, $A_1, \ldots,$ $A_k$ of matrices
such that for each $0 \leq i \leq k-1$, $A_{i+1} = \tw (A_{i}, X_{i})$ 
for some cut $X_{i}$ of $A_i$, and $A_k$ is diagonally equivalent to $B$.
Can one show that two irreducible matrices are PME if and only if
they are cut-transpose equivalent?    

Boussa\"{i}ri and Chergui~\cite{BC16}  gave a characterization for principal minor equivalent matrices for a special case,
when the two matrices are skew-symmetric with entries from $\{-1, 0, 1\}$
and all their off-diagonal entries in the first row are nonzero. 
Interestingly, this characterization turns out to be cut-transpose equivalence
with a restriction. 
Moreover, they conjectured that the characterization should be true for arbitrary skew-symmetric matrices. 
In a follow up work, Boussa\"{i}ri, Chaïcha\^{a}, Chergui, and Lakhlifi~\cite{BoussairiChaichaaCherguiLakhlifi2021}
proved a similar result for another special case called generalized tournament matrices 
(non-negative matrices $A$ with $A+A^T = J_n -I_n$, where $J_n$ is all ones matrix). 
The  two settings use a transformations called HL-clan-reversal and
clan-inversion, respectively, which coincide with some restrictions of the cut-transpose operation.
Both these work build on a combinatorial result~\cite{BoussairiIlleLopezThomasse2004} about directed graphs with a similar flavor.
The combinatorial result, in turn, is a generalization of Gallai's theorem~\cite{Gallai1967} 
which states that if two partially ordered sets 
have the same comparability graph, then 
they are related by a sequence of orientation reversal operations (see~\cite{BoussairiIlleLopezThomasse2004, Mohring1985}).
This orientation reversal on a poset is a special instance of cut-transpose on the corresponding
skew-symmetric matrix. 

This series of works strengthens the confidence in the conjecture that cut-transpose equivalence should be a characterization of PME for arbitrary irreducible matrices.
However, their techniques are graph-theoretic and it is not clear how they can be 
generalized to arbitrary matrices. 
We instead employ algebraic techniques and show that conjecture is indeed true, thereby completely resolving Question~1. 
This extends the results for above mentioned special cases and also proves the conjecture of Boussa\"{i}ri and Chergui~\cite{BC16} about skew-symmetric matrices.
Moreover, we show that for any two $n\times n$ irreducible PME matrices $A$ and $B$, the cut-transpose sequence contains at most $2n$ matrices.
\begin{theorem}
\label{thm:main-one}
Let $A$ and $B$ be two $n\times n$ irreducible matrices over any field.
Then,  $A$ and $B$ are principal minor equivalent if and only if there exists a sequence of $n\times n$ matrices $(A=A_0,A_1,\dots, A_k)$  with $k< 2n$ such that
\begin{equation}
\label{eqn:main-thm-one}
\text{ for } 0\leq i \leq k-1,\ A_{i+1} = \tw(A_{i},X_i) \text{ for some cut } X_i \text{ of }A_{i} 
\end{equation}
and $A_k$ is diagonally equivalent to $B$.
\end{theorem}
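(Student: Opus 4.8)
\emph{Proof idea.}
The ``if'' direction is immediate and does not use irreducibility: a diagonal similarity $B\mapsto DBD^{-1}$ sends $B[S,S]$ to the conjugate matrix $D[S,S]\,B[S,S]\,D[S,S]^{-1}$, so it preserves every principal minor; transposition preserves determinants of square submatrices; and cut-transpose preserves principal minors by Ahmadieh's result \cite{Ahmadieh23}. Hence every matrix occurring in a sequence satisfying \eqref{eqn:main-thm-one} is PME to $A$, and in particular $B$ is PME to $A$.

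For the converse, the plan is induction on $n$ via a decomposition of the matrix along its cuts. It is convenient to note first the reformulation that $A$ and $B$ are PME if and only if $\det(\diag(x_1,\dots,x_n)+A)=\det(\diag(x_1,\dots,x_n)+B)$ as polynomials in $x_1,\dots,x_n$; the three operations above are transparently invariant for this polynomial, and from its low-order coefficients one reads off that PME forces $A$ and $B$ to have equal diagonals and equal products $A_{ij}A_{ji}=B_{ij}B_{ji}$, hence the same symmetrised support. The base case is when $A$ has no cut: then by the Hartfiel--Loewy/Loewy theorem \cite{Hartfiell84,Loewy86} $A$ and $B$ are already diagonally equivalent, so $k=0$. (If instead $B$ has no cut, apply that theorem with the roles of $A$ and $B$ reversed; so in the remaining case both $A$ and $B$ have cuts.)

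For the inductive step I would build, in analogy with the modular decomposition of a graph, a canonical decomposition tree of an irreducible matrix: its leaves are the indices, every non-leaf is a cut (a module interacts with the rest of the matrix through rank-$1$ blocks, so contracting it is meaningful), and each internal node is either \emph{prime} -- the quotient matrix obtained by contracting the node's children has no cut -- or \emph{degenerate} -- the children may be permuted and independently ``reflected'' without leaving the family of cuts. One then shows that PME propagates down this tree: $A$ and $B$ have the same decomposition tree; at each prime node the quotient is irreducible and cut-free and has the same principal minors for $A$ and $B$, so the Hartfiel--Loewy/Loewy theorem pins it down up to diagonal equivalence; and at each degenerate node the only remaining freedom between $A$ and $B$ is exactly the reflection freedom realised by a cut-transpose at that node -- the algebraic generalisation of the orientation-reversal phenomenon for comparability graphs underlying \cite{BC16,BoussairiChaichaaCherguiLakhlifi2021,BoussairiIlleLopezThomasse2004}. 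Processing the tree node by node, and using that cut-transpose at a fixed cut is an involution and commutes with both transposition and diagonal conjugation, one brings $A$ to a canonical form in fewer than $n$ cut-transposes; reversing an analogous sequence for $B$ (legitimate by the involution/commutation properties) and splicing the two chains gives a sequence as in \eqref{eqn:main-thm-one} with $k<2n$ ending at a matrix diagonally equivalent to $B$.

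The main obstacle is precisely this algebraic core, which is where the earlier graph-theoretic arguments do not apply: (i) showing that the cuts of a matrix over an arbitrary field really organise into such a tree and that the quotient at each node is a genuine matrix on which ``the restriction of PME'' makes sense; (ii) showing that principal minors detect the tree and behave correctly under contraction, so that Hartfiel--Loewy/Loewy rigidity can be invoked node by node; and (iii) showing that at a degenerate node PME does \emph{not} distinguish arrangements beyond the reach of cut-transpose. Granting these, the bound $k<2n$ is a routine count on the decomposition tree -- at most one cut-transpose charged to each non-root internal node, and there are fewer than $n$ such nodes on each of the two sides.
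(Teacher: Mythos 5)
Your ``if'' direction and base case match the paper, and the reformulation via $\det(\diag(x_1,\dots,x_n)+A)$ is exactly the right lens. But the heart of your inductive step rests on a claim that is false in general, and this is precisely the obstacle the paper spends most of its technical effort overcoming.

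You write that ``$A$ and $B$ have the same decomposition tree'' and that PME ``propagates down this tree.'' This presupposes that two PME matrices share cuts in a structured way. The paper gives an explicit counterexample (the two $6\times 6$ permutation matrices corresponding to the two Hamiltonian 6-cycles in Figure~\ref{fig:twoCycles}): they are PME but have \emph{no cut in common whatsoever}, since a cut in either one corresponds to a contiguous arc of the respective cycle, and no index set is simultaneously an arc of both. Any argument of the form ``build $A$'s tree, build $B$'s tree, observe they coincide, align node by node'' cannot even get started here. The paper's resolution is two lemmas that you have no analogue of: if $S$ is a minimal cut of $A$ with $|S|\geq 3$, then $S$ is already a cut of $B$ (Lemma~\ref{lem:Greaterthan2commoncut}); and if $|S|=2$ and $S$ is not a cut of $B$, one can explicitly construct a cut $X$ of $B$ so that $\tw(B,X)$ does have $S$ as a cut (Lemma~\ref{lem:cutsizetwo}). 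In other words, a cut-transpose is needed \emph{just to create a common cut}, before any decomposition can be performed -- a step entirely absent from your plan. Relatedly, cuts of a matrix need not organise into a laminar family (they can cross), so even for a single matrix the existence of the tree in your item (i) is nontrivial; the paper sidesteps this by only ever extracting a single minimal cut and recursing, never building a global tree.

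Two further issues. First, the paper does not decompose by ``contracting a module to a quotient''; it passes to overlapping submatrices $A[S+s]$ and $A[\overline{S}+t]$ (each including one extra index from the other side), and then lifts a cut-transpose sequence for the smaller matrices back up (Claim~\ref{cl:smallToBig}). Making your ``quotient at a node'' rigorous, so that PME restricts to it and the rank-$1$ structure is preserved, is not addressed and is a different object from a principal submatrix. Second, the whole recursion in the paper is preceded by a reduction to the case where all off-diagonal entries are nonzero, via $A\mapsto (A+Z)^{\mathrm{adj}}$ together with Lemmas~\ref{lem:inversibleAdjugate}, \ref{lem:AdjugateCut}, \ref{lem:AdjugateTwistEquality}, and \ref{lem:ResultForAplusDimpliesForA}, which show that this map preserves PME, preserves cuts, and commutes with cut-transpose. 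Without this normalisation the rank-$1$ blocks defining a cut can have zero rows or columns, and the combinatorial manipulation of cuts (including the definition of cut-transpose itself, which needs a distinguished nonzero row/column) becomes much more delicate. Your proposal omits this step entirely. In short, the tree-based plan follows the graph-theoretic template that the authors explicitly say does not generalise; the algebraic content of the paper is almost exactly the material you defer under ``granting these,'' and the centrepiece (no common cuts, fixed by one preliminary cut-transpose) is not granted but must be proved.
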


Now, let us come to the question of an efficient algorithm to check 
if two given matrices are PME (Question~2). 
If one is allowed the use of randomness, then there is a simple algorithm for this task via a reduction to polynomial identity testing. 
Consider a $n\times n$ diagonal matrix $Y$ with variables $y_1, y_2, \dots, y_n$ in the diagonal. 
Observe that two $n\times n$ matrices $A$ and $B$  are PME if and only if the following is a polynomial identity (i.e., coefficient-wise equality)
\[\det(A+Y) = \det(B+Y).\]
There is a simple randomized algorithm for polynomial identity testing: just plug-in some random numbers for the variables and then check the equality (see~\cite{Sch80,DL78,Zip79}). 
There is no deterministic polynomial time algorithm known for polynomial identity testing in general, but we can still ask if there is one for this special case. 
We answer this question positively.
Recall the earlier discussion about reducible matrices and note that
testing PME for two matrices reduces to the same for their corresponding
irreducible blocks. 
\begin{theorem}
\label{thm:main-two}
There exists a deterministic polynomial-time algorithm
that for any two given  $n\times n$ matrices $A$ and $B$ over any field, 
decides whether the corresponding principal minors of $A$ and $B$ are equal or not. 
If they are equal, then as a certificate, the algorithm outputs cut-transpose sequences
for the corresponding irreducible blocks of the two matrices as guaranteed by Theorem~\ref{thm:main-one}.
\end{theorem}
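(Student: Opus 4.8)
The plan is to turn the proof of \cref{thm:main-one} into an algorithm and verify that every step runs in deterministic polynomial time, with ``no'' instances detected along the way. First I reduce to irreducible matrices: compute the strongly connected components of the digraphs of $A$ and $B$, and (after a simultaneous permutation of rows and columns) put both matrices in block upper triangular form with the components $P_1,\dots,P_m$ as the index sets of the diagonal blocks. Then $A+Y$ is block upper triangular with diagonal blocks $A[P_i,P_i]+Y[P_i,P_i]$, so $\det(A+Y)=\prod_{i}\det\bigl(A[P_i,P_i]+Y[P_i,P_i]\bigr)$ is a product of polynomials in pairwise disjoint, nonempty sets of variables. By unique factorization in $\F[y_1,\dots,y_n]$ this forces that $A$ and $B$ are PME if and only if they induce the same partition of $[n]$ into strongly connected components and, for each part $P$, the blocks $A[P,P]$ and $B[P,P]$ are PME. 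So I reject if the two partitions differ and otherwise recurse on the blocks, reducing to the case that $A$ and $B$ are irreducible.

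For an irreducible matrix I would build a canonical hierarchical decomposition along its cuts, in the spirit of the modular (or ``clan'') decomposition of graphs that underlies the special cases cited in the introduction. Refining \cref{thm:main-one}, the structural claim I want is that the cuts of an irreducible matrix form a well-nested family, so the decomposition is carried by a tree with $O(n)$ nodes, each node being either \emph{prime} --- the matrix obtained by contracting its children to single indices has no cut --- or \emph{degenerate}, the latter carrying extra symmetry in that its children may be permuted and cut-transposed without changing any principal minor. This tree should be computable in polynomial time: checking whether a bipartition $(X,\comp{X})$ is a cut is a pair of rank computations, a cut (if one exists) can be found by a polynomial-time search analogous to those used for modular decomposition, and iterating builds the tree. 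At a prime node the cited result of Hartfiel--Loewy and Loewy says that PME coincides with diagonal equivalence, and diagonal equivalence of two matrices $C$ and $D$ is polynomial-time decidable: they must have the same support digraph and diagonal, and then $C$ is diagonally similar to $D$ (respectively to $D^{T}$) exactly when the products of entries around the cycles of a fixed cycle basis of the support graph agree, equalities from which the diagonal conjugator can be read off.

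Running these routines on $A$ and $B$, I check that the two decomposition trees are isomorphic, that matched prime labels are diagonally equivalent, and that matched degenerate nodes are compatible; if any check fails, the answer is ``not PME''. Otherwise I read off a cut-transpose sequence by walking the isomorphic trees from the root: at each degenerate node I insert the boundedly many cut-transposes needed to align the order and orientation of its children with those in the tree of $B$, and at each prime node I apply the diagonal conjugator from the base case. Since each of the $O(n)$ nodes contributes $O(1)$ cut-transposes, the sequence has length $<2n$, as in \cref{thm:main-one}. As a final and independent soundness check I apply the produced cut-transposes to $A$ --- a polynomial-time rearrangement --- obtaining a matrix $A_k$, and verify directly that $A_k$ is diagonally equivalent to $B$; since cut-transpose preserves principal minors and diagonal equivalence implies PME, a successful verification certifies that $A$ and $B$ are PME, while \cref{thm:main-one} guarantees that on a genuine PME instance the procedure succeeds.

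The main obstacle is the cut-decomposition theory of the second paragraph: proving that cuts nest well and, more delicately, controlling how the cut-transpose operation interacts with this structure --- each cut-transpose must be performed at a cut of the \emph{current} matrix, and it can create or destroy cuts elsewhere --- so that the decomposition trees really do capture PME and the realignment can be carried out node by node within the $2n$ budget. This is the technical heart shared with \cref{thm:main-one}; granting it, the polynomial-time claims come down to the routine observations that cut finding, rank computation, cycle-basis products, and tree isomorphism are all polynomial time.
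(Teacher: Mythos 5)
Your proposal diverges substantially from the paper and leaves its central step unproved. The paper's algorithm is a direct implementation of the recursive proof of \cref{thm:main-one}: it first replaces each irreducible block $A$ by $\adj{A+D}$ for a diagonal $D$ found via a PIT-style interpolation argument (\cref{cl:YSubstitution}), so that all off-diagonal entries are nonzero and the cut-transpose sequence pulls back via \cref{lem:ResultForAplusDimpliesForA}; it then finds a \emph{minimal} cut $S$ of $A$ by minimizing the submodular function $X\mapsto\rank A[X,\comp X]+\rank A[\comp X,X]$ under cardinality constraints (\cref{lem:MinimalCutInPolynomialTime}); it applies one cut-transpose to $B$ if necessary so that $S$ becomes a common cut (\cref{lem:cutsizetwo}, \cref{lem:Greaterthan2commoncut}); and it recurses on $A[\comp S+s]$, $B[\comp S+s]$, lifting the returned sequence via \cref{cl:smallToBig} and closing with at most one more cut-transpose via \cref{lem:SplussToN}. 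In contrast, you propose building a canonical modular-decomposition-style tree of cuts, classifying nodes as prime or degenerate, and aligning the two trees.

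The key gap is the one you yourself flag and then ``grant'': that the cuts of an irreducible matrix form a well-nested (laminar) family supporting a decomposition tree with $O(1)$ cut-transposes per node. Nothing in the paper establishes this, and it is false in the naive sense: \cref{thm:cutmap} explicitly treats the case where two cuts $S,T$ cross, showing $T\Delta S$ becomes a cut of $\tw(A,S)$, and the introduction's example of two $6\times6$ matrices whose cuts are all ``paths'' in a cycle shows that PME matrices can have entirely disjoint cut families. Turning this into a canonical tree, with a bounded number of aligning moves at each node, is precisely the technical heart the paper replaces with its minimal-cut recursion plus the two ``repair'' lemmas; asserting it without proof leaves the algorithm's correctness and its $<2n$ bound unsupported. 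Two further gaps: (i) you never perform the $\adj{A+D}$ transformation that makes all off-diagonal entries nonzero, yet the characterization lemmas you want to invoke (and the clean rank-one behavior of cuts) depend on it, and producing a suitable $D$ deterministically is itself a nontrivial step (\cref{cl:YSubstitution}); (ii) your claim that a cut ``can be found by a polynomial-time search analogous to modular decomposition'' is unjustified --- cuts here are rank-one bipartitions, not graph modules, and the paper instead reduces cut-finding to submodular function minimization. Finally, your unique-factorization argument for the reduction to irreducible blocks tacitly assumes that $\det(A[P]+Y[P])$ is variable-connected whenever $A[P]$ is irreducible; this is plausible but needs an argument, whereas the paper simply invokes \cref{lem:redToIr}.
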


\subsection{Applications}
\label{sec:intro-applications}

\paragraph{Polynomial Identity Testing.} 
As mentioned earlier our motivation for the principal minor equivalence problem came from the polynomial identity testing (PIT) problem.
Given two multivariate polynomials in a succinct representation, the PIT problem asks to decide whether the two polynomials are identical (i.e., all corresponding coefficients are equal). 
One of the widely studied and useful representation for multivariate polynomials is the \emph{determinantal representation}.
We say that  a polynomial $f(x_1, \ldots, x_m)\in \mathbb{F}[x_1, \ldots, x_m]$ has a determinantal representation of size $n$ if there exists matrices $A_0, A_1, \ldots, A_m\in \mathbb{F}^{n\times n}$ such that $f = \det(A_0 + \sum A_ix_i)$. 
The determinantal representation is known to be almost as expressive as algebraic circuits (see~\cite{Valiant1979} for more details).
The PIT problem admits a randomized polynomial-time algorithm~\cite{Sch80,DL78,Zip79}. 
Obtaining a deterministic algorithm for PIT remains a challenging  open problem that would have interesting implications in proving lower bounds, and many other algorithmic applications (see, for example, \cite{ShpilkaY10}). 
Unable to solve it for the general setting, the problem has been studied for various restricted settings. 

One such restricted setting is symbolic determinant under rank one restriction.
Here we ask for testing whether $\det(A_0 + \sum_{i=1}^m A_ix_i) =0$, for given matrices $A_i$, where $\rank(A_i)=1$ for $1\leq i \leq m$. 
There has been a lot of interest in this particular setting because of its connections with some combinatorial optimization problems like bipartite matching
and linear matroid intersection (see~\cite{Edmonds1967, Lovasz1989, NarayananSaranVazirani1992}), and algebraic problems like maximum rank matrix completion (see~\cite{Ivanyos2010, Geelen1999, Murota1993}).
The connection with combinatorics also gives a deterministic polynomial time algorithm for identity testing in this setting. 
In fact, there is also an efficient blackbox PIT (quasi-polynomial time) known for this case~\cite{DBLP:conf/stoc/GurjarT17} (blackbox means that the algorithm cannot see the input, it can only evaluate the given polynomial at any point). 

When we have an efficient algorithm to test whether a given polynomial from a class is zero, the next natural question one can ask is to test whether two given polynomials from that class are equal. 
If the class of polynomials is closed under addition, the equality question easily reduces to testing zeroness of a given polynomial (from the same class). 
Many well studied classes of polynomials have this property, for example, sparse polynomials, bounded-depth circuits, constant fan-in depth-3 circuits etc.
On the other hand, there are classes like ROABPs, which are not closed under addition~\cite{KayalNairSaha2020}, and for which the equality testing question has been studied independently~\cite{GurjarKorwarSaxena2016}. 
Symbolic determinant with rank one restriction is another such class. 
To the best of our knowledge, the class is not known to be closed under addition. 
Given that zeroness testing is known for this class, 
a natural extension would be to ask  if two given polynomials from this class are equal.
To the best of our knowledge, no non-trivial (deterministic) algorithm was known for
testing equality of two polynomials from this class (symbolic determinant with rank one restriction).
We show that this problem reduces to testing principal minor equivalence, and hence,
has a deterministic polynomial-time algorithm.

\begin{theorem}
\label{thm:main-three}
There exists a deterministic polynomial time algorithm such that given two sequences of $n\times n$ matrices $(A_0,A_1,\ldots, A_m)$ and $(B_0,B_1,\ldots, B_m)$ over any field, with the rank of $A_i$ and $B_i$ being at most $1$ for $1\leq i\leq n$, it decides whether $\det(A_0+A_1y_1+\ldots+A_my_m)=\det(B_0+B_1y_1+\ldots+B_my_m)$.
\end{theorem}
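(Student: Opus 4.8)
The plan is to reduce the problem to a single instance of principal minor equivalence, to which Theorem~\ref{thm:main-two} applies. Write $A_i = u_i v_i^T$ and $B_i = u_i' v_i'^T$ for column vectors (with $u_i=v_i=0$ when $A_i=0$), and set $f_A(y):=\det(A_0+\sum_i A_iy_i)$, $f_B(y):=\det(B_0+\sum_i B_iy_i)$. Collecting the factors, $\sum_i A_iy_i = U Y V^T$ where $U=[u_1\mid\cdots\mid u_m]$, $V=[v_1\mid\cdots\mid v_m]$ and $Y=\diag(y_1,\dots,y_m)$, and likewise for $B$. The key algebraic fact is that for invertible $\tilde A_0$, the Schur-complement identity gives $\det(\tilde A_0 + UYV^T)=\det(\tilde A_0)\,\det\!\big(I_m + Y\,\Phi_A\big)$ with $\Phi_A:=V^T\tilde A_0^{-1}U$ an $m\times m$ matrix, and $\det(I_m+M)=\sum_{S\subseteq[m]}\det(M[S,S])$. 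Since $(Y\Phi_A)[S,S]=\diag(y_i:i\in S)\cdot\Phi_A[S,S]$, this expands to
\[
\det(\tilde A_0 + UYV^T)=\det(\tilde A_0)\sum_{S\subseteq[m]}\Bigl(\prod_{i\in S}y_i\Bigr)\det\bigl(\Phi_A[S,S]\bigr),
\]
so the coefficient of $\prod_{i\in S}y_i$ is exactly $\det(\tilde A_0)\det(\Phi_A[S,S])$, and matching these across $A$ and $B$ reduces equality to a determinant comparison plus the principal-minor-equivalence of $\Phi_A$ and $\Phi_B$.

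To exploit this I would first reduce to the case where the constant term $A_0$ is invertible. Observe $f_A(c)=\det(A_0+\sum_i c_iA_i)$, so $f_A\equiv 0$ iff $A_0+\sum_i c_iA_i$ is singular for all $c$; deciding this and, in the non-identically-zero case, producing a point $c$ (over a small extension of $\F$ if needed) with $A_0+\sum_i c_iA_i$ invertible is exactly the deterministic identity test --- and its search version --- for symbolic matrices with rank-one coefficient matrices, the class discussed in the applications paragraph, for which such algorithms are known via the matroid-intersection / maximum-rank-completion machinery. If both $f_A,f_B\equiv 0$, output ``equal''; if exactly one is, output ``not equal''. Otherwise fix such a $c$, set $\tilde A_0:=A_0+\sum_i c_iA_i$ (invertible) and $\tilde B_0:=B_0+\sum_i c_iB_i$. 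Since $y\mapsto y+c$ is an invertible affine change of the same variables, $f_A=f_B$ iff $\tilde f_A=\tilde f_B$ for $\tilde f_A(y):=\det(\tilde A_0+\sum_i A_iy_i)=f_A(y+c)$; the constant term of $\tilde f_A$ equals $\det\tilde A_0\neq 0$, so if $\det\tilde B_0\neq\det\tilde A_0$ the polynomials differ and we stop, and otherwise $\tilde B_0$ is invertible as well.

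Now the reduction closes: compute $\Phi_A=V^T\tilde A_0^{-1}U$ and $\Phi_B=(V')^T\tilde B_0^{-1}U'$, both $m\times m$. By the displayed expansion, $\tilde f_A=\tilde f_B$ iff $\det(\tilde A_0)\det(\Phi_A[S,S])=\det(\tilde B_0)\det(\Phi_B[S,S])$ for every $S\subseteq[m]$; the case $S=\emptyset$ forces $\det\tilde A_0=\det\tilde B_0$ (already verified and nonzero), after which the remaining conditions say precisely that $\Phi_A$ and $\Phi_B$ are principal minor equivalent. So the algorithm outputs the answer of the Theorem~\ref{thm:main-two} algorithm run on $\Phi_A,\Phi_B$. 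Every other step --- factoring the rank-one matrices, forming $\tilde A_0,\tilde B_0$, inverting them, forming the $\Phi$'s, comparing a few determinants --- is elementary linear algebra in deterministic polynomial time (and, when $m>n$, $\Phi_A,\Phi_B$ automatically have rank $\le n$, which is harmless).

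The one genuinely nontrivial step is the reduction to invertible $A_0$: this is where the rank-one structure is really used, since one cannot deterministically locate a good shift $c$ for an arbitrary symbolic matrix. Once $A_0$ is invertible the reduction to principal minor equivalence is immediate from the determinant lemma and the rest is bookkeeping; I would also check the degenerate case $m=0$ (just compare $\det A_0$ with $\det B_0$) and be careful that all vectors and the point $c$ may need to live in a small algebraic extension of $\F$, over which Theorem~\ref{thm:main-two} still applies.
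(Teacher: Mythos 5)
Your proposal is correct, and it takes a genuinely different route from the paper. The paper first treats the case $A_0=B_0=0$, applies the Cauchy--Binet formula to $\det(U_iYV_i^T)$, finds a common column base $T$ by matroid intersection, normalizes by $U_{i,T}^{-1}$, invokes the Jacobi complementary-minor identity to re-express the size-$n$ minor products as products of \emph{complementary} minors of the trailing blocks $\widehat U_i, \widehat V_i$, and then packages these into the principal minors of an explicit $m\times m$ block matrix $\left(\begin{smallmatrix}0 & \widehat V_i^T\\ -\widehat U_i & 0\end{smallmatrix}\right)$; the general $A_0\neq 0$ case is then handled separately through the Gurjar--Thierauf embedding into a $(2m+n)\times(2m+n)$ matrix and a Laplace expansion. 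You instead first shift $y\mapsto y+c$ to make the constant matrix invertible (the $c$ being found by the same matroid-intersection / max-rank-completion machinery -- indeed, $\det(A_0+\sum_i c_iA_i)\not\equiv 0$ iff a common base of the two column matroids exists, so the two subroutines coincide when $A_0=0$), and then use Sylvester's determinant identity $\det(\tilde A_0 + UYV^T)=\det(\tilde A_0)\det(I_m+Y\Phi_A)$ with $\Phi_A=V^T\tilde A_0^{-1}U$, together with the expansion $\det(I_m+M)=\sum_S\det M[S,S]$, to read off the coefficient of each multilinear monomial directly as $\det(\tilde A_0)\det(\Phi_A[S,S])$. Your route avoids the Cauchy--Binet / complementary-minor bookkeeping entirely, handles arbitrary $A_0$ and $B_0$ uniformly without any larger embedding (your final PME instance is $m\times m$ versus the paper's $(2m+n)\times(2m+n)$ in the general case), and the observation that $S=\emptyset$ lets you check $\det\tilde A_0=\det\tilde B_0$ up front and thereby ensure $\tilde B_0$ is also invertible closes the one gap that the shift introduces. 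Both proofs lean on the same two nontrivial ingredients -- a deterministic matroid-intersection step and the $\PME$-testing algorithm of Theorem~\ref{thm:main-two} -- so neither is doing strictly less work, but your reduction is cleaner and a bit tighter.
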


\paragraph{Determinantal Point Processes.}
As mentioned earlier, one motivation to study principal minors come from 
determinantal point processes (DPP). 
DPP are a family of probabilistic models which originated in physics~\cite{Macchi1975}, 
and which has subsequently found a wide range of applications in machine learning~\cite{KuleszaTaskar2012}, 
for example, document summarization, recommender systems, information retrieval etc.\
 (see references given in \cite{GartrellBrunelDohmatobKrichene2019, Urschel2017}). 
 Conventionally, a DPP is defined using principal minors of
  an $n\times n$ symmetric positive semidefinite matrix $K$,
  called a kernel, whose eigenvalues are between 0 and 1. 
  The DPP corresponding to kernel matrix $K$ is a probability distribution on subsets $Y$ of $\{1,2,\dots, n\}$
  such that for any subset $J \subseteq \{1,2, \dots, n\}$,
  \[\Pr [J  \subseteq Y] = \det (K[J]),\]
  where $K[J]$ is the principal submatrix of $K$ corresponding to set $J$ (see \cite{Kulesza2012}). 
  DPPs are useful in settings where one needs to generate a diverse set of objects 
  (larger principal minor means the vectors associated with the subset span a larger volume). 

   Symmetric DPPs (as defined above with a symmetric kernel matrix) 
 have a significant expressive power, however they come with a limitation. 
 Symmetric DPPs can model only repulsive interactions. 
 That is, 
 any pair of items has a negative correlation -- selection of one item 
 reduces the chances of selection of another item. 
To overcome this limitation, 
nonsymmetric determinantal point process has been proposed,
that is, DPP with a nonsymmetric kernel matrix $K$.
A nonsymmetric kernel matrix can model both positive and negative correlations. 
Lately, there have been a few works on nonsymmetric DPPs~\cite{Brunel2018, GartrellBrunelDohmatobKrichene2019, Reddy2022, HanGartrell2022, Arnaud2024}. 
One of the crucial questions in the study of DPPs
is to understand how are two kernel matrices related which produce the same DPP, which was explicitly asked in some works on learning DPPs~\cite{Brunel2018, BrunelUrschel2024}.
This is precisely the principal minor equivalence problem.
While it was already understood in the case of symmetric DPPs, 
we answer it for nonsymmetric DPPs in this work. 
Theoerm~\ref{thm:main-one} gives a characterization of the set of matrices $K'$
such that $DPP(K') = DPP(K)$ for a given kernel matrix $K$ (not necessarily symmetric).
Theorem~\ref{thm:main-two} gives a deterministic polynomial time algorithm to test whether
two given kernel matrices will produce the same DPP. 

\subsection{Proof overview}
In this subsection, we give a short overview of the proof techniques involved in proving Theorems~\ref{thm:main-one}, \ref{thm:main-two}
and \ref{thm:main-three}.
We start with Theorem~\ref{thm:main-one} which characterizes principal minor equivalence of irreducible matrices by cut-transpose equivalence. 
We have already discussed that cut-transpose equivalence implies principal minor equivalence. 
Thus, only the other direction remains to be shown, i.e., principal minor equivalence implies cut-transpose equivalence. 

\paragraph{Reduction to the case of all nonzero entries.} Our proof of Theorem~\ref{thm:main-one} works with an assumption
that the matrices have all nonzero entries.
We reduce the general case to this case using a technique from earlier works~\cite{Loewy86, Hartfiell84, Ahmadieh23},
namely, the transformation $A \mapsto (A+Z)^{\mathrm{adj}}$ (or $(A+Z)^{-1}$),
where $Z$ is a diagonal matrix with entries as distinct algebraically
independent elements (or indeterminates).
They showed that for any irreducible matrix $A$, the matrix $(A+Z)^{\mathrm{adj}}$ has all nonzero entries. 
Moreover, two irreducible matrices $A$ and $B$ are PME if and only if 
$(A+Z)^{\mathrm{adj}}$ and $(B+Z)^{\mathrm{adj}}$ are. 
They have also shown that $A$ and 
$(A+Z)^{\mathrm{adj}}$ have the same set of cuts. 
In Lemma~\ref{lem:AdjugateTwistEquality}, 
we show that the cut-transpose operation commutes with operation 
$A \mapsto (A+Z)^{\mathrm{adj}}$. 
This means that matrices $A$ and $B$ are cut-transpose equivalent if and only if $(A+Z)^{\mathrm{adj}}$ and $(B+Z)^{\mathrm{adj}}$ are.
Hence, it is sufficient to prove 
Theorem~\ref{thm:main-one} for matrices with all nonzero entries. 

\paragraph{No common cuts.} 
To prove the characterization for matrices with cuts,
a natural strategy would be to somehow decompose the matrices along a chosen cut
and then argue inductively for the obtained smaller pairs of matrices.  
From Loewy's characterization~\cite{Loewy86}, it follows that for any two irreducible PME matrices $A$ and $B$,
if $A$ has a cut, then $B$ must also have one. 
However, it is not necessary a subset of indices which is a cut in matrix $A$, is also a cut in matrix $B$.
In fact, it is possible that the two matrices do not have even one cut in common. 
Following is such an example of two irreducible PME matrices. 
\[
A=
\begin{pmatrix}
     0 & 1 & 0 & 0 & 0 & 0 \\
     0 & 0 & 1 & 0 & 0 & 0 \\
     0 & 0 & 0 & 1 & 0 & 0 \\
     0 & 0 & 0 & 0 & 1 & 0 \\
     0 & 0 & 0 & 0 & 0 & 1 \\
     1 & 0 & 0 & 0 & 0 & 0 
\end{pmatrix}, 
\quad
B =
\begin{pmatrix}
     0 & 0 & 0 & 1 & 0 & 0 \\
     0 & 0 & 0 & 0 & 0 & 1 \\
     0 & 0 & 0 & 0 & 1 & 0 \\
     0 & 1 & 0 & 0 & 0 & 0 \\
     1 & 0 & 0 & 0 & 0 & 0 \\
     0 & 0 & 1 & 0 & 0 & 0 
\end{pmatrix}
.
\]
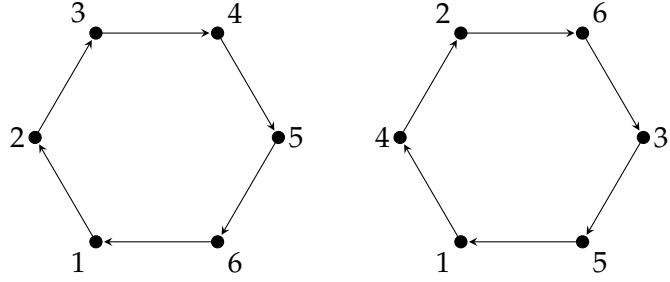
\begin{figure}
    \centering
\begin{tikzpicture}[scale=0.8]
    \draw [fill] (0,0) node [below left] {$1$} circle [radius=0.1];
    \draw [fill] (-1,1.73) node [left] {$2$} circle [radius=0.1];
    \draw [fill] (0,3.46) node [above left] {$3$} circle [radius=0.1];
    \draw [fill] (2, 3.46) node [above right] {$4$} circle [radius=0.1];
    \draw [fill] (3,1.73) node [right] {$5$} circle [radius=0.1];
    \draw [fill] (2,0) node [below right] {$6$} circle [radius=0.1];
    \draw [shorten >=3, -stealth] (0,0)--(-1,1.73);
    \draw [shorten >=3, -stealth] (-1,1.73)--(0,3.46);
    \draw [shorten >=3, -stealth] (0,3.46)--(2, 3.46);
    \draw [shorten >=3, -stealth] (2, 3.46)--(3,1.73);
    \draw [shorten >=3, -stealth] (3,1.73)--(2,0);
    \draw [shorten >=3, -stealth] (2,0)--(0,0); 

        \draw [fill] (6,0) node [below left] {$1$} circle [radius=0.1];
    \draw [fill] (5,1.73) node [left] {$4$} circle [radius=0.1];
    \draw [fill] (6,3.46) node [above left] {$2$} circle [radius=0.1];
    \draw [fill] (8, 3.46) node [above right] {$6$} circle [radius=0.1];
    \draw [fill] (9,1.73) node [right] {$3$} circle [radius=0.1];
    \draw [fill] (8,0) node [below right] {$5$} circle [radius=0.1];

    \draw [shorten >=3, -stealth] (6,0)--(5,1.73);
    \draw [shorten >=3, -stealth] (5,1.73)--(6,3.46);
    \draw [shorten >=3, -stealth] (6,3.46)--(8, 3.46);
    \draw [shorten >=3, -stealth] (8, 3.46)--(9,1.73);
    \draw [shorten >=3, -stealth] (9,1.73)--(8,0);
    \draw [shorten >=3, -stealth] (8,0)--(6,0); 

 \end{tikzpicture}

    \caption{Directed graphs associated with  two matrices $A$ and $B$.}
    \label{fig:twoCycles}
\end{figure}
To see that the  two matrices $A$ and $B$ are PME, 
recall that the determinant is a sum over cycle covers in 
the  directed graph associated with the matrix. 
And observe that  the associated directed graphs (Figure~\ref{fig:twoCycles})
have only one cycle, and thus, both matrices have only one nonzero principal minor (i.e., $\det(A)=\det(B) =1$). 
Now, to see that the two matrices do not have a common cut,
observe that for these matrices, any cut corresponds to
a path in associated directed graph. 
And there is no subset of vertices simultaneously forming a path in both the graphs.

We handle such cases with no common cuts by transforming one of the 
matrices to have a common cut with the other. 
Then we prove cut-transpose equivalence by induction based on the size of the matrices. 
The following points summarize our proof strategy.
\begin{enumerate}
    \item For any two irreducible PME matrices $A$ and $B$, 
    we show that the matrix $A$ has a cut in common either with 
    matrix $B$ or with another matrix $B'$ obtained from $B$ 
    via a cut-transpose operation.
    \item  Then assuming that the two given matrices have a common cut, 
    we ``decompose'' each matrix along a common cut to obtain two smaller matrices. 
    We argue that the two obtained pairs of matrices are also PME and hence, are cut-transpose equivalent by induction hypothesis.  
    Then we are able to lift their cut-transpose equivalence to the given matrices. 
    \item  The base case for the induction is $4 \times 4$ matrices.
    
\end{enumerate}
We now elaborate on each of the above points. 

\paragraph{Base case: $4 \times 4$ matrices} 
If $A$ and $B$ are  $4 \times 4$ irreducible PME matrices,
then we show (Lemma~\ref{lem:size4Matrix})  that (i) either the two are diagonally equivalent or (ii) they have a common cut and when we do a cut-transpose on matrix $A$ along the common cut, we get a matrix diagonally equivalent to $B$.
For $3\times 3$ or smaller matrices, there is no cut, and hence the
two matrices must be diagonally equivalent~\cite{Hartfiell84}. 

\paragraph{Getting a common cut.} 
To get a common cut in the given matrices $A$ and $B$, we consider a (inclusion-wise) minimal cut $S$ in $A$. 
We show that if $S$ is not a cut in $B$ then 
$S$ must have size two
(Lemma~\ref{lem:Greaterthan2commoncut}).
Moreover, in that case we can argue that 
there is a cut $X$ in matrix $B$ such that
when we apply cut-transpose on $B$ along the cut $X$, 
we get another matrix $B'$ where $S$ is a cut 
(Lemma~\ref{lem:cutsizetwo}). 
Clearly, proving cut-transpose equivalence between $A$ and $B'$ will imply the same between $A$ and $B$. 
The proofs of these two lemmas build on some other technical claims (Lemmas~\ref{thm:cutmap}, ~\ref{lem:size4Matrix}, \ref{lem:minCutGeneral}, \ref{lem:Blowerhalf}), and this is where most of the technical novelty lies.

Let us see how the matrix $B'$ is obtained in the example described above.
Observe that the matrix $A$ has a cut $S=\{1,2\}$, which
is not a cut in matrix $B$.
Let us consider the cut $X=\{1,4\}$ in matrix $B$ and apply cut-transpose along it.
We obtain the following matrix $B'$, which has $S=\{1,2\}$ as a cut, 
as desired. 
 Figure~\ref{fig:cut-transpose-cycle} shows the cut-transpose operation on the associated directed graph. 
\[
B =
\begin{pmatrix}
     0 & 0 & 0 & 1 & 0 & 0 \\
     0 & 0 & 0 & 0 & 0 & 1 \\
     0 & 0 & 0 & 0 & 1 & 0 \\
     0 & 1 & 0 & 0 & 0 & 0 \\
     1 & 0 & 0 & 0 & 0 & 0 \\
     0 & 0 & 1 & 0 & 0 & 0 
\end{pmatrix}
\quad
\xrightarrow[\text{along } \{1,4\}]{\text{cut-transpose}}
\quad
B' =
\begin{pmatrix}
     0 & 0 & 0 & 1 & 0 & 0 \\
     1 & 0 & 0 & 0 & 0 & 0 \\
     0 & 0 & 0 & 0 & 0 & 1 \\
     0 & 0 & 0 & 0 & 1 & 0 \\
     0 & 0 & 1 & 0 & 0 & 0 \\
     0 & 1 & 0 & 0 & 0 & 0 
\end{pmatrix}
\]

\begin{figure}
    \centering
\begin{tikzpicture}[scale=0.8]
    \draw [fill] (0,0) node [below left] {$1$} circle [radius=0.1];
    \draw [fill] (-1,1.73) node [left] {$4$} circle [radius=0.1];
    \draw [fill] (0,3.46) node [above left] {$2$} circle [radius=0.1];
    \draw [fill] (2, 3.46) node [above right] {$6$} circle [radius=0.1];
    \draw [fill] (3,1.73) node [right] {$3$} circle [radius=0.1];
    \draw [fill] (2,0) node [below right] {$5$} circle [radius=0.1];
    \draw [shorten >=3, -stealth] (0,0)--(-1,1.73);
    \draw [shorten >=3, -stealth] (-1,1.73)--(0,3.46);
    \draw [shorten >=3, -stealth] (0,3.46)--(2, 3.46);
    \draw [shorten >=3, -stealth] (2, 3.46)--(3,1.73);
    \draw [shorten >=3, -stealth] (3,1.73)--(2,0);
    \draw [shorten >=3, -stealth] (2,0)--(0,0); 

    \draw [dashed] (-1,3) to (1.5,-0.8);

    \draw[-stealth] (4,1.73)--(7,1.73);
    \node at (5.5,1.73) [above] {cut-transpose};
    \node at (5.5,1.73) [below] {along  $\{1,4\}$};

        \draw [fill] (9,0) node [below left] {$1$} circle [radius=0.1];
    \draw [fill] (8,1.73) node [left] {$4$} circle [radius=0.1];
    \draw [fill] (9,3.46) node [above left] {$5$} circle [radius=0.1];
    \draw [fill] (11, 3.46) node [above right] {$3$} circle [radius=0.1];
    \draw [fill] (12,1.73) node [right] {$6$} circle [radius=0.1];
    \draw [fill] (11,0) node [below right] {$2$} circle [radius=0.1];

    \draw [shorten >=3, -stealth] (9,0)--(8,1.73);
    \draw [shorten >=3, -stealth] (8,1.73)--(9,3.46);
    \draw [shorten >=3, -stealth] (9,3.46)--(11, 3.46);
    \draw [shorten >=3, -stealth] (11, 3.46)--(12,1.73);
    \draw [shorten >=3, -stealth] (12,1.73)--(11,0);
    \draw [shorten >=3, -stealth] (11,0)--(9,0); 

 \end{tikzpicture}

    \caption{Applying cut-transpose on the directed graph associated with matrix $B$.}
    \label{fig:cut-transpose-cycle}
\end{figure}
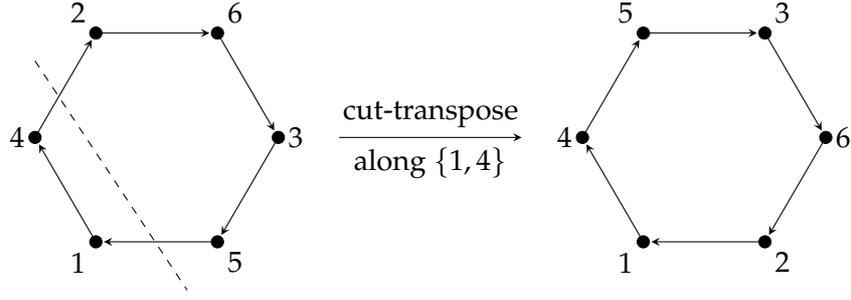

\paragraph{Decomposition into smaller matrices.}
One of our crucial ideas is to define the right decomposition
of a matrix along one of its cuts.
For an $n \times n$ matrix $A$ with a cut $S \subseteq [n]$, 
we consider a decomposition of $A$ into two matrices $A_1$
and $A_2$ defined as follows: choose two arbitrary indices
$s \in \comp{S}$ and $t \in S$, and define
$A_1 := A[S+s]$ and $A_2 := A[\comp{S}+t]$. 
Recall that we assume all off-diagonal entries to be nonzero, 
hence, the choice of $s$ and $t$ do not really matter. 
As discussed earlier, we can assume that there is set $S$, which is a 
cut in 
both the matrices $A$ and $B$.
We similarly decompose $B$ into matrices $B_1$ and $B_2$.
It is easy to see that if $A$ and $B$ are PME, then so 
are $A_i$ and $B_i$, for $i=1,2$. 

If $S$ is a minimal cut of $A$, then we show that $A_1$
has no cut (Lemma~\ref{lem:minCutGeneral}). 
In that case, $B_1$ also does not have a cut and is diagonally equivalent to $A_1$ (from Loewy's characterization). 
If it so happens that $A_2$ and $B_2$ are already diagonally equivalent, then we show 
that $A$ is either diagonally equivalent to $B$ or 
$\tw(B, \comp{S})$ (Lemma~\ref{lem:SplussToN}). 

The more interesting case is when $A_2$ and $B_2$ are not diagonally equivalent. 
Then by induction hypothesis, we assume that $A_2$ and $B_2$
are cut-transpose equivalent. 
In the final step in the proof, we show that we can lift the cut-transpose sequence that relates $A_2$ and $B_2$ to a cut-transpose sequence for $A$ and $B$ as described in \cref{cl:smallToBig}.
This lifting procedure is as follows: for each cut $X$ in the sequence,
we either replace it with $X \cup S$ or keep it as it is, depending on whether $X$ contains $t$ or not.
Finally we append $\comp{S}$ to the sequence (or do not append), depending on
whether $A_1$ is diagonally similar to $B_1^T$ or $B_1$.
We demonstrate this lifting of the cut-transpose sequence via an example. 
Consider two PME matrices 
\[
A=
\begin{pmatrix}
    1 & 3 & 1 & 1 & 1 \\
    2 & 1 & -1 & -1 & -1 \\ 
    1 & 2 &  2 & 1 & 1   \\
    2 & 4 &  -2 & 3 & 4   \\
    -1 & -2 & 1 & 5 & 6  \\
\end{pmatrix}
,
\quad 
B=
\begin{pmatrix}
    1 & 2 & 1 & 2 & -1 \\
    3 & 1 & 2 & 4 & -2 \\ 
    1 & -1 &  2 & 2 & -1   \\
    1 & -1 &  -1 & 3 & 5   \\
    1 & -1 & -1 & 4 & 6  \\
\end{pmatrix}
\]
Let us index the rows and columns of these two matrices by $\{a,b,c,d,e\}$.
Observe that matrices $A$ and $B$ have common cut $S= \{a,b\}$. 
We decompose each of them to obtain two smaller matrices as given below.
Here matrices $A_1$ and $B_1$ are submatrices of $A$ and $B$, respectively, indexed by $\{a,b,c\}$.
Similarly, matrices $A_2$ and $B_2$ are submatrices indexed by $\{b,c,d,e\}$. 
\[
A_1=
\begin{pmatrix}
    1 & 3 & 1  \\ 
    2 & 1 & -1  \\
    1 & 2 &  2 \\
\end{pmatrix}, \;
A_2=
\begin{pmatrix}
     1 & -1 & -1 & -1 \\ 
     2 &  2 & 1 & 1   \\
     4 &  -2 & 3 & 4   \\
     -2 & 1 & 5 & 6 
\end{pmatrix}, \;
B_1=
\begin{pmatrix}
    1 & 2 & 1  \\
    3 & 1 & 2  \\ 
    1 & -1 &  2    \\
\end{pmatrix}, \;
B_2=
\begin{pmatrix}
     1 & 2 & 4 & -2 \\ 
     -1 &  2 & 2 & -1   \\
    -1 &  -1 & 3 & 5   \\
     -1 & -1 & 4 & 6  \\
\end{pmatrix}
\]
Observe that $A_1=B_1^T$. 
To relate $A_2$ and $B_2$, observe that applying cut-transpose on $B_2$ with respect to cut $\{b,c\}$, gives us 
\[\begin{pmatrix}
         1 & 2 & -1 & -1 \\ 
     -1 &  2 & -\frac{1}{2} & -\frac{1}{2}   \\
    4 &  4 & 3 & 4   \\
     -2 & -2 & 5 & 6  
\end{pmatrix}.
\]
The obtained matrix is diagonally similar to $A_2$ (they are related by diagonal matrix $D=\diag(1,-\frac{1}{2
},1,1)$).
Hence, the cut-transpose sequence for $A_2$ and $B_2$ is simply $(\{b,c\})$.
To lift this sequence to $A$ and $B$ we have to take union with $\{a,b\}$ (because $\{b,c\}$ contains $b$).
That is, we obtain the sequence $(\{a,b,c\})$. 
Finally, since $A_1=B_1^T$, we need to append this sequence by
another cut $\comp{S}=\{c,d,e\}$.
Hence, the cut-transpose sequence  relating $A$ and $B$
is $(\{a,b,c\}, \{c,d,e\})$. 
Following equation shows this.  
\begin{align*}
A=
\begin{pmatrix}
    1 & 3 & 1 & 1 & 1 \\
    2 & 1 & -1 & -1 & -1 \\ 
    1 & 2 &  2 & 1 & 1   \\
    2 & 4 &  -2 & 3 & 4   \\
    -1 & -2 & 1 & 5 & 6  \\
\end{pmatrix}
&
\xrightarrow[\text{along } \{a,b,c\}]{\text{cut-transpose}}
 %
 \begin{pmatrix}
   1 & 3 & 1 & 2 & -1 \\
    2 & 1 & -1 & -2 & 1 \\ 
    1 & 2 &  2 & 2 & -1   \\
    1 & 2 &  -1 & 3 & 5   \\
    1 & 2 & -1 & 4 & 6  \\
\end{pmatrix}\\
&
\xrightarrow[\text{along } \{c,d,e\}]{\text{cut-transpose}}
%
\begin{pmatrix}
    1 & 2 & 1 & 2 & -1 \\
    3 & 1 & 2 & 4 & -2 \\ 
    1 & -1 &  2 & 2 & -1   \\
    1 & -1 &  -1 & 3 & 5   \\
    1 & -1 & -1 & 4 & 6  \\
\end{pmatrix}
=B
\end{align*}
 
\paragraph{An efficient algorithm.}
Now we describe some of the ideas involved in our polynomial time algorithm to find a cut-transpose sequence for two irreducible PME matrices. 
The lemmas mentioned above all have constructive proofs,
that is, the following tasks can be done in polynomial time. 
\begin{itemize}
\item Given a minimal cut $S$ of matrix $A$, we can check whether 
it is also a cut of matrix $B$. If not, then we can find an appropriate cut in $B$ such that applying cut-transpose along it gives us a matrix that has $S$ as a cut. 

\item Given a cut-transpose sequence for $A_2$ and $B_2$ (as defined above), we can find one for $A$ and $B$. 
\end{itemize}
Two parts which remain unclear are -- (i) how to find a minimal cut  of a matrix efficiently and (ii) how to compute $(A+Z)^{\mathrm{adj}}$ efficiently for a given matrix $A$? 

To find a cut of a matrix $A$, we first show that the function 
$f(X) := \rank(A[X,\comp{X}])+\rank(A[\comp{X},X])$
is  a submodular function (Lemma~\ref{lem:MinimalCutInPolynomialTime}).
Then observe that if an irreducible matrix $A$ has a cut,
 then  cuts are precisely those sets $X$
 which minimize $f(X)$ under the constraints $\abs{X} \geq 2$ and $\abs{\comp{X}} \geq 2$.
To find an inclusion-wise minimal cut, 
we simply find a minimum size cut,
using the known algorithms for submodular function minimization
under such constraints (Lemma~\ref{lem:MinimalCutInPolynomialTime}). 

Coming to the second question, recall that 
instead of matrix $A$, we consider  
 $(A+Z)^{\mathrm{adj}}$  to ensure that all matrix entries are nonzero. 
 Here $Z$ is a diagonal matrix with distinct algebraically independent elements (or indeterminates). 
 However, we cannot compute (or even write down) the entries $(A+Z)^{\mathrm{adj}}$ efficiently (succinctly). 
 For efficiency, we need to replace the diagonal entries in $Z$
 with elements from the given field (or a large enough algebraic extension).
 Using ideas from polynomial identity testing, we show that 
 in (deterministic) polynomial time, we can compute an appropriate matrix $Z$, which 
 ensures that the entries of $(A+Z)^{\mathrm{adj}}$ are all nonzero (Claim~\ref{cl:YSubstitution}). 

 \paragraph{Applications to PIT.}
 As discussed earlier, our algorithm to test principal minor equivalence of two matrices $A$ and $B$ can also be viewed as 
 an algorithm to test if the following is a polynomial identity:
 \[\det(A+Y) = \det(B+Y),\]
 where $Y$ is a diagonal matrix with its diagonal entries being all distinct variables. 
 Theorem~\ref{thm:main-three} considers a more general PIT question: 
 whether $\det(A_0+A_1y_1+\ldots+A_my_m)=\det(B_0+B_1y_1+\ldots+B_my_m)$
 for given rank-1 matrices $A_1, A_2, \dots, A_m, B_1, B_2, \dots, B_m$ and arbitrary matrices $A_0, B_0$. 
We get a deterministic algorithm for 
 this more general PIT question via a reduction to testing principal minor equivalence of two given matrices (Section~\ref{sec:PIT}).
 The reduction uses matroid intersection as a subroutine, 
 which is known to be solvable in deterministic polynomial time. 

\section{Notation and Preliminaries}
\label{sec:prelim}

We use $[n]$ to denote the set of positive integers $\{1,2,\ldots, n\}$. 
For any $X\subseteq [n]$, $\comp{X}$ denotes the complement set $X$. 
For two sets $S$ and $T$, $S\Delta T$ denotes the symmetric difference of $S$ and $T$.
For a set $X$ and an element $e$, we use $X+e$ to denote the set $X\cup\{e\}$ and $X-e$ to denote the set $X\setminus \{e\}$.

Suppose that $w_1=(w_{1,1},w_{1,2},\ldots, w_{1,k_1})^T, \ldots, w_\ell=(w_{\ell,1}, w_{\ell,2}, \ldots, w_{\ell,k_\ell})^T$ are $\ell$ vectors over a field $\F$. 
Then, we use $(w_1 \mid \cdots\mid w_\ell)$ to denote the concatenation of the vectors $w_1,\ldots, w_\ell$ as follows $$(w_1 \mid \cdots \mid w_\ell)=(w_{1,1},\ldots, w_{1,k_1},\ldots w_{\ell,1},\ldots, w_{\ell, k_\ell})^T.$$ 

For an $n\times n$ matrix $A$ and $S, T\subseteq [n]$, $A[S, T]$ denotes the submatrix of $A$ with rows indexed by elements in $S$ and columns indexed by elements in $T$.
For $S\subseteq [n]$, let $A[S]$ denote the submatrix $A[S,S]$.
When $S=\{i\}$, then $A[i,T]=A[S,T]$. 
We follow a similar notation when $T$ is a singleton. 
For a square matrix $A$, by $A^{\mathrm{adj}}$, we denote the adjoint, or adjugate, of $A$.

\subsection{Principal minor equivalence} 
Suppose that $A$ and $B$ are two $n\times n$ matrices over any field. 
The matrix $A$ is said to be \emph{principal minor equivalent} to $B$ if the corresponding principal minors of $A$ and $B$ are \emph{equal}, i.e. for all $S\subseteq [n]$, $\det(A[S, S])=\det(B[S, S])$. 
We use $A\PME B$ to denote that $A$ is the principal minor equivalent to $B$. 

The following lemma shows that the principal minor equivalence relation between two matrices remains unchanged under adjoint operation and shift by appropriate diagonal matrices. 
It is a straightforward consequence of~\cite[Lemma~4]{Hartfiell84}.
\begin{lemma} 
\label{lem:inversibleAdjugate}
Let $A$ and $B$ be two $n\times n$ matrices over a field $\F$.
Let $D$ be an $n\times n$ diagonal matrix over $\F$ such that $A+D$ and $B+D$ are non-singular. Then, $A\PME B$ if and only if $\adj{A+D}\PME \adj{B+D}$.
\end{lemma}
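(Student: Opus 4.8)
The plan is to strip off the diagonal shift, reducing to the classical statement for a pair of nonsingular matrices, and then invoke the cited lemma of Hartfiel and Loewy; the connecting tissue is the elementary relation between the principal minors of a matrix and those of its adjugate (Jacobi's identity).

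\emph{Removing the shift.} First I would check that for any diagonal matrix $D = \diag(d_1, \dots, d_n)$ we have $A \PME B$ if and only if $A + D \PME B + D$. This is immediate from
\[
\det\bigl((A+D)[S,S]\bigr) = \sum_{T \subseteq S} \det\bigl(A[T,T]\bigr) \prod_{j \in S \setminus T} d_j \qquad (S \subseteq [n]) :
\]
equality of all principal minors of $A$ and $B$ forces the same for $A+D$ and $B+D$, and the converse follows by induction on $\abs{S}$, peeling off for each $S$ the strictly lower-order terms, which already agree by the inductive hypothesis. Since the hypothesis of the lemma makes $M := A+D$ and $N := B+D$ nonsingular, it now suffices to show that for nonsingular $M, N$ one has $M \PME N$ if and only if $M^{\mathrm{adj}} \PME N^{\mathrm{adj}}$.

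\emph{Adjugate and principal minors.} Writing $M^{\mathrm{adj}} = (\det M)\,M^{-1}$ and using Jacobi's identity $\det\bigl(M^{-1}[S,S]\bigr) = \det\bigl(M[\comp{S},\comp{S}]\bigr) / \det M$ gives
\[
\det\bigl(M^{\mathrm{adj}}[S,S]\bigr) = (\det M)^{\abs{S}-1}\,\det\bigl(M[\comp{S},\comp{S}]\bigr) \qquad (S \subseteq [n]).
\]
For the forward direction this suffices: if $M \PME N$, then $S = [n]$ gives $\det M = \det N$, and the displayed identity then equates the corresponding principal minors of $M^{\mathrm{adj}}$ and $N^{\mathrm{adj}}$ of every order, so $M^{\mathrm{adj}} \PME N^{\mathrm{adj}}$. (Here it helps to note that rescaling a matrix by a fixed nonzero scalar multiplies its order-$k$ principal minors by the $k$-th power of that scalar; thus, once $\det M = \det N$, the equivalences $M^{\mathrm{adj}} \PME N^{\mathrm{adj}} \Leftrightarrow M^{-1} \PME N^{-1} \Leftrightarrow M \PME N$ become transparent.)

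\emph{The reverse direction.} For this direction I would pass through the inverse, which is the form of \cite[Lemma~4]{Hartfiell84} and the form in which the argument is symmetric: by Jacobi's identity, $M^{-1} \PME N^{-1}$ is equivalent to $\det\bigl(M[\comp{S},\comp{S}]\bigr)/\det M = \det\bigl(N[\comp{S},\comp{S}]\bigr)/\det N$ for all $S$, the case $S = [n]$ of which gives $\det M = \det N$, whence all corresponding principal minors of $M$ and $N$ agree — so $M \PME N$, and conversely. Passing between the inverse and adjugate forms is then the rescaling remark above, together with the order-$(n-1)$ instance of the key formula (which equates the order-$(n-1)$ principal minors of $M$ and $N$) to control the scalar $\det M$. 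The main obstacle lies exactly here: the order-$n$ adjugate minor by itself yields only $(\det M)^{n-1} = (\det N)^{n-1}$, not $\det M = \det N$, so some care — supplied by the Hartfiel--Loewy lemma — is needed to see that the determinant factors cancel out; routing the whole argument through $M \mapsto M^{-1}$ is the clean way to sidestep this.
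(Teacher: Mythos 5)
Your forward direction is correct, and the reduction by shift removal followed by Jacobi's identity is exactly the right framework. The trouble is the reverse direction, and your own proof diagnoses the problem precisely — $\adj{M} \PME \adj{N}$ yields $(\det M)^{n-1} = (\det N)^{n-1}$, not $\det M = \det N$ — but then waves it away by asserting that ``some care, supplied by the Hartfiel--Loewy lemma, is needed to see that the determinant factors cancel out.'' They do not cancel, and no amount of care closes this gap, because the reverse implication as literally stated is false. Take $A = I_3$, $B = -I_3$, $D = 0$. Both $A+D$ and $B+D$ are nonsingular, and $\adj{A+D} = \adj{B+D} = I_3$, so the adjugates are certainly PME; but $\det(A[\{i\}]) = 1 \neq -1 = \det(B[\{i\}])$, so $A \not\PME B$. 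In your notation, $\det M = 1$, $\det N = -1$, the square of each equals $1$, the order-$1$ diagonal entries of the adjugates all equal $1$ on both sides, and nothing in the list of adjugate principal minors ever sees the sign of $\det M$.

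The clean statement that your Jacobi argument actually proves is the \emph{inverse} version: for nonsingular $M, N$ one has $M \PME N$ if and only if $M^{-1} \PME N^{-1}$, because the $S = [n]$ instance of Jacobi's identity for the inverse produces $1/\det M = 1/\det N$ directly, with no stray power. Passing from $M^{-1}$ to $\adj{M} = (\det M)\,M^{-1}$ introduces an order-dependent scaling of the minors that is invertible only if you already know $\det M = \det N$, which in the reverse direction you do not. So the adjugate ``if and only if'' is strictly stronger than the inverse one, and strictly stronger than what is true. Your ``order-$(n-1)$ instance'' does give agreement of the order-$(n-1)$ principal minors of $M$ and $N$, but, as the example shows, that still permits $\det M = -\det N$.

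For the record, the paper gives no proof of this lemma; it is cited as a consequence of Hartfiel--Loewy. In the places the paper actually uses it, only the forward direction $A \PME B \Rightarrow \adj{A+D} \PME \adj{B+D}$ is load-bearing (the soundness of a positive answer in the algorithm comes instead from \cref{lem:ResultForAplusDimpliesForA}, via cut-transpose equivalence). So the right conclusion for you to reach was not ``route through the inverse and defer to the cited lemma'' but ``the iff as stated overreaches: only the forward direction holds unconditionally, and the correct unconditional iff is $M \PME N \iff M^{-1} \PME N^{-1}$.''
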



\subsection{Reducible and Irreducible matrix}
\begin{definition}[Reducible and Irreducible matrix] 
\label{def:Reducible}
A matrix is called \emph{reducible} if it can be written as a block upper triangular matrix after permuting the rows and the corresponding columns. A matrix that is not reducible is called \emph{irreducible}. 

Equivalently, if we replace the nonzero off-diagonal entries with one and the diagonal entries with zero, then a reducible matrix corresponds to the adjacency matrix of a directed graph having more than one strongly connected component.
\end{definition}
From the above definition, it is easy to see that any matrix $A$ with all nonzero off-diagonal entries is an irreducible matrix. The above definition directly gives us the following observation.
\begin{observation}
\label{ob:irreducible-matrix-and-directed-graph}
Let $A$ be an $n\times n$ matrix over a field $\F$ such that the row and columns of $A$ are indexed by $[n]$. Let $G_A$ be a directed graph defined as follows: the vertex set in $[n]$, and a tuple $(i,j)$ is an edge of $G_A$ if and only if $i\neq j$ and $A[i,j]\neq 0$. Let $I_1, I_2,\ldots, I_s$ be the strongly connected components of $A$. Then, after permuting the rows and the corresponding columns, the matrix $A$ can be made a block upper triangular matrix, and the diagonal blocks $A(I_1), A(I_2),\ldots, A(I_s)$ are irreducible matrices. 
\end{observation}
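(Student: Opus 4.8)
The plan is to reduce everything to the standard fact that the \emph{condensation} of a directed graph is acyclic, followed by a topological sort of the strongly connected components. First I would pass to the directed graph $G_A$ and form its condensation $\widetilde{G}$: the vertices of $\widetilde{G}$ are the strongly connected components $I_1,\dots,I_s$, and for $a\neq b$ there is an edge $I_a\to I_b$ exactly when $G_A$ has an edge from some vertex of $I_a$ to some vertex of $I_b$. The key sub-claim is that $\widetilde{G}$ is acyclic: a directed cycle $I_{a_1}\to I_{a_2}\to\cdots\to I_{a_t}\to I_{a_1}$ in $\widetilde{G}$ would let one walk in $G_A$ from any vertex of $I_{a_1}$ to any vertex of $I_{a_2}$, then on to $I_{a_3}$, and eventually back to $I_{a_1}$, forcing $I_{a_1}\cup\cdots\cup I_{a_t}$ into a single strongly connected component and contradicting the maximality of the $I_j$'s.

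Next I would pick a topological ordering of $\widetilde{G}$ and relabel the components $I_1,\dots,I_s$ so that every edge of $\widetilde{G}$ runs from a smaller index to a larger one. After permuting the rows of $A$ (and the columns by the same permutation) so that the indices of $I_1$ come first, then those of $I_2$, and so on, I claim the result is block upper triangular with diagonal blocks $A(I_1),\dots,A(I_s)$. Indeed, if $a>b$ and some entry $A[i,j]$ with $i\in I_a$, $j\in I_b$ were nonzero, then, since $a\neq b$ forces $i\neq j$, this is an off-diagonal nonzero entry, hence an edge $i\to j$ of $G_A$, hence an edge $I_a\to I_b$ of $\widetilde{G}$ with $a>b$ --- contradicting the topological order. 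So $A[I_a,I_b]=0$ for all $a>b$, which is exactly block upper triangularity.

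Finally, I would check that each diagonal block $A(I_k)$ is irreducible. Its associated directed graph is precisely the subgraph of $G_A$ induced on $I_k$, which is strongly connected by the definition of a strongly connected component (and is a single vertex, trivially strongly connected, when $|I_k|=1$); in particular it has exactly one strongly connected component. By \cref{def:Reducible}, a reducible matrix has an associated graph with more than one strongly connected component, so $A(I_k)$ cannot be reducible, i.e., it is irreducible. I do not expect a genuine obstacle here; the only points requiring care are matching the edge-direction convention to the upper-triangular (rather than lower-triangular) form and correctly invoking the definition of irreducibility for the individual diagonal blocks --- both of which the statement of \cref{def:Reducible} already sets up.
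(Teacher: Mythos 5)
Your proof is correct and is exactly the standard argument the paper silently takes for granted (the paper simply states "The above definition directly gives us the following observation" and offers no proof). The condensation-is-acyclic plus topological-sort argument is the right way to fill in the gap, and your care about the edge-direction convention matching block \emph{upper} triangularity is appropriate.

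One tiny caveat worth noting for yourself: when $|I_k|=1$ the matrix $A(I_k)$ is a $1\times 1$ block, and the paper's Definition~\ref{def:Reducible} should be read as declaring such blocks irreducible (a $1\times 1$ matrix cannot be partitioned into a nontrivial block upper triangular form, and its associated graph has exactly one, trivial, strongly connected component). You already flagged this; it's fine as long as the convention is understood.
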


For two reducible matrices $A$ and $B$, the next lemma helps to reduce the testing of whether $A\PME B$ to multiple instances of testing whether two irreducible matrices have the same corresponding principal minors. The following lemma is a direct consequence of~\cite[Corollary~5.4]{Ahmadieh23}.
\begin{lemma}
\label{lem:redToIr}
Let $A$ and $B$ two $n\times n$ matrices over  a field $\F$. Suppose that after permuting the rows and the corresponding columns, $A$ can be written as a block upper triangular matrix with $s$ diagonal blocks $A_1,A_2,\dots,A_s$ where each $A_i$ is irreducible and the rows and columns of $A_i$ are indexed by set $T_i\subseteq [n]$. Then, $A\PME B$ if and only if the following holds.
\begin{enumerate}
    \item After permuting some rows and the corresponding columns, $B$ can be written as a block upper triangular matrix with $s$ diagonal blocks $B_1,B_2,\dots,B_s$ such that each $B_i$ is irreducible and the rows and columns of $B_i$ are indexed by set $T_i$.
    \item For each $i\in [s]$, $A_i\PME B_i$.
\end{enumerate}
\end{lemma}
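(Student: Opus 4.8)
The plan is to pass to the \emph{principal-minor generating polynomial}
\[
P_A(t_1,\dots,t_n)\;:=\;\det\!\bigl(A+\diag(t_1,\dots,t_n)\bigr)\;=\;\sum_{S\subseteq[n]}\Bigl(\textstyle\prod_{i\in S}t_i\Bigr)\det\!\bigl(A[\comp S,\comp S]\bigr),
\]
so that $A\PME B$ if and only if $P_A=P_B$ in $\F[t_1,\dots,t_n]$ (the map $S\mapsto\comp S$ is a bijection on subsets of $[n]$). The structural fact driving everything is that if, after permuting rows and the corresponding columns, $A$ is block upper triangular with diagonal blocks indexed by $T_1,\dots,T_s$, then for \emph{every} $S\subseteq[n]$ the matrix $A[S,S]$ is again block upper triangular with diagonal blocks $A[S\cap T_i,S\cap T_i]$; hence $\det(A[S,S])=\prod_{i=1}^{s}\det(A[S\cap T_i,S\cap T_i])$, equivalently $P_A=\prod_{i=1}^{s}P_{A_i}$ where $P_{A_i}$ involves only the variables $\{t_j:j\in T_i\}$. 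By Observation~\ref{ob:irreducible-matrix-and-directed-graph} the $T_i$ are exactly the strongly connected components of $G_A$ and are uniquely determined by $A$. The ``if'' direction is then immediate: using the decomposition of $B$ into irreducible blocks $B_i=B[T_i,T_i]$ indexed by the same sets $T_i$ and the hypothesis $A_i\PME B_i$, we get $\det(A[S,S])=\prod_i\det(A[S\cap T_i,S\cap T_i])=\prod_i\det(B[S\cap T_i,S\cap T_i])=\det(B[S,S])$ for every $S$, since $S\cap T_i\subseteq T_i$.

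For the ``only if'' direction the key observation is that the strongly connected component partition of a matrix $M$ is recoverable from $P_M$ alone: it equals the \emph{finest separating partition} of $P_M$, i.e.\ the unique finest partition $\pi$ of $[n]$ for which $P_M$ factors into a product of polynomials over the blocks of $\pi$. (Such a finest $\pi$ exists because the common refinement of two separating partitions is again separating, by unique factorization in $\F[t_1,\dots,t_n]$.) The factorization $P_M=\prod_i P_{A_i}$ above already shows the SCC partition is separating, so it remains only to establish the following \textbf{Claim}: if $M$ is irreducible, then $P_M$ does not split as $Q\cdot R$ with $Q\in\F[t_j:j\in U]$, $R\in\F[t_j:j\in V]$ for any partition $[n]=U\sqcup V$ into nonempty parts --- equivalently, $P_M$ is irreducible as a polynomial. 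Granting the Claim, $P_A=P_B$ forces $A$ and $B$ to have the same SCC partition $\{T_1,\dots,T_s\}$; condition~1 then holds with $B_i:=B[T_i,T_i]$ by Observation~\ref{ob:irreducible-matrix-and-directed-graph}, and condition~2 is immediate, since for $S\subseteq T_i$ we have $\det(A_i[S,S])=\det(A[S,S])=\det(B[S,S])=\det(B_i[S,S])$.

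I expect the Claim to be the main obstacle. To prove it, suppose $P_M=Q\cdot R$ as above, normalized so that the coefficients of $\prod_{j\in U}t_j$ in $Q$ and of $\prod_{j\in V}t_j$ in $R$ both equal $1$ (possible since the coefficient of $\prod_{j\in[n]}t_j$ in $P_M$ is $\det(M[\emptyset,\emptyset])=1$). For $W\subseteq[n]$ the coefficient of $\prod_{j\notin W}t_j$ in $P_M$ is $\det(M[W,W])$, whereas in $Q\cdot R$ it equals $\det(M[W\cap U,W\cap U])\cdot\det(M[W\cap V,W\cap V])$ (this is obtained by first comparing, for $W'\subseteq U$ and for $W'\subseteq V$, the coefficients of $\prod_{j\notin W'}t_j$). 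Thus every principal minor of $M$ would factor along the cut $(U,V)$: $\det(M[W,W])=\det(M[W\cap U,W\cap U])\det(M[W\cap V,W\cap V])$ for all $W$. Taking $W=\{i,j\}$ with $i\in U$, $j\in V$ gives $M[i,j]\,M[j,i]=0$; taking $W$ to be the vertex set of a directed cycle of $G_M$ that crosses the cut, the left side picks up the (cross-cut) monomial contributed by that cycle while the right side involves no cross-cut entry, so no such cycle can exist --- contradicting that $G_M$ is strongly connected with $U,V$ nonempty. The delicate point is ruling out accidental cancellation among the monomials-in-the-entries contributing to a single $\det(M[W,W])$; I would handle this by taking $W$ to be the vertex set of an inclusion-minimal cut-crossing cycle, or by induction on $n$ via a Schur complement / vertex deletion, and this is where the real work lies. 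Alternatively --- and this may be the route the write-up prefers --- the whole ``only if'' direction, and hence the lemma, follows directly from~\cite[Corollary~5.4]{Ahmadieh23}.
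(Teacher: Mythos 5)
The paper does not prove this lemma at all: it simply asserts (``The following lemma is a direct consequence of \cite[Corollary~5.4]{Ahmadieh23}'') and moves on, which is the fallback you correctly note in your final sentence. Your proposal is therefore a genuine alternative, a self-contained proof, and its overall architecture is the right one: the bijection $A\PME B\Longleftrightarrow P_A=P_B$, the factorization $P_A=\prod_i P_{A_i}$ over the SCC blocks, the observation that the common refinement of two separating partitions is separating (which you justify correctly via unique factorization in $\F[t_1,\dots,t_n]$), and the reduction of everything to the Claim that an irreducible matrix $M$ has $P_M$ with no nontrivial factorization over disjoint variable sets. The ``if'' direction as you state it is correct and complete, and the derivation of conditions~1 and~2 from equality of the SCC partitions is correct.

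The gap is precisely where you flag it: your argument for the Claim is not a proof. You correctly extract $M[i,j]M[j,i]=0$ for $i\in U$, $j\in V$ from $W=\{i,j\}$, but the next step --- taking $W$ to be the vertex set of a cut-crossing cycle and asserting that ``the left side picks up the cross-cut monomial'' --- is an argument about monomials in the entries of $M$, while $\det(M[W,W])$ is a fixed scalar. Even if you take $W$ to be a \emph{shortest} cut-crossing cycle, so that every cut-crossing permutation of $W$ with nonzero weight is forced to be a Hamiltonian cycle of $G_M[W]$, the excess $\det(M[W])-\det(M[W\cap U])\det(M[W\cap V])$ is a \emph{sum} of weights of such Hamiltonian cycles (all with the same sign), and nothing in the argument rules out cancellation among them. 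Indeed, with $W=\{i,j,k\}$, $i,j\in U$, $k\in V$, the constraints force $M_{ij}M_{jk}M_{ki}+M_{ik}M_{ji}M_{kj}=0$, which shows that opposite-orientation triangle weights can and do conspire to cancel; this is exactly the phenomenon you would need to control globally. So the Claim, while true, is not established by the sketch, and ``I would handle this by \dots, and this is where the real work lies'' is an accurate self-assessment rather than a proof. To close the gap you would need either a careful induction with Schur complements as you suggest, or to invoke the known irreducibility of $\det(M+\diag(t))$ for irreducible $M$ from the principal-minor-assignment literature --- or simply cite \cite[Corollary~5.4]{Ahmadieh23}, which is what the paper does.
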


\subsection{Cut of a matrix}
\begin{definition}[Cut of a matrix]
\label{def:cut}
Let $A$ be an $n\times n$  matrix over a field $\F$ such that $n\geq 4$. A subset $X\subset [n]$ is called a \emph{cut} in $A$ if $2\leq |X| \leq n-2$ and the rank of the submatrices $A[X, \comp{X}] \text{ and } A[\comp{X}, X]$ are at most one.

In particular, if $A$ is an irreducible matrix and $X$ is cut in $A$, then $\rank(A[X,\comp{X}])=\rank(A[\comp{X},X])=1$.
\end{definition}
By definition, if $X$ is a cut of a matrix $A$ then so is $\comp{X}$.
For a matrix $A$, a cut $X$ in $A$ is called a \emph{minimal cut} if there exists no other cut $X'$ in $A$ such that $X'\subseteq X$. Note that any cut of size two is always a minimal cut.

Next, we show that the set of cuts of a matrix remains the same if we take its adjugate after adding an appropriate diagonal matrix. 
\begin{lemma}
\label{lem:AdjugateCut}
Let $A$ be an $n\times n$ matrix over a field $\F$. Let $D$ be an $n\times n$ diagonal matrix over $\F$ such that $A+D$ is non-singular. Then, $A$ and $\adj{A+D}$ have the same set of cuts. 
\end{lemma}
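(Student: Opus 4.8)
The plan is to work directly with the cofactor formula for $\adj{A+D}$ and translate the rank-one conditions on off-diagonal blocks into statements about certain submatrices of $A+D$. Write $M := A+D$, which is non-singular; note that $M$ and $A$ differ only on the diagonal, so for any $S,T \subseteq [n]$ with $S \cap T = \emptyset$ we have $M[S,T] = A[S,T]$. In particular, for a candidate cut $X$ (which always satisfies $X \cap \comp X = \emptyset$), the blocks $A[X,\comp X], A[\comp X, X]$ coincide with $M[X,\comp X], M[\comp X,X]$, so it suffices to prove that $M$ and $\adj{M}$ have the same set of cuts, for any non-singular $M$.

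The key step is the following identity. Since $M$ is invertible, $\adj M = \det(M)\, M^{-1}$, so up to the nonzero scalar $\det M$ (which changes no rank), $\adj M$ has the same block structure as $M^{-1}$. Now I would invoke the standard Schur-complement / block-inverse formula: if we partition indices as $X \sqcup \comp X$ and write $M = \begin{pmatrix} P & Q \\ R & T \end{pmatrix}$ in this block form (with $P = M[X,X]$, $Q = M[X,\comp X]$, etc.), then the off-diagonal blocks of $M^{-1}$ are $-(P - QT^{-1}R)^{-1} Q T^{-1}$ and $-T^{-1} R (P - QT^{-1}R)^{-1}$ — valid when the relevant Schur complement is invertible, which it is for all but those $X$ for which some principal submatrix is singular; one can handle the degenerate cases by a limiting/genericity argument or by using the dual formula with the other Schur complement $T - RP^{-1}Q$, at least one of which is always available when $M$ is invertible. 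The crucial observation is that $(M^{-1})[X,\comp X] = -(\text{Schur complement})^{-1}\, Q\, T^{-1}$ is obtained from $Q = M[X,\comp X]$ by multiplying on the left and right by invertible matrices; hence $\rank\big((M^{-1})[X,\comp X]\big) = \rank\big(M[X,\comp X]\big)$, and symmetrically for the lower-left block. Therefore $X$ is a cut of $M$ if and only if it is a cut of $M^{-1}$, hence of $\adj M$, which (after undoing the reduction $M \leftrightarrow A$) gives the claim.

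The main obstacle I expect is the invertibility caveat on the Schur complement: the clean block-inverse formula requires $T = M[\comp X, \comp X]$ (or $P = M[X,X]$) to be non-singular, and for a matrix with zeros on the diagonal this can genuinely fail. I would resolve this by one of two routes: (i) observe that the entries of $\adj{A+D}$ are polynomials in the diagonal entries of $D$, and it is enough to prove the rank equality for $D$ in a Zariski-dense subset — e.g. replacing $D$'s entries by indeterminates $z_i$ makes both $M[X,X]$ and $M[\comp X,\comp X]$ generically invertible, and ranks of polynomial matrices only drop on closed sets, so the rank-one upper bound transfers back; or (ii) argue rank equality of $M[X,\comp X]$ and $(\adj M)[X,\comp X]$ purely via the cofactor formula, expanding a $2\times 2$ minor of $(\adj M)[X,\comp X]$ by the Desnanot–Jacobi (Dodgson condensation) identity to show it is a multiple of a $2\times 2$ minor of $M[X,\comp X]$, and vice versa using $\adj{\adj M} = (\det M)^{n-2} M$. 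Route (i) is shorter to write; route (ii) is the one that most directly mirrors the "determinant as a sum over cycle covers" viewpoint used elsewhere in the paper. Either way, the content is just that conjugating an off-diagonal block by invertible matrices preserves its rank.
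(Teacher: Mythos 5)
Your approach is correct in outline but genuinely different from the paper's. The paper proves~\cref{cl:submatrix-in-adj}, an explicit rank-one factorization of the off-diagonal blocks of $\adj{M}$ obtained by a generalized Laplace expansion; this gives immediately that any cut of $M$ is a cut of $\adj{M}$, and the converse is obtained by applying the same claim to $\adj{M}$ together with $\adj{\adj{M}}=\det(M)^{n-2}M$. Your route instead observes that up to the scalar $\det M$ the adjugate is $M^{-1}$, and reads the rank of $(M^{-1})[X,\comp X]$ off the block-inverse formula, handling singular diagonal blocks by genericity or by Jacobi's identity. Both are sound; the paper's version earns its keep because the explicit rank-one vectors $p_A,q_A,u_A,v_A$ from~\cref{cl:submatrix-in-adj} are reused verbatim in the proof of~\cref{lem:AdjugateTwistEquality}, whereas your argument is shorter if this lemma were the only goal.

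Two things to tighten. In route (i), you write that ``it is enough to prove the rank equality for $D$ in a Zariski-dense subset,'' but rank is only upper semicontinuous, so from the generic picture you can only conclude $\rank\bigl((\adj{A+D})[X,\comp X]\bigr)\le \rank\bigl(A[X,\comp X]\bigr)$ after specializing. That gives one inclusion (cut of $A$ $\Rightarrow$ cut of $\adj{A+D}$); for the converse you still need to invoke $\adj{\adj M}=\det(M)^{n-2}M$, which you mention only under route (ii). You also want the genericity statement phrased for an arbitrary invertible $M$ via $M+tI$ over $\F(t)$, rather than only for $M=A+Z$ with $Z$ diagonal indeterminates, since the second application is to $M=\adj{A+D}$, which is not of that restricted form (and the diagonal shift trick works anyway because it leaves all off-diagonal blocks unchanged). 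In route (ii), a $2\times 2$ minor of $\adj M$ is, by Jacobi's theorem on minors of the adjugate, $\pm\det(M)$ times a \emph{complementary} $(n-2)\times(n-2)$ minor of $M$, not a $2\times 2$ minor of $M$; one must then argue that this $(n-2)\times(n-2)$ minor vanishes (e.g.\ by a Laplace expansion along the rows indexed by $X$, where the pigeonhole forces at least two of the chosen columns to land in the rank-one block $M[X,\comp X]$). With those corrections the proposal closes cleanly.
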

For proof, see~\cref{appendix:missing-proofs-from-prprelim}.

\subsection{Diagonal similarity}
\label{sec:diag-equiv}

Suppose that $A$ and $B$ are two $n\times n$ matrices over a field $\F$. We say that $A$ is \emph{diagonally similar} to $B$, denoted by $A\DE B$, if there exists an $n\times n$ invertible diagonal matrix $D$ over $\F$ such that $B=DAD^{-1}$. We say that $A$ and $B$ are \emph{diagonally equivalent}, denoted by $A\dge B$, if $A\DE B$ or $A\DE B^T$.

In the following claim, we describe how to efficiently check whether two matrices are diagonally similar or not.
\begin{claim}
\label{cl:DE-in-poly-time}
Given two $n\times n$ matrices $A$ and $B$ over $\F$, in polynomial time, we can decide whether $A\DE B$.
\end{claim}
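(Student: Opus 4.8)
The plan is to rewrite the condition $B = DAD^{-1}$ (with $D = \diag(d_1,\dots,d_n)$ invertible) as the equivalent condition $DA = BD$ together with $d_i \neq 0$ for all $i$, and then to decide feasibility of the resulting system of multiplicative constraints on $d_1,\dots,d_n$ by a spanning-tree propagation.

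Writing $DA = BD$ entrywise gives $d_i A_{ij} = B_{ij} d_j$ for all $i,j \in [n]$. For $i=j$ this reads $A_{ii} = B_{ii}$ (using $d_i\neq 0$), which we check first. For $i \neq j$: if exactly one of $A_{ij}, B_{ij}$ is zero the system is infeasible (one side is $0$ and the other nonzero, as all $d$'s are nonzero); if both are zero there is no constraint; and if both are nonzero the constraint is the ratio equation $d_i/d_j = B_{ij}/A_{ij} =: r_{ij} \in \F \setminus \{0\}$. So, after checking that the diagonals of $A$ and $B$ agree and that their off-diagonal zero patterns coincide (returning ``no'' if not), it remains to decide feasibility of the family of ratio equations $\{\, d_i/d_j = r_{ij} \,:\, i\neq j,\ A_{ij}\neq 0 \,\}$ over $\F\setminus\{0\}$.

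To do this, form the undirected graph $G$ on vertex set $[n]$ with an edge $\{i,j\}$ whenever $A_{ij}\neq 0$ or $A_{ji}\neq 0$. For each connected component of $G$, pick a spanning tree, fix $d$ at an arbitrary root of the component to be $1$, and propagate the ratios along the tree edges; this assigns to every vertex a value $d_i \in \F\setminus\{0\}$ (and for an isolated vertex we set $d_i := 1$). We then verify directly that $d_i A_{ij} = B_{ij} d_j$ holds for every ordered pair $(i,j)$ — which automatically covers the non-tree edges and both orientations of every edge — and output ``yes'', with certificate $D = \diag(d_1,\dots,d_n)$, precisely when all checks succeed. Correctness holds because any $D$ realizing $B = DAD^{-1}$ must satisfy all the ratio equations, while on each component the tree equations determine $d$ up to a single overall nonzero scalar that may be chosen freely; hence feasibility of the full system is equivalent to consistency of the propagated values on the remaining edges, and when the checks pass the constructed $D$ satisfies $DA = BD$ with $D$ invertible. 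The procedure uses $O(n^2)$ field operations and comparisons — reading $A$ and $B$, building $G$, one BFS/DFS per component, the propagation, and the final verification — so it runs in polynomial time. The only point needing care is the bookkeeping (matching zero patterns, handling disconnected $G$ and isolated vertices, and checking both orientations of each edge); since we work over an arbitrary field we cannot linearize via logarithms, but the spanning-tree argument is purely combinatorial and works over any field, so there is no genuine obstacle.
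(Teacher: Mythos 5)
Your proposal is correct and follows essentially the same spanning-tree propagation approach as the paper's proof sketch: both reduce to ratio equations $d_i/d_j = B[i,j]/A[i,j]$ along edges of a graph determined by the nonzero pattern, fix a root per component, propagate, and verify the remaining constraints. Your version is somewhat more explicit about the bookkeeping (matching diagonals and zero patterns), which the paper's sketch leaves implicit, but the underlying argument is the same.
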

\begin{proof}[Proof Sketch]
%
%
Observe that if $A \DE B$ then we must have 
an invertible diagonal matrix $D$ such that 
$B[i,j]/A[i,j] = D[i] / D[j]$, for any $i\neq j$ with $A[i,j]\neq 0$.
%
%
Consider a weighted directed graph $G$ on $n$ vertices such that
$(i,j)$ is an edge for $i\neq j$ if and only if 
$ A[i,j]\neq 0$ or $A[j,i] \neq 0 $. 
Let us define the weight of an edge $(i,j)$ as 
$w(i,j) = B[j,i]/A[j,i]$ or $A[i,j]/B[i,j] $ whichever is defined. 
If both are defined, they must be equal, otherwise $A$ and $B$
cannot be diagonally similar.
Observe that for any path $(i_0, i_1, \dots, i_k)$ in graph $G$,
it must be that
\[ D[i_k]/D[i_0] = w(i_0, i_1) w(i_1, i_2) \cdots w(i_{k-1}, i_k) \]
Moreover, any diagonal matrix satisfying the above equation for all paths in $G$ will give us the desired diagonal matrix $D$.
So, we construct $D$ for given matrices $A$ and $B$ as follows.
For any connected component in $G$, 
pick an arbitrary vertex $i$ from the component and set $D[i]=1$. 
For any other vertex $j$ in that component, find a path
$(i=i_0,i_1, i_2, \dots, i_k= j)$ and set 
$D[j]= w(i_0, i_1) w(i_1, i_2) \cdots w(i_{k-1}, i_k) .$
Repeating this for every component in $G$ will give us matrix $D$.
Finally we should check that $B[i,j]/A[i,j] = D[i] / D[j]$, for every $i\neq j$ with $A[i,j]\neq 0$.

\end{proof}

One can observe that if $A\dge B$, then $A\PME B$.
Next, we consider the converse direction. Hartfiel and Loewy~\cite[Theorem~3]{Hartfiell84} showed that when $n=2$ or $3$, and $A$ is an irreducible matrix, $A\PME B$ implies that $A\dge B$ . 
Later, Lowey~\cite[Theorem~1]{Loewy86} showed that if $A$ is an irreducible matrix and has no cut, then $A\PME B$ implies $A\dge B$. Therefore, by combining them, we have the following lemma:

\begin{lemma} 
\label{lem:noCut}
Let $A$ and $B$ two $n\times n$ matrices over a field $\F$ such that $A$ is irreducible and $A\PME B$. Then, the following holds:
\begin{enumerate}
\item if $n=2$ or $3$, then $A\dge B$.
\item if $n\geq 4$ and $A$ has no cut, then $A\dge B$.
\end{enumerate}
\end{lemma}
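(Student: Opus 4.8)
The plan is to obtain this lemma by directly combining the two cited results, since it is essentially a repackaging of them in the notation used here. For item~1, I would invoke Hartfiel and Loewy~\cite[Theorem~3]{Hartfiell84}: they show that whenever $n\in\{2,3\}$ and $A$ is irreducible, equality of all corresponding principal minors of $A$ and $B$ forces $A$ to be diagonally similar to $B$ or to $B^T$. Before citing it, the only thing to verify is that the hypotheses and conclusions translate correctly: their ``equal principal minors of all orders'' is exactly $A\PME B$, and ``diagonally similar up to transpose'' is precisely the definition of $\dge$ from \cref{sec:diag-equiv}. With that dictionary in place, item~1 is immediate.

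For item~2, I would invoke Loewy~\cite[Theorem~1]{Loewy86}, which states that if $A$ is irreducible and has no cut, then $A\PME B$ implies $A\dge B$. The point worth checking here is that the notion of ``cut'' in \cref{def:cut} — a proper subset $X$ with $2\le|X|\le n-2$ for which both off-diagonal blocks $A[X,\comp X]$ and $A[\comp X,X]$ have rank at most one — coincides with Loewy's notion, and that his terminology for irreducibility matches \cref{def:Reducible}. Once these definitions are aligned, the implication transfers verbatim, and combining it with item~1 gives the stated lemma. (Note that for $n\le 3$ the cut condition is vacuous, since no $X$ satisfies $2\le|X|\le n-2$; indeed \cref{def:cut} only speaks of $n\ge4$, so item~1 is fully consistent with the ``no cut'' framework of item~2.)

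There is essentially no genuine obstacle in proving this lemma: the entire content is bookkeeping — matching definitions across the two source papers and checking that the conclusions they state are the ones we need. If one instead wanted a self-contained argument, the substantive part would be reproving Loewy's theorem, whose proof runs a rank/dimension analysis exploiting that an irreducible matrix with no cut is rigid enough that equality of all principal minors already pins down every entry up to a common diagonal conjugation (or its transpose). That argument is out of scope here, so the intended proof is simply to cite \cite[Theorem~3]{Hartfiell84} for item~1 and \cite[Theorem~1]{Loewy86} for item~2.
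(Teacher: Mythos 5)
Your proposal matches the paper's own argument exactly: the lemma is stated as a direct consequence of Hartfiel--Loewy~\cite[Theorem~3]{Hartfiell84} for $n\in\{2,3\}$ and Loewy~\cite[Theorem~1]{Loewy86} for the cut-free case, with no further work beyond aligning definitions. There is nothing to add.
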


The next lemma shows that the diagonal similarity relation also holds for the adjugate of the matrices obtained by adding an appropriate diagonal matrix. It directly follows from~\cite[Lemma~4]{Hartfiell84}.
\begin{lemma}
\label{lem:DE-through-adjoint}
Let $A$ and $B$ be two $n\times n$ matrices over $\F$. Let $D$ be an $n\times n$ diagonal matrix such that both $A+D$ and $B+D$ are invertible. Then, $A\DE B$ if and only if $\adj{A+D}\DE \adj{B+D}$.
\end{lemma}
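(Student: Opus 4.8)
The plan is to deduce the statement from the elementary algebra of the adjugate together with the single observation that diagonal matrices commute with one another; this is the diagonal-similarity analogue of \cref{lem:inversibleAdjugate}, and it is essentially the content of \cite[Lemma~4]{Hartfiell84}, so it ``directly follows'' from it. Throughout I would write $M := A+D$ and $N := B+D$, which are invertible by hypothesis, and use $\adj{M} = \det(M)\,M^{-1}$, $\adj{N} = \det(N)\,N^{-1}$.

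First I would handle the forward direction. Assume $A\DE B$, witnessed by an invertible diagonal $E$ with $B = EAE^{-1}$. Since $E$ and $D$ are both diagonal they commute, so
\[
N \;=\; B+D \;=\; EAE^{-1} + EDE^{-1} \;=\; E\,(A+D)\,E^{-1} \;=\; EME^{-1}.
\]
Now apply the adjugate; using $\adj{XYZ} = \adj{Z}\,\adj{Y}\,\adj{X}$ together with $\adj{E} = \det(E)\,E^{-1}$ and $\adj{E^{-1}} = \det(E)^{-1}E$, the scalar factors $\det(E)^{\pm1}$ cancel and we get $\adj{N} = E\,\adj{M}\,E^{-1}$, that is, $\adj{A+D}\DE\adj{B+D}$, with the very same $E$.

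For the converse I would start from $\adj{N} = E\,\adj{M}\,E^{-1}$ for an invertible diagonal $E$. Substituting $\adj{M} = \det(M)M^{-1}$ and $\adj{N} = \det(N)N^{-1}$ and then inverting both sides of the resulting identity gives
\[
N \;=\; c\,EME^{-1}, \qquad c := \frac{\det(N)}{\det(M)} .
\]
Comparing the $(i,i)$ entries, and using that $E$ is diagonal so $(EME^{-1})_{ii} = M_{ii}$, we get $B_{ii} + D_{ii} = c\,(A_{ii} + D_{ii})$ for every $i$. For two distinct indices this shows all the ratios $(B_{ii}+D_{ii})/(A_{ii}+D_{ii})$ coincide; since the diagonal entries of $D$ are algebraically independent over $\F$ (this is the role $D$ plays in every application of the lemma, namely the generic shift $Z$), a rational function of $D_{ii}$ can equal one of $D_{jj}$ only if both are constants in $\F$, so $c\in\F$, and matching the coefficient of $D_{ii}$ then forces $c = 1$. (Alternatively, taking $\det$ of the displayed identity gives $c^{\,n-1}=1$, after which the same coefficient comparison kills $c$.) Hence $N = EME^{-1}$, which unwinds to $B = EAE^{-1}$, i.e.\ $A\DE B$.

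The one genuinely delicate step — and the main obstacle — is pinning the scalar $c$ down to $1$ in the converse: the adjugate relation by itself only tells us that $B+D$ is a root-of-unity multiple of a diagonal conjugate of $A+D$, and it is precisely the genericity (algebraic independence) of the entries of $D$ that excludes the nontrivial roots of unity. Everything else is a routine application of $\adj{PQ} = \adj{Q}\adj{P}$ and $\adj{P} = \det(P)P^{-1}$.
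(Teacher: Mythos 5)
Your forward direction is clean and correct. The gap is in the converse. After arriving at $N = c\,EME^{-1}$ with $c = \det(N)/\det(M)$ and $c^{n-1}=1$, you pin $c=1$ by invoking that the diagonal entries of $D$ are algebraically independent over $\F$. That is not a hypothesis of the lemma: $D$ is an arbitrary \emph{fixed} diagonal matrix over $\F$ making $A+D$ and $B+D$ invertible, so there are no formal coefficients in the $D_{ii}$ to compare, and your parenthetical alternative inherits exactly the same difficulty. You have in effect proved a different lemma with a stronger hypothesis on $D$.

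This matters because the converse is actually false at the stated generality (over a field of characteristic $\neq 2$, say). For $n=3$ take $A$ the cyclic permutation matrix ($A_{12}=A_{23}=A_{31}=1$, zeros elsewhere), $D=0$, and $B=-A$. Both $A$ and $B$ are invertible, and $\adj{B+D}=\adj{-A}=(-1)^{n-1}\adj{A}=\adj{A+D}$, so $\adj{A+D}\DE\adj{B+D}$ with $E=I$; yet $A\not\DE B$ (diagonal conjugation preserves the cyclic product $A_{12}A_{23}A_{31}$, which is $1$ for $A$ and $-1$ for $B$) and $A\not\DE B^T$ (since $A_{21}=0\neq B^T_{21}$). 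Here $c=\det(B)/\det(A)=-1$ is a nontrivial $(n-1)$st root of unity — precisely the case your argument cannot exclude for a concrete $D$. The repair that matches how the lemma is actually invoked in the paper is to use the additional fact $A\PME B$, which is always available in those applications: $A\PME B$ forces $\det(A+D)=\det(B+D)$, because $\det(M+D)=\sum_{S\subseteq[n]}\prod_{i\in S}D_{ii}\cdot\det(M[\comp{S}])$ depends only on the principal minors of $M$; hence $c=1$ with no genericity of $D$, and your computation $N=EME^{-1}\Rightarrow B+D=E(A+D)E^{-1}\Rightarrow B=EAE^{-1}$ (using $EDE^{-1}=D$) then closes the converse. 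As written — a plain ``iff'' with no restriction on $D$ beyond invertibility of the shifts — the lemma should really carry the hypothesis $A\PME B$, and your proof should use that rather than genericity.
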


\subsection{Cut-transpose operation}
\label{subsec:twist-operation}
In the previous section, we have seen that under diagonal similarity, the values of the principal minors of a matrix remain unchanged.
Now, we describe another operation under which also the values of the principal minors remain the same. 
This operation was defined by Ahmadieh~\cite[Lemma~4.5]{Ahmadieh23},
and we refer to it as cut-transpose.

\begin{definition}[Cut-transpose operation]
\label{def:twist-operation}
Let $A$ be an $n\times n$ irreducible matrix represented as follows, and $X\subseteq [n]$ be a cut of $A$. Let  $q, u\in\F^{|\comp{X}|}$ such that $q$ is the first non-zero row of $A[X,\comp{X}]$, $u$ is the first non-zero column of $A[\comp{X},X]$. Let $p,v\in\F^{|X|}$  such that $A[X,\comp{X}]=p\cdot q^T$ and $A[\comp{X},X]=u\cdot v^T$. \[ 
A=\begin{pmatrix}
A[X]        & p\cdot q^T\\
&\\
u\cdot v^T  & A[\comp{X}]
\end{pmatrix}.
\] 
 Then, the \emph{cut-transpose operation on $A$ with respect to $X$}, denoted by $\tw(A,X)$, transforms $A$ to the following matrix:
\[
\tw(A,X) =
\begin{pmatrix}
A[X]          & p\cdot u^T\\
&\\
q\cdot v^T  & A[\comp{X}]^T
\end{pmatrix}.
 \] 

\end{definition}
\begin{remark}
\label{remark:twist}
Note that in the above definition, we take one particular rank-one decomposition for submatrices $A[X,\comp{X}]$ and $A[\comp{X},X]$.
For every nonzero $\alpha, \beta\in \F$, the rank-one submatrices $A[X, \comp{X}]$ and $A[\comp{X}, X]$ are  equal to $(\alpha p)\cdot (q/\alpha)^T$ and $(\beta u) \cdot (v/\beta)^T$, respectively. Depending on what rank one decomposition we choose, we can get a different matrix after applying this operation. However, all these obtained matrices are \emph{diagonally similar} to each other. Also, one advantage of choosing the above decomposition is that $\tw(\tw(A,X),X)=A.$
\end{remark}


For a $k\times \ell$ matrix $M$ with $\rank(M)\leq 1$, in polynomial time, we can find $p\in\F^k$ and $q\in\F^\ell$ such $M=p\cdot q^T$. Hence, we can find the cut-transpose of a matrix with respect to a given cut in polynomial time.

Now, we mention some properties of the cut-transpose operation. First, we show that under cut-transpose operation, the values of the principal minors of a matrix remain the same.
\begin{lemma} 
\label{lem:PMEwithTwist}
Let $A$ be an $n\times n$ irreducible matrix over a field $\F$ with a cut $X\subset [n]$. Then, $A\PME \tw(A,X)$.
\end{lemma}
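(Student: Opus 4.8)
The plan is to prove that the cut-transpose operation preserves all principal minors by a direct block-matrix computation together with a cycle-cover (or Schur-complement-style) argument. The key structural fact is that for a cut $X$ of an irreducible matrix $A$, the off-diagonal blocks $A[X,\comp X]=p\cdot q^T$ and $A[\comp X,X]=u\cdot v^T$ have rank one, and the cut-transpose $\tw(A,X)$ merely swaps $q$ and $u$ (and transposes the bottom-right block). First I would observe that it suffices to show $\det(A[S])=\det(\tw(A,X)[S])$ for every $S\subseteq[n]$; writing $S_1:=S\cap X$ and $S_2:=S\cap\comp X$, the principal submatrix $A[S]$ is itself a block matrix with off-diagonal blocks $p_{S_1}q_{S_2}^T$ and $u_{S_2}v_{S_1}^T$ (restrictions of $p,q,u,v$), and $\tw(A,X)[S]$ has the same diagonal blocks $A[S_1],A[S_2]^T$ wait — note $\tw(A,X)[S_2]=A[\comp X]^T[S_2]=(A[S_2])^T$, which already has the same determinant as $A[S_2]$ — with off-diagonal blocks $p_{S_1}u_{S_2}^T$ and $q_{S_2}v_{S_1}^T$. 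So the whole claim reduces to a single identity about determinants of $2\times 2$ block matrices whose off-diagonal blocks are rank one.

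Next I would establish that identity. For a block matrix $\begin{pmatrix}M & ab^T\\ cd^T & N\end{pmatrix}$ with $M$ an $r\times r$ matrix, $N$ an $s\times s$ matrix, and $a,d\in\F^r$, $b,c\in\F^s$ column vectors, the determinant expands (e.g.\ via the Schur complement when $M$ is invertible, then extend by continuity/genericity over the field, or directly via a cycle-cover expansion of $\det$) as a polynomial combination of $\det(M)$, $\det(N)$, and the ``bordered'' minors involving $a,d$ against $M$ and $b,c$ against $N$; concretely one gets something of the shape
\[
\det\begin{pmatrix}M & ab^T\\ cd^T & N\end{pmatrix}
=\det(M)\det(N) - \big(d^T\adj(M)a\big)\big(b^T\adj(N)c\big),
\]
valid because the only way to involve the off-diagonal blocks in a nonzero cycle cover is to pick exactly one entry from each rank-one block (picking two from the same rank-one block forces a repeated row/column factor and cancels). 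The right-hand side is manifestly symmetric under the swap $(a,d)\leftrightarrow$ keeping $M$ fixed and simultaneously replacing $(b,c,N)$ by $(c^{?}\ldots)$ — more precisely, applying the formula to $\tw(A,X)[S]$ gives $\det(A[S_1])\det(A[S_2]^T) - \big(u_{S_2}^T\adj(A[S_1])p_{S_1}\big)\big(v_{S_1}^T\adj(A[S_2]^T)q_{S_2}\big)$; since $\adj(A[S_2]^T)=\adj(A[S_2])^T$ and $v_{S_1}^T\adj(A[S_2])^Tq_{S_2}=q_{S_2}^T\adj(A[S_2])v_{S_1}$, this equals exactly the expression obtained for $\det(A[S])$, because in $A[S]$ the scalar factors are $d^T\adj(M)a=v_{S_1}^T\adj(A[S_1])p_{S_1}$ and $b^T\adj(N)c=q_{S_2}^T\adj(A[S_2])u_{S_2}$, and these two products are the same two numbers multiplied in the same way after the swap. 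Hence $\det(A[S])=\det(\tw(A,X)[S])$ for all $S$, proving $A\PME\tw(A,X)$.

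I would organize the write-up as: (1) a lemma stating the rank-one-bordered block determinant identity above, proved by a short cycle-cover argument (a nonzero term of $\det$ corresponds to a permutation / disjoint union of cycles; any cycle using the top-right block must also use the bottom-left block, and using two entries from $ab^T$ contributes a factor $b_i b_j$ times a common vector, which cancels in the alternating sum — so each nonzero cycle cover uses at most one entry from each off-diagonal block); (2) apply it to both $A[S]$ and $\tw(A,X)[S]$ and match terms using $\adj(N^T)=\adj(N)^T$ and the invariance of $\det$ under transpose; (3) conclude. The main obstacle I expect is (1): getting the bordered-block determinant identity stated cleanly and proving it over an arbitrary field without assuming invertibility of $M$ or $N$ — the cleanest route is the cycle-cover combinatorial expansion of the determinant rather than Schur complements, since it needs no genericity argument and the cancellation of ``two entries from the same rank-one block'' terms is transparent there. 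Everything after that is bookkeeping with the vectors $p,q,u,v$ restricted to $S_1,S_2$ and the elementary identities for adjugates and transposes.
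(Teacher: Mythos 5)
Your proof is correct and takes essentially the same approach as the paper: reduce to the principal submatrix $A[S]$, which has a $2\times 2$ block structure with rank-one off-diagonal blocks, and show by a determinant expansion that the cross term is invariant under cut-transpose because transposing a block preserves its determinant and its adjugate-bilinear forms. The paper arrives at your bordered-block identity $\det\begin{pmatrix}M & ab^T\\ cd^T & N\end{pmatrix}=\det(M)\det(N)-\big(d^T\adj{M}\,a\big)\big(b^T\adj{N}\,c\big)$ implicitly, via the Generalized Laplace expansion plus a bordered matrix $P$ expanded by first row versus first column, so the content is identical; your closed-form statement is a cleaner packaging of the same computation (modulo the transcription slip where you wrote $u_{S_2}^T\adj{A[S_1]}\,p_{S_1}$ for what should be $v_{S_1}^T\adj{A[S_1]}\,p_{S_1}$).
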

For proof, see~\cref{appendix:missing-proofs-from-prprelim}. Next, we show that the cut-transpose operation and the adjoint operation commute with each other up to diagonal similarity.

\begin{lemma} 
\label{lem:AdjugateTwistEquality}
Let $A$ be an $n\times n$ irreducible matrix over a field $\F$. Then, a cut $X\subseteq [n]$ of $A$ is also a cut of $A^{\mathrm{adj}}$ and $$\tw(A, X)^{\mathrm{adj}}\DE \tw(A^{\mathrm{adj}}, X).$$ 
\end{lemma}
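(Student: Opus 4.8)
The plan is to prove the lemma in two parts: first that a cut $X$ of $A$ remains a cut of $A^{\mathrm{adj}}$, and second the diagonal-similarity identity $\tw(A,X)^{\mathrm{adj}} \DE \tw(A^{\mathrm{adj}},X)$. For the first part I would not reprove it from scratch; \cref{lem:AdjugateCut} already tells us that $A$ and $\adj{A+D}$ have the same set of cuts for any diagonal $D$ making $A+D$ invertible, and taking $D=0$ (when $A$ is invertible, which we may arrange since cuts and the adjugate behave well under a shift; in general one first shifts) gives exactly what we need. So the real content is the commutation identity.

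For the identity, the approach I would take is a direct block computation using the standard formula for the adjugate (equivalently the inverse, up to the scalar $\det$) of a $2\times 2$ block matrix. Write $A$ in the cut form
\[
A = \begin{pmatrix} M & pq^T \\ uv^T & N \end{pmatrix},
\]
with $M = A[X]$, $N=A[\comp X]$. Since $A^{\mathrm{adj}} = \det(A)\, A^{-1}$ when $A$ is invertible (and both sides of the claimed identity are polynomial in the entries, so it suffices to verify it on the Zariski-dense set of invertible $A$, then extend — this is a standard density argument I would spell out in one sentence), I would compute $A^{-1}$ via the Schur complement. The key structural fact to exploit is that $pq^T$ and $uv^T$ are rank one: the Schur complement $N - uv^T M^{-1} pq^T = N - (v^T M^{-1} p)\, u q^T$ is itself $N$ minus a rank-one matrix, and by the Sherman–Morrison formula its inverse is $N^{-1}$ plus a rank-one correction. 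Carrying this through, every off-diagonal block of $A^{-1}$ comes out as an explicit scalar multiple of a rank-one outer product built from $M^{-1}p$, $q^TM^{-1}$ (or $N^{-1}u$, $v^TN^{-1}$), so $A^{\mathrm{adj}}$ again has the cut form, and one can read off its rank-one decompositions of the $X,\comp X$ and $\comp X, X$ blocks. Then I would do the same computation for $\tw(A,X)$ — which just swaps the roles of $q$ and $u$ and transposes $N$ — and for $\tw(A^{\mathrm{adj}},X)$, and compare the four blocks. The diagonal blocks should match directly (they involve $M^{-1}$ and $(N^T)^{-1}=(N^{-1})^T$ symmetrically), while the off-diagonal blocks will agree only up to scalars on the two vector factors, which is exactly what diagonal similarity by a matrix of the form $\diag(\alpha I_{|X|}, \beta I_{|\comp X|})$ (or more precisely absorbing the scalars) accounts for — and \cref{remark:twist} already notes that different rank-one decompositions in the definition of $\tw$ produce diagonally similar matrices, so matching up to such scalars is enough.

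Concretely the key steps, in order: (1) reduce to $A$ invertible by a density/shift argument and invoke \cref{lem:AdjugateCut} for the "$X$ is a cut of $A^{\mathrm{adj}}$" claim; (2) using the Schur complement together with Sherman–Morrison, compute all four blocks of $A^{-1}$ explicitly and observe $A^{\mathrm{adj}}$ has cut form at $X$ with identifiable rank-one factors; (3) repeat for $\tw(A,X)$, noting it is obtained from $A$ by $q \leftrightarrow u$, $N \mapsto N^T$, so its adjugate's blocks are obtained from those in step (2) by the same substitutions; (4) compute $\tw(A^{\mathrm{adj}},X)$ from the blocks in step (2) by applying the definition of cut-transpose; (5) compare the outputs of (3) and (4) block by block and exhibit the diagonal matrix realizing the similarity, using \cref{remark:twist} to absorb the scalar ambiguity in the rank-one decompositions.

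**The main obstacle** I anticipate is bookkeeping rather than conceptual: keeping track of the scalars $v^TM^{-1}p$, $q^TN^{-1}u$, and the various $\det$ factors ($\det M$, $\det N$, their Schur complements) so that everything genuinely collapses to a clean diagonal similarity rather than some messier equivalence. In particular one must check that the scalar factor that appears on, say, the $X,\comp X$ block of $\tw(A^{\mathrm{adj}},X)$ versus $\tw(A,X)^{\mathrm{adj}}$ is consistently a ratio of the form $\gamma_X/\gamma_{\comp X}$ for a single diagonal matrix $D=\diag(\gamma_X I,\gamma_{\comp X} I)$, rather than two unrelated scalars — this is where one genuinely uses that the cut-transpose only rearranges the *same* vectors $p,q,u,v$ and does not mix them. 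There is also a minor subtlety in degenerate cases where $M$ or $N$ is singular (so the Schur complement form is not literally available); I would handle this either by a further density argument in the entries of $A$ or by noting both sides are equalities of adjugates, which are polynomial maps, so verifying on a dense set suffices throughout.
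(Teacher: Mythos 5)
Your approach is correct but genuinely different from the paper's. The paper never leaves the adjugate: it uses an explicit cofactor (Generalized Laplace) computation (Claim~\ref{cl:submatrix-in-adj}) to write down the rank-one factors $p_A,q_A,u_A,v_A$ of the off-diagonal blocks of $A^{\mathrm{adj}}$ directly as signed sums of minors of $M$ and $N$, forms the candidate cut-transpose matrix $A'$ from these vectors, and then verifies $A' = \tw(A,X)^{\mathrm{adj}}$ entry by entry in three cases ($i,j\in X$; $i,j\in \comp X$; one in each), using \cref{lem:PMEwithTwist} to match the diagonal-block minors. That argument is valid over any field with no invertibility or genericity assumptions, because the adjugate is always defined. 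You instead pass to $A^{-1}$ via $A^{\mathrm{adj}}=\det(A)\,A^{-1}$, compute the four blocks of the block inverse via the Schur complement together with Sherman--Morrison, and close degenerate cases with a Zariski-density argument. That route is sound and arguably more conceptually transparent --- one sees directly that the diagonal blocks of $\tw(A,X)^{\mathrm{adj}}$ and $\tw(A^{\mathrm{adj}},X)$ coincide on the nose, and that the two off-diagonal blocks differ by reciprocal scalars $\lambda$ and $1/\lambda$, giving a conjugation by $\diag(\gamma I_{|X|},\delta I_{|\comp X|})$ --- but it carries the bookkeeping burden you yourself flag (the scalar $1-\alpha\beta$ with $\alpha=q^TN^{-1}u$, $\beta=v^TM^{-1}p$ must cancel cleanly) and requires a careful reduction of what is a priori a polynomial identity in the entries of $A$ to the dense locus where $A$, $M$, $N$, both Schur complements, and the Sherman--Morrison denominator $1-\alpha\beta$ are all nonsingular. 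Also note that your invocation of \cref{lem:AdjugateCut} with $D=0$ for the ``$X$ is a cut of $A^{\mathrm{adj}}$'' claim rides on the same invertibility reduction; the paper gets that part for free from the explicit rank-one form in Claim~\ref{cl:submatrix-in-adj}. Neither gap is fatal, but the paper's version sidesteps both.
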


For proof, see~\cref{appendix:missing-proofs-from-prprelim}. Next, we define the \emph{cut-transpose equivalence} relation. \cref{thm:main-one} says that it characterizes principal minor equivalence for irreducible matrices.

\begin{definition}
\label{def:property-P}
Let $A$ and $B$ be two irreducible matrices over a field $\F$ with their rows and columns indexed by $I$. Let $\mathcal X= (X_1,X_2,\ldots,X_k)$ be a sequence of subsets of $I$.  We say that $A$ and $B$ are \emph{cut-transpose equivalent} with respect to \emph{cut sequence} $\mathcal X$ if it  produces a sequence of matrices $(A_0=A, A_1, A_2, \ldots, A_k)$ with the following property: $$\forall i\in[k],\ A_i=\tw(A_{i-1}, X_i) \text{ where } X_i \text{ is a cut in } A_{i-1},\text{ and }A_k\dge B.$$
\end{definition}
The following lemma demonstrates how the cut-transpose relation extends to the adjoint.
\begin{lemma}
    \label{lem:ResultForAplusDimpliesForA} Let $A,B$ be two $n\times n$ irreducible matrices over a field $\mathbb{F}$ and $\mathcal{X}$ be sequence of subsets of $[n]$. Let $D$ be a diagonal matrix such that $A+D$ and $B+D$ are invertible. Then, $A$ and $B$ are cut-transpose equivalent with respect to $\mathcal{X}$ if and only if $\adj{A+D}$ and $\adj{B+D}$ are cut-transpose equivalent with respect to  $\mathcal{X}$. 
\end{lemma}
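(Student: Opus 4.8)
The plan is to combine the three building blocks from the preliminaries: principal minor equivalence is preserved by the adjugate shift (Lemma~\ref{lem:inversibleAdjugate}), cuts are preserved by the adjugate shift (Lemma~\ref{lem:AdjugateCut}), diagonal similarity is preserved by the adjugate shift (Lemma~\ref{lem:DE-through-adjoint}), and the cut-transpose operation commutes with the adjugate shift up to diagonal similarity (Lemma~\ref{lem:AdjugateTwistEquality}). Since cut-transpose equivalence with respect to a fixed sequence $\mathcal X=(X_1,\dots,X_k)$ is literally a finite composition of ``apply a cut-transpose along $X_i$'' followed by a single ``check diagonal equivalence'' at the end, the statement should fall out by inducting on the length $k$ of the sequence, carrying the adjugate-shift correspondence through one cut-transpose step at a time.

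Here is the order of steps I would carry out. First, set $A' := \adj{A+D}$ and $B' := \adj{B+D}$. I would prove a single-step claim: if $A$ is an irreducible matrix, $X$ is a cut of $A$, and $\widetilde A := \tw(A,X)$, then $X$ is a cut of $A'$, the matrix $\adj{\widetilde A + D'}$ is well defined and diagonally similar to $\tw(A', X)$ for a suitable diagonal matrix $D'$ — but this is exactly the content of Lemma~\ref{lem:AdjugateTwistEquality} applied with $Z$-shift, together with Lemma~\ref{lem:AdjugateCut} for the cut preservation. The subtle point is that in \cref{def:property-P} each intermediate $A_i$ must itself be irreducible for the next cut-transpose to be defined; this is fine because $\tw(A_{i-1},X_i)$ is PME to $A_{i-1}$ (Lemma~\ref{lem:PMEwithTwist}) and PME matrices have the same zero/nonzero pattern of strongly connected components — more simply, since we may reduce to the all-nonzero-entries setting via the adjugate shift, irreducibility is automatic. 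I would then iterate: the sequence $(A=A_0, A_1,\dots,A_k)$ produced by $\mathcal X$ on $A$ corresponds stepwise, via Lemma~\ref{lem:AdjugateTwistEquality}, to a sequence $(A', A_1', \dots, A_k')$ where each $A_i' \DE \adj{A_i + D}$ (absorbing the accumulated diagonal-similarity corrections using Remark~\ref{remark:twist}, which says different rank-one decompositions only change the result up to diagonal similarity, and using that $\tw(\cdot,X)$ intertwines with conjugation by a diagonal matrix). Finally, the terminal condition $A_k \dge B$ transfers: $A_k \DE B$ gives $\adj{A_k+D} \DE \adj{B+D}$ by Lemma~\ref{lem:DE-through-adjoint}, and $A_k \DE B^T$ gives the transpose version after noting $\adj{B^T+D} = \adj{(B+D)^T} = \adj{B+D}^T$; hence $A_k' \dge B'$. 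This establishes the forward direction; the converse is identical because the adjugate shift is an involution in the relevant sense — applying Lemma~\ref{lem:inversibleAdjugate}-style reasoning, $\adj{\adj{A+D} + D''}$ recovers (a diagonal rescaling of) $A+D$ for an appropriate $D''$, or one simply observes the statement is symmetric in $(A,B)$ versus $(\adj{A+D}, \adj{B+D})$ once the single-step claim is proven in both directions.

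The main obstacle I anticipate is bookkeeping the diagonal-similarity slack. Lemma~\ref{lem:AdjugateTwistEquality} only gives $\tw(A,X)^{\mathrm{adj}} \DE \tw(A^{\mathrm{adj}}, X)$, not equality, and moreover the definition of cut-transpose equivalence works with a $\dge$ (diagonal \emph{equivalence}, allowing a transpose) only at the very end, while requiring exact cut-transposes in between. So after the first step I have $A_1' \DE \tw(A',X_1)$ rather than equality, and I must argue that performing the \emph{next} cut-transpose $\tw(\cdot, X_2)$ on $\tw(A',X_1)$ versus on $A_1'$ yields diagonally similar results — i.e., that cut-transpose is ``natural'' with respect to diagonal conjugation, which follows from a short direct computation: conjugating $A$ by $\diag(D_X, D_{\comp X})$ conjugates the blocks $p q^T, uv^T$ compatibly, and $\tw$ only swaps which vectors pair with which block, so $\tw(D A D^{-1}, X) = D' \tw(A,X) (D')^{-1}$ for a related diagonal $D'$. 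Once this naturality lemma is in hand, the induction closes cleanly and the accumulated diagonal factor is harmless since the endpoint only needs $\dge$. I would state the naturality fact as a small auxiliary claim before the main induction, prove it by inspecting \cref{def:twist-operation} blockwise, and then the rest is a routine telescoping argument.
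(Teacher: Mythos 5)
Your proposal is correct and rests on the same pillars as the paper's proof (induction on $|\mathcal X|$, \cref{lem:AdjugateTwistEquality}, \cref{lem:DE-through-adjoint}, and \cref{lem:AdjugateCut}), but the bookkeeping is organized differently. The paper peels the sequence from the \emph{end}: starting from $A'_k\dge\adj{B+D}$, it moves the last cut-transpose to the $B$ side (so that $A'_{k-1}\DE\adj{\tw(B,X_k)+D}$), then applies the induction hypothesis to $A$ and $\tw(B,X_k)$ with the shorter sequence $\mathcal X - X_k$, and recovers $B$ via $\tw(\tw(B,X_k),X_k)=B$. You instead build up from the \emph{front}, carrying the invariant $\adj{A_i+D}\DE A'_i$ through the whole sequence and cashing in the diagonal-similarity slack only at the final $\dge$-check via \cref{lem:DE-through-adjoint}. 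Both are valid; yours is perhaps a touch more transparent about exactly where the slack accumulates and why it is harmless.

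The one thing you should be aware of is that the ``naturality'' fact you flag --- that $\tw(DMD^{-1},X)\DE\tw(M,X)$ and that diagonally similar matrices have the same cuts --- is indeed needed, and it is needed in the paper's proof as well, where it is used implicitly in the step from $A'_k\DE\adj{B+D}$ to $\tw(A'_k,X_k)\DE\tw(\adj{B+D},X_k)$ and in the assertion that $X_k$ is a cut of $\adj{B+D}$. Your instinct to state and prove it as a separate auxiliary claim (a one-line blockwise computation: conjugating by $\diag(D_1,D_2)$ turns the rank-one blocks $pq^T$, $uv^T$ into $(D_1p)(D_2^{-1}q)^T$, $(D_2u)(D_1^{-1}v)^T$, and $\tw$ then recombines these into blocks conjugated by $\diag(D_1,D_2^{-1})$, with \cref{remark:twist} absorbing the scalar ambiguity) makes the argument cleaner than the paper's; go ahead and include it.
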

\begin{proof}
    We show this by induction on the size of $\mathcal{X}$. For the base case, $\mathcal{X}$ is empty
    and 
    from \cref{lem:DE-through-adjoint}, $A\DE B$ if and only if $\adj{A+D}\DE \adj{B+D}$. Similarly, $A\DE B^T$ if and only if $\adj{A+D}\DE \adj{B^T+D}=(\adj{B+D})^T$. 

    For the inductive step, we assume that the statement is true when the sequence length is less than $k$. Now, we show this for $\mathcal{X}=(X_1,X_2,\dots, X_k)$. Here, we show only the forward direction. The other direction can be shown similarly. Let $(A'_0=\adj{A+D},A'_1,\dots,A'_k)$ be the sequence of matrices such that for each $i\in [k]$, $X_i$ is cut of $A'_{i-1}$  and $A'_i=\tw(A'_{i-1},X_i)$. Since $A'_k \dge \adj{B+D}$, $X_k$ is also a cut of $\adj{B+D}$. From \cref{lem:AdjugateTwistEquality}, $\tw(\adj{B+D},X_k)\DE\adj{\tw(B+D,X_k)}.$ 
    
    Since $A'_k\dge \adj{B+D}$, $A'_k$ is diagonally similar to either $\adj{B+D}$ or its transpose. Suppose $A'_k\DE \adj{B+D}$. Then, $\tw(A'_k,X_k)\DE \adj{\tw(B+D,X_k)}.$ Note that $X_k$ is also a cut of $A'_k$ and $\tw(A'_k,X_k)=A'_{k-1}$, as 
    cut-transpose operation is its own inverse. 
    Hence, $A'_{k-1}\DE \adj{\tw(B+D,X_k)}=\adj{\tw(B,X_k)+D}.$ This implies $\adj{A+D}$ and $\adj{\tw(B,X_k)+D}$ are cut-transpose equivalent with respect to sequence $\mathcal{X}-X_k$. By induction hypothesis, $A$ and $\tw(B,X_k)$ are cut-transpose equivalent with respect to $\mathcal{X}-X_k$. Since $\tw(\tw(B,X_k),X_k)=B$, we get that $A$ and $B$ are cut-transpose equivalent with respect to cut sequence $\mathcal{X}.$ Similarly, we can show for the other case when $A'_k\DE (\adj{B+D})^T$.

\end{proof}

\section{Characterization of Prinicipal Minor Equivalence for Irreducible Matrices}
\label{sec:proof-of-thm-one}

In this section, we show that two irreducible matrices are principal minor equivalent if and only if they are cut-transpose equivalent. Formally, we show \cref{thm:main-one}. The forward direction directly follows from \cref{lem:PMEwithTwist} and the fact that diagonally equivalent matrices are principal minor equivalent. For the other direction, first, we argue that we only need to show the theorem for matrices whose all off-diagonal entries are non-zero. 

Suppose there exists a diagonal matrix $D$ such that $A+D$ and $B+D$ are invertible and off-diagonal entries of $\adj{A+D}$ and $\adj{B+D}$ are non-zero. Since $A\PME B$, from \cref{lem:inversibleAdjugate}, $\adj{A+D}\PME \adj{B+D}$. From \cref{lem:ResultForAplusDimpliesForA}, if $\adj{A+D}$ and $\adj{B+D}$ are cut-transpose equivalent with respect to a cut sequence $\mathcal X$ of size at most $2n$ then so are $A$ and $B$. This implies that if the \cref{thm:main-one} holds for matrices with non-zero off-diagonal entries, then it also holds for general irreducible matrices. We show the existence of such $D$ in \cref{cl:YSubstitution}. Hence, in the rest of this section, we assume that $A$ and $B$ have non-zero off-diagonal entries, without loss of generality.

If $n\leq 3$ or $A$ does not have any cut, then \cref{thm:main-one} directly follows from \cref{lem:noCut}. So, we assume that $n\geq 4$ and $A$ has a cut. We prove the theorem using induction on $n$. The base case of $n=4$ directly follows from the following lemma. See \cref{appendix:otherProofs} for the proof.

\begin{lemma} 
\label{lem:size4Matrix}
Let $A$ be a $4\times 4$ matrix over $\F$ with all off-diagonal entries are nonzero. Let $B$ be another $4\times 4$ matrix over $\F$ such that $A\PME B$. Then, one of the following two holds:
\begin{enumerate}
\item $A\dge B$.
\item There exists a common cut in $A$ and $B$. Furthermore, for any common cut $X$ of $A$ and $B$, $\tw(A,X)\dge B$.
\end{enumerate}
\end{lemma}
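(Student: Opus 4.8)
The plan is to set up a direct case analysis on the $4\times 4$ matrix $A$ (with all off-diagonal entries nonzero), using its cut structure. The first observation I would make is that a $4\times 4$ matrix can only have cuts of size $2$, and there are essentially three such candidate cuts (up to complementation): $\{1,2\}$, $\{1,3\}$, $\{1,4\}$ — equivalently, a cut of $A$ is a pairing of the four indices into two pairs. So I would organize the proof around which of these size-two sets are cuts of $A$. If $A$ has no cut at all, then $\tw$ is not even applicable, and the conclusion must be case~1: this follows immediately from \cref{lem:noCut}, since $A$ is irreducible (all off-diagonal entries nonzero) and has no cut, so $A\PME B$ forces $A\dge B$. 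Hence I may assume $A$ has at least one cut, say $X$.

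\textbf{Key steps.} Assuming $X$ is a cut of $A$, the main thing to show is: (i) $X$ is also a cut of $B$ (so a \emph{common} cut exists), and (ii) for every common cut, applying cut-transpose to $A$ lands in the diagonal-equivalence class of $B$. For (i), I would use $A\PME B$ together with the structure forced by the cut: writing $A$ in block form along $X$ as in \cref{def:twist-operation} with $A[X,\comp X]=p q^T$, $A[\comp X, X]= u v^T$, the principal minors of orders $1$ through $4$ give a system of polynomial equations in the entries of $B$. Since $A$ is irreducible and PME-equivalent to $B$, $B$ must also be irreducible (reducibility is detectable from principal minors via \cref{lem:redToIr}), and I would argue that the order-$3$ and order-$4$ principal minor conditions force the corresponding off-block submatrices $B[X,\comp X]$ and $B[\comp X, X]$ to have rank at most one as well. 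Concretely, for a $4\times 4$ matrix and $|X|=|\comp X|=2$, $B[X,\comp X]$ is $2\times 2$, and its determinant-like contribution to $\det(B)$ and to certain $3\times 3$ principal minors can be isolated; matching these against the rank-one structure of $A$ pins down $\rank B[X,\comp X]\le 1$ and likewise for $B[\comp X, X]$. That gives that $X$ is a cut of $B$. For (ii), once $X$ is a common cut, \cref{lem:PMEwithTwist} gives $\tw(A,X)\PME A \PME B$, and I would show $\tw(A,X)$ has \emph{no} cut other than $X$/$\comp X$ in the generic case, or handle the degenerate sub-case directly; then another application of \cref{lem:noCut}-type reasoning (or a short direct computation comparing block decompositions) yields $\tw(A,X)\dge B$. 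Actually, cleaner: since $\tw(A,X)$ and $B$ are both $4\times4$, PME, and share the cut $X$, I would decompose each along $X$ into two $3\times 3$-ish pieces (following the decomposition philosophy $A_1 := A[X+s]$, $A_2:=A[\comp X + t]$ described in the proof overview), note the $3\times 3$ pieces are PME hence diagonally equivalent by \cref{lem:noCut}, and then check by hand that the two possible ``gluings'' correspond exactly to $B$ versus $\tw(B,X)$; the cut-transpose already resolves the ambiguity in the right direction.

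\textbf{Main obstacle.} The delicate point is step (i): showing that a cut of $A$ is forced to be a cut of $B$. The principal-minor equations do not by themselves say anything about off-diagonal \emph{blocks} of $B$ — principal minors only see principal submatrices — so one has to extract rank information on $B[X,\comp X]$ and $B[\comp X, X]$ indirectly, from how these blocks contribute to $\det(B)$ and to the $3\times 3$ principal minors of $B$. For the $4\times 4$ case this should be a finite, checkable computation: writing out $\det(B[\{i,j,k\}])$ for all four triples and $\det(B)$, one gets enough relations. I expect the cleanest route is to exploit that $A$ irreducible with a size-$2$ cut means its associated digraph has limited structure, and to argue the determinant of $A$ factors in a way (through the cut) that $B$ must replicate; the rank-one condition on $B$'s blocks then follows from comparing the ``cycle cover'' expansions (as used informally in the running example with the two $6$-cycles). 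I would also need to separately dispose of the possibility that $A$ has \emph{two} distinct cuts meeting in a single element (e.g.\ both $\{1,2\}$ and $\{1,3\}$ are cuts), since then the bookkeeping about which common cut to twist along needs the ``furthermore'' clause to hold for \emph{all} common cuts — but I expect that forces enough rigidity (many rank-one blocks) that $A$ is already diagonally equivalent to a very structured matrix and the claim is easy. The rest — verifying $\tw(A,X)\dge B$ — is routine $3\times 3$ diagonal-similarity matching via \cref{cl:DE-in-poly-time}-style reasoning and \cref{lem:noCut}.
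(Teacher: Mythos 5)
Your step (i) is false as stated, and this is the crux of the matter. You claim that if $X$ is a cut of $A$ and $A\PME B$, then $X$ must also be a cut of $B$, and you propose to extract this by arguing that the order-$3$ and order-$4$ principal minor conditions force $\rank B[X,\comp X]\le 1$. But this is genuinely not provable, because it is not true. Consider a $4\times 4$ matrix $A$ (all off-diagonal entries nonzero) having cuts $\{1,2\}$ and $\{1,3\}$ but not $\{1,4\}$, and let $B=\tw(A,\{1,2\})$. Then $A\PME B$ (by \cref{lem:PMEwithTwist}), but by \cref{thm:cutmap} the cut $\{1,3\}$ of $A$ maps to the cut $\{1,3\}\Delta\{1,2\}=\{2,3\}$ of $B$; generically $B$ does not have $\{1,3\}$ as a cut. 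So the statement ``a cut of $A$ is forced to be a cut of $B$'' fails, and no amount of analyzing principal minors can salvage it, since there is a concrete $B$ refuting it. What the lemma asserts (and what the paper proves) is only the existential version: \emph{some} common cut exists. That is a global statement about the two cut families, not a per-cut statement, and the distinction matters.

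The paper's route is quite different. It shows by contradiction that the cut families of $A$ and $B$ cannot be disjoint: assume $A$ has cut $\{1,2\}$, $B$ has cut $\{1,3\}$, and there is no common cut; normalize the rank-one off-blocks by diagonal conjugation (so that several entries become $1$); translate principal-minor equality into a finite list of polynomial identities via the cycle-weight reformulation (\cref{lem:PMEEqualWeightedCycle}), producing four equations each factoring into two factors; then run through the $16$ resulting sub-cases and show each forces a common cut, a contradiction. Once a common cut is secured, the same explicit parametrization gives, again by elimination, that either $A\dge B$ or $\tw(A,X)\dge B$. Your part (ii) (decompose into overlapping $3\times 3$ pieces, apply \cref{lem:noCut}, then match gluings) is in the right spirit — it is close to what \cref{lem:SplussToN} does in the general inductive step — but the paper's proof of this base case is an explicit $4\times 4$ computation rather than a decomposition argument. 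The part of your plan that is correct and matches the paper is the no-cut base case via \cref{lem:noCut}, and the observation that a $4\times 4$ matrix has at most three cut pairs. The missing idea is the contradiction argument ruling out disjoint cut families; without it, and with step (i) being false, the plan does not go through.
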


Now, we have two $n\times n$  matrices $A$ and $B$ with non-zero off-diagonal entries and at least one cut such that $B\PME A$. First, we show some relation between minimal cuts of $A$ and $B$ that enables us to apply induction. Precisely, we show that if
$A$ and $B$ have the same principal minors, and $A$ has a cut, then a minimal cut of $A$ is also a
cut of $B$ if its size is greater than two. Otherwise, if the size of a minimal cut $S$ of $A$ is two, then either it is
also a cut of $B$, or there exists a cut $X$ in $B$ such that the cut-transpose of $B$ with respect to $X$ has cut $S$. To show this relationship between cuts, we first show the following three results. 

The following lemma establishes the relation between cuts of a matrix and its cut-transpose. 
\begin{lemma} 
\label{thm:cutmap}
Let $A$ be an $n\times n$ matrix over $\F$ with nonzero off-diagonal entries. Let $S\subseteq [n]$ be a cut in $A$. Then, for any $T\subseteq [n]$ the following holds.
\begin{enumerate}
\item If $T$ or $\comp{T}$ is a subset of $S$ or $\comp{S}$, then $T$ is a cut in $A$ if and only if $T$ is a cut in $\tw(A, S).$
\item Otherwise, $T$ is a cut in $A$ if and only if $T\Delta S$ is a cut in $\tw(A,S)$.
\end{enumerate}
\end{lemma}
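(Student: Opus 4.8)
The plan is to analyze how the cut-transpose operation $\tw(A,S)$ modifies the various blocks of $A$ indexed by the four-way partition of $[n]$ induced by $S$ and $T$, and to track the ranks of the off-diagonal blocks $A'[T,\comp{T}]$ and $A'[\comp{T},T]$ where $A' = \tw(A,S)$. Write $A$ in block form with respect to the partition $S, \comp{S}$. The key point is that $\tw(A,S)$ only touches the $(S,\comp{S})$ block (replacing $pq^T$ by $pu^T$), the $(\comp{S},S)$ block (replacing $uv^T$ by $qv^T$), and transposes the $(\comp{S},\comp{S})$ block; it leaves $A[S]$ untouched. Since we assume all off-diagonal entries are nonzero, the matrix is irreducible, so $A[S,\comp{S}] = pq^T$ and $A[\comp{S},S] = uv^T$ with all of $p,q,u,v$ having all-nonzero coordinates; this is what makes the bookkeeping clean.

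First I would dispose of Case~1. If $T \subseteq S$ (the other sub-cases $T\subseteq \comp{S}$, $\comp{T}\subseteq S$, $\comp{T}\subseteq \comp{S}$ follow by the symmetries $T \leftrightarrow \comp{T}$ and $S \leftrightarrow \comp{S}$, both of which commute appropriately with $\tw$), then $\comp{T} \supseteq \comp{S}$, so the block $A[T,\comp{T}]$ decomposes as $\big(A[T, \comp{T}\cap S] \,\big|\, A[T,\comp{S}]\big)$, and $A[T,\comp{S}]$ is a submatrix of the rank-one block $pq^T$, hence is (the outer product of) a scaled piece of $p$ with $q$. Cut-transpose replaces $A[T,\comp{S}]$ by the corresponding submatrix of $pu^T$, i.e. the same scaled piece of $p$ tensored with $u$ instead of $q$ — a rank-one block with all-nonzero entries either way. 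The block $A[T,\comp{T}\cap S]$ lies entirely inside $A[S]$, which is fixed by $\tw$. So $\rank(A[T,\comp{T}])$ is unchanged, and symmetrically for $\rank(A[\comp{T},T])$ (using the $uv^T \mapsto qv^T$ rule, noting $v$ and $q$ both have nonzero entries); hence $T$ is a cut in $A$ iff it is a cut in $\tw(A,S)$.

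The substance is Case~2, where none of the four containments holds, equivalently all four pieces $T\cap S$, $T\cap\comp{S}$, $\comp{T}\cap S$, $\comp{T}\cap\comp{S}$ are nonempty. Set $T' = T\Delta S$, so $T'\cap S = \comp{T}\cap S$, $T'\cap\comp{S} = T\cap\comp{S}$, and similarly for $\comp{T'}$. I would compute $A'[T',\comp{T'}]$ block by block against the partition into the four pieces and match each block to the corresponding block of $A[T,\comp{T}]$: the pieces inside $A[S]$ (fixed) and inside $A[\comp{S}]$ (transposed, which does not change the rank of that sub-block but does reindex it — here is where passing to $T' = T\Delta S$ is exactly what re-aligns the bookkeeping, since transposing $A[\comp S]$ swaps the roles of $T\cap\comp S$ and $\comp T\cap\comp S$), and the cross pieces sitting inside $pq^T, uv^T$ versus $pu^T, qv^T$. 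The goal is to exhibit a rank-preserving correspondence (up to permutations and the nonzero-scaling freedom) between $A[T,\comp T]$ and $A'[T',\comp{T'}]$, and likewise between $A[\comp T,T]$ and $A'[\comp{T'},T']$, so that $\rank \le 1$ transfers in both directions; the "only if" and "if" directions are symmetric because $\tw(\tw(A,S),S) = A$ and $T'\Delta S = T$. I expect the main obstacle to be purely organizational: carefully writing out the $3\times 3$-block (really $2\times 2$ within each of $S$, $\comp S$) layout, keeping track of which vector ($q$ vs. $u$, $v$ vs. $q$) and which scalar multiple sits in each cross-block before and after the operation, and making sure the transpose of $A[\comp S]$ is correctly folded into the relabeling $T\mapsto T\Delta S$. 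Once the block picture is drawn, each sub-claim is a one-line rank comparison, but assembling them without sign or indexing errors is the delicate part.
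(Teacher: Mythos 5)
Your plan is the same decomposition the paper uses: fix the four-way partition induced by $S$ and $T$, track how $\tw(A,S)$ rewrites the off-diagonal blocks, observe that $T' = T\Delta S$ is exactly what compensates for the transposition of $A[\comp S]$, and then invoke $\tw(\tw(A,S),S)=A$ together with $(T\Delta S)\Delta S = T$ to get both directions. That is the paper's argument in both cases, so the approach is right.

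Two places need tightening before this is a proof rather than a plan. In Case~1 you argue that the two pieces of $A[T,\comp T]$ each stay rank~$\le 1$ after the operation and conclude ``so $\rank(A[T,\comp T])$ is unchanged.'' That inference is not valid on its own: a concatenation of two rank-$\le 1$ blocks can have rank $2$. What you actually need, and what the hypothesis ``$T$ is a cut in $A$'' supplies, is that $A[T,\comp{T}\cap S]$ and $A[T,\comp S]$ share a common column vector, say $u_1$; since $A[T,\comp S]$ is the restriction $p_T q^T$ of the rank-one block $pq^T$ and all entries are nonzero, $p_T$ must be parallel to $u_1$. After the cut-transpose the cross piece becomes $p_T u^T$, still parallel to $u_1$ in its columns, so the concatenation remains rank~$\le 1$. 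You should make this identification of $p_T$ with $u_1$ explicit rather than jump to the conclusion.

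In Case~2 you correctly set up $T'$ and the $2\times 2$ sub-blocking inside each of $S$ and $\comp S$, but you leave the substance to ``I expect the main obstacle to be purely organizational.'' The computation is more than bookkeeping: the double constraint that both $S$ and $T$ are cuts forces the $A[S]$-interior blocks $A[S\setminus T, S\cap T]$ and $A[S\cap T, S\setminus T]$ to take the very specific form $(\alpha u_1)q_2^T$ and $(\beta u_2) q_1^T$ for some nonzero scalars $\alpha,\beta$, and these same scalars then propagate into $A[\comp S]$ (yielding $p_1 (v_2/\beta)^T$ and $p_2(v_1/\alpha)^T$). Only after this propagation is established do the four sub-blocks of $\tw(A,S)[T',\comp{T'}]$ assemble into a single outer product $(\alpha u_1 \mid v_1)(q_2\mid \alpha^{-1}p_2)^T$. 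Write out this scalar propagation; without it, the rank-one claim for $\tw(A,S)[T',\comp{T'}]$ does not follow.
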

\begin{proof}
We start with the proof of the first part of the lemma.
\paragraph*{Proof of the first part.} Assume that $T\subseteq S$ and $T$ is a cut in $A$. Then, the matrix $A$ has the following structure: 
\[
A=
\begin{blockarray}{cccc}
&              T                   & S\setminus T         & \comp{S} \\
\begin{block}{c(ccc)}
{T}            & *                 & u_1 \cdot v_1^T      & u_1 \cdot v_2^T\\
&&&\\
{S\setminus T} & p_1 \cdot q_1^T   & *                    & u_2  \cdot v_2^T\\
&&&\\
{\comp{S}}     & p_2\cdot q_1^T    & p_2\cdot q_2^T       & *\\            
\end{block}
\end{blockarray}
\] such that  
$$u_1, q_1\in \F^{|T|},\ \ \ \ \ \ v_1, u_2, p_1, q_2\in \F^{|S|-|T|},\ \   \text{ and }\ \ v_2, p_2\in\F^{|\comp{S}|},$$ and `$*$' marked submatrices can be arbitrary. 
After applying the cut-transpose operation on $A$ with respect to the cut $S$, using \cref{remark:twist}, $$\tw(A,S)[T,\comp{T}]\DE u_1\cdot (v_1 \mid p_2)^T\text{ and } \tw(A,S)[\comp{T}, T]\DE(p_1 \mid v_2)\cdot q_1^T.$$ Therefore, $T$ is also a cut in $\tw(A,S)$. The converse follows because $\tw(\tw(A,S),S) = A$. 


Now we assume that $T\subseteq \comp{S}$. Note that the set of cuts in $A$ is the same as the set of cuts in $A^T$. 
Since $T\subseteq \comp{S}$, from the above discussion, $T$ is a cut in $A^T$ if and only if $T$ is a cut in $\tw(A^T,\comp{S})$. Observe that $\tw(A^T,\comp{S})\DE\tw(A,S)$. Thus, when $T\subseteq \comp{S}$, the set $T$ is a cut in $A$ if and only if $T$ is a cut in $\tw(A,S)$. The proof for the remaining cases directly follows from these.

\paragraph*{Proof of the second part.} Assume that neither $T$ nor $\comp{T}$ is  a subset of $S$ or $\comp{S}$, and $T$ is a cut in $A$. This implies that $S\setminus T$, $S\cap T$ and $T\setminus S$ are nonempty. Since $S$ is a cut in $A$, the matrix $A$ has the following structure.
\begin{equation}
\label{eqn:str-A-wrt-cut-S}
A=
\begin{blockarray}{ccccc}
                  & S\setminus T    & S\cap T         & T\setminus S    & \comp{S \cup T}\\
\begin{block}{c(cccc)}
{S\setminus T}    &   *             &  *              & u_1\cdot v_1^T  &  u_1\cdot v_2^T\\
&&&&\\
{S\cap T}         &   *             & *               & u_2\cdot v_1^T  &  u_2\cdot v_2^T\\
&&&&\\
{T\setminus S}    & p_1\cdot q_1^T  & p_1\cdot q_2^T  & *               &  *\\
&&&&\\
{\comp{S \cup T}} & p_2\cdot q_1^T  & p_2\cdot q_2^T  & *               &  *\\ 
\end{block}
\end{blockarray}
\end{equation}
such that $$u_1, q_1\in\F^{|S\setminus T|},\ \ \ \  v_1, p_1\in \F^{|T\setminus S|},\ \ \ \  v_2, p_2\in \F^{|\comp{S\cup T}|},\  \text{ and }\   u_2, q_2\in\F^{|S\cap T|}.$$
Since $T$ is also a cut, the columns and rows of $A[S\setminus T, S\cap T]$ are multiples of $u_1$ and $q_2^T$ respectively. Hence, $A[S\setminus T, S\cap T]=(\alpha u_1)\cdot q_2^T $ for some $\alpha \neq 0$. Since $\rank(A[\comp{T},T])=1,$ $A[\comp{S\cup T},T\setminus S]=p_2\cdot (v_1^T/\alpha)$. Similarly, for some non-zero $\beta$, $$A[S\cap T,S\setminus T]= (\beta u_2)\cdot q_1^T \text{ and } A[T\setminus S,\comp{S\cup T}]=p_1\cdot (v_2^T/\beta).$$
Thus, the matrix $A$ has the following form.
\begin{equation}
\label{eqn:str-A}
A=
\begin{blockarray}{ccccc}
                  & S\setminus T                  & S\cap T                 & T\setminus S             &\comp{S \cup T}\\
\begin{block}{c(cccc)}
{S\setminus T}    & *                             &  (\alpha u_1)\cdot q_2^T & u_1\cdot v_1^T           &u_1\cdot v_2^T\\
&&&&\\
{S\cap T}         &  (\beta u_2)\cdot q_1^T   & *                       & u_2\cdot v_1^T           &u_2\cdot v_2^T\\
&&&&\\
{T\setminus S}    & p_1\cdot q_1^T                & p_1\cdot q_2^T          & *                        &p_1\cdot (v_2^T/\beta)\\
&&&&\\
{\comp{S \cup T}} & p_2\cdot q_1^T                & p_2\cdot q_2^T          & p_2\cdot (v_1^T/\alpha)&  *\\ 
\end{block}
\end{blockarray}
\end{equation}
From~\cref{eqn:str-A}, applying cut-transpose operation on $A$ with respect to the cut $S$, we get that 
\begin{eqnarray*}
\tw(A,S)[T\Delta S,\comp{T\Delta S}]   &\DE&   (\alpha u_1 \mid v_1)\cdot (q_2\mid \alpha ^{-1}p_2)^T, \text{ and }\\ 
\tw(A,S)[\comp{T\Delta S}, T\Delta S]  &\DE&   (\beta u_2 \mid v_2)\cdot (q_1 \mid \beta^{-1} p_1)^T.
\end{eqnarray*}
Therefore, $S\Delta T$ is a cut in $\tw(A,S)$. 

Converse follows, because $\tw(\tw(A,S),S) = A$ and $(T\Delta S)\Delta S=T$.
\end{proof}
In the following lemma, we state a property about a minimal cut  of size greater than two.
\begin{lemma} \label{lem:minCutGeneral}
Let $A$ be an $n\times n$ matrix over $\F$ such that the off-diagonal entries of $A$ are nonzero. Let $S$ be a minimal cut in $A$ of size greater than two. Let $T$ be a nonempty subset of $\comp{S}$, $X\subseteq S\cup T$ and $\widetilde X=(S\cup T)\setminus X$. Then, if $X$ is a cut in $A[S\cup T]$, then either $S\subseteq X$ or $S \subseteq \widetilde X$. 

In particular, if $T=\{t\}$ for some $t\in\comp{S}$, then the matrix $A[S+t]$ have no cut.
\end{lemma}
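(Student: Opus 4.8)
The plan is to argue by contradiction. Suppose $X$ is a cut of $M := A[S\cup T]$ with $S\not\subseteq X$ and $S\not\subseteq\widetilde X$; I will then produce a cut of $A$ that is a nonempty proper subset of $S$, contradicting the minimality of $S$. Put $S_1 := S\cap X$, $S_2 := S\setminus X$ (both nonempty, by assumption), $T_1 := T\cap X$, $T_2 := T\setminus X$, so that $X = S_1\sqcup T_1$, $\widetilde X = S_2\sqcup T_2$, and $T = T_1\sqcup T_2\ne\emptyset$. Since $S$ is a cut of $A$ and all off-diagonal entries of $A$ are nonzero, I may write $A[S,\comp{S}] = p\,q^T$ and $A[\comp{S},S] = u\,v^T$ with every coordinate of $p,q,u,v$ nonzero; restricting rows and columns then gives $A[S',\comp{S}] = p_{S'}q^T$, $A[\comp{S},S'] = u\,v_{S'}^T$, $A[S',T'] = p_{S'}q_{T'}^T$ and $A[T',S'] = u_{T'}v_{S'}^T$ for all $S'\subseteq S$ and $T'\subseteq\comp{S}$.

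The crux is the following claim: if $T_2\ne\emptyset$, then both $A[S_1,\comp{S_1}]$ and $A[\comp{S_1},S_1]$ have rank at most one (and symmetrically, if $T_1\ne\emptyset$, the same holds for $S_2$). To see it, observe that $M[S_1,\widetilde X] = (\,A[S_1,S_2]\mid p_{S_1}q_{T_2}^T\,)$, being a submatrix of $M[X,\widetilde X]$, has rank at most one; since $p_{S_1}q_{T_2}^T$ is a nonzero matrix whose column space is $\Span(p_{S_1})$, every column of $A[S_1,S_2]$ lies in $\Span(p_{S_1})$, i.e.\ $A[S_1,S_2] = p_{S_1}w^T$, and therefore $A[S_1,\comp{S_1}] = (\,A[S_1,S_2]\mid A[S_1,\comp{S}]\,) = p_{S_1}(w\mid q)^T$ has rank at most one. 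The same argument applied to the submatrix $M[\widetilde X,S_1]$ of $M[\widetilde X,X]$ forces the rows of $A[S_2,S_1]$ to lie in $\Span(v_{S_1}^T)$, and hence $A[\comp{S_1},S_1]$ has rank at most one as well.

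It then remains to pick the right set. Since $T = T_1\cup T_2\ne\emptyset$, at least one of $S_1,S_2$ satisfies both rank conditions above; call it $S_j$. Moreover $S_j$ can be taken with $|S_j|\ge 2$: if $T_1$ and $T_2$ are both nonempty, then both $S_1$ and $S_2$ satisfy the rank conditions, and one of them has size $\ge 2$ because $|S|\ge 3$; if $T_2=\emptyset$, then $\widetilde X = S_2$, so the cut constraint $|\widetilde X|\ge 2$ gives $|S_2|\ge 2$ while $T_1 = T\ne\emptyset$ supplies its rank conditions; the case $T_1=\emptyset$ is symmetric. Since $\comp{S_j}\supseteq\comp{S}$ has size $\ge 2$, the set $S_j$ is a cut of $A$ with $\emptyset\ne S_j\subsetneq S$, contradicting the minimality of $S$; this proves the first part. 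For the ``in particular'' part, if $T=\{t\}$ then any cut $X$ of $A[S+t]$ satisfies $2\le|X|\le|S|-1$ and $2\le|\widetilde X|\le|S|-1$, so neither $X$ nor $\widetilde X$ can contain $S$ — contradicting the first part — and hence $A[S+t]$ has no cut.

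I expect the case bookkeeping to be the main obstacle: one must track precisely which of $T_1,T_2$ being nonempty licenses the rank-one extraction for $S_1$ versus $S_2$, and then check in every case that the set whose rank conditions have been established also meets the size bound $|S_j|\ge 2$. This last point is exactly where the hypothesis $|S|>2$ is used: if $|S|=2$, a cut $X$ of $A[S\cup T]$ may split $S$ into two singletons, neither of which is a cut of $A$, and the conclusion genuinely fails.
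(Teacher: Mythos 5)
Your proof is correct and takes essentially the same approach as the paper's: both arguments extract a smaller cut $S_j \in \{S\cap X,\ S\cap\widetilde X\}$ by observing that a nonempty $T_i$ on the opposite side of the cut $X$ of $A[S\cup T]$ forces the blocks $A[S_j,S\setminus S_j]$ and $A[S\setminus S_j,S_j]$ to align with the rank-one structure already imposed by the cut $S$, and both handle the same case split on which of $|S\cap X|$, $|S\cap\widetilde X|$ is at least $2$ and which of $T\cap X$, $T\setminus X$ is nonempty. The only difference is presentational — the paper fixes a single witness $t\in T\setminus X$ and writes out the full block matrix, whereas you argue abstractly via column and row spaces and make the $(S_1,S_2)$ versus $(T_1,T_2)$ bookkeeping more symmetric and explicit; the underlying mathematics is identical.
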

\begin{proof}
For the sake of contradiction, assume that $X\subseteq S\cup T$ is a cut of $A[S \cup T]$ such that neither $S\subseteq X$ nor $S \subseteq \widetilde X$. Since $|S|\geq 3$, either $|S\cap X|\geq 2$ or $|S\cap \widetilde X|\geq 2$. This implies that a cut exists in $A[S \cup T]$, which contains at least two elements of $S$. Hence, without loss of generality, we can assume that $|S\cap X|\geq 2$. If $T\setminus X$ is empty, then $\widetilde X \subseteq S$ and hence $|\widetilde X \cap S|= |\widetilde X|\geq 2$. Also, $T\setminus \widetilde X = T$ is non-empty. This implies that $A[S \cup T]$ has a cut such that it has at least two elements from $S$, and $T$ has at least one element that is not present in it. Without loss of generality, we assume that $|X\cap S|\geq 2$ and $T\setminus X\neq \emptyset$. Since $S$ is not a subset of $X$, $S\setminus X$ is non-empty.

Let $t\in T\setminus X= T\cap \widetilde X$. Let $$A[S\cap X,t]=u_1, A[S\setminus X,t]=u_2, A[t,S\cap X]=q_1^T \text{ and } A[t, S\setminus X]= q_2^T.$$ Since $X$ is a cut of $A[S+T]$ and $t\in \widetilde X$, the columns of $A[S\cap X, S\setminus X]$ are multiples of $u_1$. Similarly, the rows of $A[S\setminus X, S\cap X]$ are multiples of $q_1^T$.
Since $S$ is a cut of $A$, $A$ has the following structure.

\begin{equation}
\label{eqn:minCutGeneral-caseI-one}
A=
\begin{blockarray}{ccccc}
                  & S\cap X          & S\setminus X   & t        & \comp{S+t}\\
\begin{block}{c(cccc)}
{S\cap X}         & *                &u_1\cdot v_1^T & u_1            & u_1\cdot v_2^T \\
&&&&\\
{S\setminus X}    & p_1\cdot q_1^T   & *              & u_2 &  u_2\cdot v_2^T\\       
&&&&\\
{t}         & q_1^T                & q_2^T & *              & *\\
&&&&\\
{\comp{S+t}}    & p_2\cdot q_1^T   & p_2\cdot q_2^T             & * & *  \\
\end{block}
\end{blockarray}
\end{equation}
where $v_1,p_1\in \mathbb{F}^{|S\setminus X|} $ and $v_2,p_2\in \mathbb{F}^{|\comp{S}|+1}$.
From $\cref{eqn:minCutGeneral-caseI-one},$  $$A[S\cap X,\comp{S\cap X} ] = u_1\cdot (v_1\mid 1\mid v_2)^T \text{ and } A[\comp{S\cap X}, S\cap X]= (p_1\mid 1\mid p_2)\cdot q_1^T.$$ This implies $S\cap X \subset S$ is a cut in $A$ which contradicts the minimality of $S$.

Now we prove the other part of the lemma. Suppose this $T$ is a singleton set, i.e. $T=\{t\}$ for some $t\in\comp{S}$. For the sake of contradiction, assume that there exists a cut $X$ in $A[S+t]$. Then, from the first part of the lemma, either $S\subseteq X$ or $S\subseteq \widetilde X$ where $\widetilde X=(S+t)\setminus X$. Without loss of generality, assume $S\subseteq X$. Then $|\widetilde X|\leq 1$. This is a contradiction since $X$ is a cut in $A[S+t]$. Therefore, $A[S\cup T]$ has no cut when $T$ is a singleton set.
\end{proof}

\begin{lemma} \label{lem:Blowerhalf}
Let $A$ be an $n\times n$ matrix over $\F$ with nonzero off-diagonal entries. Let $S\subseteq [n]$ be a cut in the matrix $A$ and $t\in S$, and suppose $X\subseteq \comp{S}$ is a cut in $A[\comp{S}+t]$. Then, $X$ is also a cut in the matrix $A$.
\end{lemma}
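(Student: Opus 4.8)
We are given an $n\times n$ matrix $A$ with nonzero off-diagonal entries, a cut $S$ of $A$, an element $t\in S$, and a set $X\subseteq\comp{S}$ that is a cut of the submatrix $A[\comp{S}+t]$. We must show $X$ is a cut of $A$ itself. The plan is to directly verify that both $A[X,\comp X]$ and $A[\comp X, X]$ have rank at most one, using the two hypotheses together: the cut condition on $S$ controls the interaction between $S$ and $\comp S$, while the cut condition on $X$ inside $A[\comp S+t]$ controls the interaction between $X$ and $(\comp S+t)\setminus X = (\comp S\setminus X)+t$.

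**Key steps.** First I would fix notation: write $\comp S = X \sqcup W$ where $W = \comp S \setminus X$, so that $\comp X = S \cup W$ (in $[n]$), and split the relevant blocks of $A$ according to the partition $X \mid W \mid S$. Since $S$ is a cut of $A$, we have $A[S,\comp S] = p\cdot q^T$ and $A[\comp S, S] = u\cdot v^T$ for suitable vectors; restricting, $A[S,X]$, $A[S,W]$, $A[X,S]$, $A[W,S]$ are all rank-$\le 1$ pieces of these, and moreover $A[X,S]$ and $A[W,S]$ share the same ``row vector'' $v^T$ (up to scaling), while $A[S,X]$ and $A[S,W]$ share the same ``column vector'' $p$. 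Second, since $X$ is a cut of $A[\comp S + t]$, and $t\in S$, the block $A[X, W+t]$ has rank $\le 1$ and the block $A[W+t, X]$ has rank $\le 1$; in particular $A[X,W]$, $A[X,t]$ are scalar multiples of one common column vector, and $A[W,X]$, $A[t,X]$ are multiples of one common row vector. Now to bound $\rank(A[X,\comp X]) = \rank(A[X, S\cup W])$: the columns indexed by $W$ are multiples of a single vector (from the $X$-cut), the columns indexed by $S$ are multiples of a single vector $p|_X$ (from the $S$-cut), and I need to glue these — the column indexed by $t\in S$ is, on one hand, a multiple of $p|_X$ (it's part of $A[X,S]$) and, on the other hand, a multiple of the common $W$-column vector (since $t\in W+t$ and $A[X,W+t]$ has rank $\le 1$). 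Provided this column $A[X,t]$ is nonzero — which holds because off-diagonal entries are nonzero, so $A[X,t]$ has all entries nonzero as $t\notin X$ — it forces the two candidate vectors to be parallel, hence all columns of $A[X,\comp X]$ lie in a single line, giving rank $\le 1$. The argument for $A[\comp X, X]$ is symmetric, using the row $A[t,X]$ (again nonzero) as the common link.

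**Main obstacle.** The delicate point is the gluing step: a priori the $W$-columns of $A[X,\comp X]$ and the $S$-columns live in two different one-dimensional subspaces, and I need the shared index $t$ to certify they coincide. This works precisely because $A[X,t]$ is a common sub-block of both $A[X,S]$ (rank $\le 1$ from the $S$-cut) and $A[X,W+t]$ (rank $\le 1$ from the $X$-cut), and it is a \emph{nonzero} vector thanks to the all-nonzero-off-diagonal hypothesis — so neither one-dimensional space is ``$\{0\}$,'' and they must be equal. I would make sure to spell out why $t\notin X$ (because $t\in S$ and $X\subseteq\comp S$) so that every entry of $A[X,t]$ and $A[t,X]$ is genuinely off-diagonal and hence nonzero. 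Finally I would note $2\le|X|\le n-2$ holds in $A$: $|X|\ge 2$ since $X$ is a cut of $A[\comp S+t]$, and $|\comp X| = |S|+|W| \ge |S| \ge 2$ since $S$ is a cut of $A$, so $X$ is a legitimate cut of $A$.
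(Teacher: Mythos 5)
Your proposal is correct and follows essentially the same approach as the paper: partition the indices as $X \mid W \mid \{t\} \mid S-t$, use the cut $S$ to make the columns of $A[X,S]$ lie on one line and the cut $X$ in $A[\comp{S}+t]$ to make the columns of $A[X,W+t]$ lie on another, and glue the two lines together via the shared nonzero column $A[X,t]$ (and symmetrically for rows). One small notational slip: where you write that the $S$-columns of $A[X,\comp{X}]$ are multiples of $p|_X$, you mean $u|_X$ (the restriction of the column vector of $A[\comp{S},S]=u\cdot v^T$), since $p$ is indexed by $S$ and $X\subseteq\comp{S}$; the substance of the argument is unaffected.
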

\begin{proof}
The off-diagonal entries of $A$ are nonzero. The sets $X$ and $S$ are cuts in  $A[S+t]$ and $A$, respectively. This implies that the matrix $A$ can be written as follows.
\[
A=
\begin{blockarray}{ccccc}
                    &  X             & \comp{S}\setminus X & t  & S-t \\
\begin{block}{c(cccc)}
X                   & *              & u_1\cdot v_1^T      & u_1     & u_1\cdot v_2^T \\
&&&&\\
\comp{S}\setminus X & p_1\cdot q_1^T & *                   & u_2     & u_2\cdot v_2^T \\
&&&&\\
t                   & q_1^T          & q_2^T               & *       & * \\
&&&&\\
S-t                 & p_2\cdot q_1^T & p_2\cdot q_2^T      & *       & * \\
\end{block}
\end{blockarray}
\]
where $$u_1, q_1\in\F^{|X|},\ \ \ \  v_1, u_2, p_1, q_2\in\F^{|\comp{S}\setminus X|},\ \ \ \ v_2, p_2\in\F^{|S|-1},$$
and `$*$'marked submatrices can be arbitrary. 
From the above structure of $A$, observe that $$A[X, \comp{X}]=u_1\cdot (v_1 \mid 1 \mid v_2)^T \text{ and } A[\comp{X},X]=(p_1 \mid 1\mid p_2)\cdot q_1^T.$$  Therefore, $X$ is a cut in $A$.
\end{proof}

Now, we get back to show the relationship between the minimal cuts of two PME matrices. First, we handle the case when the size of the minimal cut of $A$ is two.

\begin{lemma}\label{lem:cutsizetwo}
    Let $A$ and $B$ be two $n\times n$ matrices over field $\F$ with nonzero off-diagonal entries
    such that $A\PME B$ and $S=\{s_1,s_2\}$ is a cut in $A$ of size $2$. 
    Then either $S$ is a cut in $B$ or 
    for each $i \in \{1,2\}$, $B$ has a cut $X_i$, defined as
    \[X_i=\{t\in \comp{S}\mid A[S+t]\DE B[S+t]\}\cup \{s_i\},\]
    and $S$ is a cut in $\tw(B, X_i)$.
\end{lemma}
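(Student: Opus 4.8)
The plan is to analyze the structure of $A$ and $B$ relative to the size-two set $S=\{s_1,s_2\}$. Since $A\PME B$ and $S$ is a cut in $A$, I would first understand what principal minor equivalence forces on the submatrix structure of $B$. The key device is the $4\times 4$ case: for each $t\in\comp S$, the set $S$ is a cut in $A[S+t+t']$ for suitable extra indices, but more directly I would look at $A[S+t]$ versus $B[S+t]$, which are $3\times 3$ PME matrices, hence diagonally equivalent by \cref{lem:noCut}. The sets $X_i$ in the statement partition $\comp S$ according to whether, for $t\in\comp S$, the $3\times 3$ matrix $A[S+t]$ is diagonally similar to $B[S+t]$ (put $t$ in $X_i$ together with $s_i$) or only to its transpose. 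The heart of the argument is to show these two ``types'' of rows/columns of $B$ (those aligned with $A$ and those aligned with $A^T$ on the $S$-block) organize $B$ into the block pattern that makes $X_i$ a cut.

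Concretely, I would set up coordinates so that $A[S,\comp S]=p\,q^T$ and $A[\comp S,S]=u\,v^T$ with $p,u\in\F^2$ (both entries nonzero since off-diagonals are nonzero), and write out $B[S,\comp S]$ and $B[\comp S,S]$ column/row by column/row. For a fixed $t$, comparing $3\times 3$ principal minors of $A[S+t]$ and $B[S+t]$ pins down the two entries $B[s_1,t],B[s_2,t]$ and $B[t,s_1],B[t,s_2]$ up to the diagonal-similarity freedom; the $\det$ of the full $A[S+t]$ being equal is the crucial extra equation that, combined with the three $2\times 2$ principal minors, forces either the ``same orientation'' or the ``transposed orientation'' relation between the pair $(B[s_1,t],B[s_2,t])$ and $(A[s_1,t],A[s_2,t])$ — this is essentially \cref{lem:size4Matrix} specialized, or can be redone by hand. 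Then I would argue: if \emph{every} $t\in\comp S$ has the ``same orientation'' then $S$ itself is a cut in $B$ (the first alternative). Otherwise, letting $X_i$ be as defined, I must show $B[X_i,\comp{X_i}]$ and $B[\comp{X_i},X_i]$ both have rank $\le 1$. For this I would use PME on $4\times 4$ principal submatrices $B[S+t+t']$ where $t$ is a ``same'' index and $t'$ an ``opposite'' index, extracting that the relevant $2\times 2$ blocks of $B$ across the $X_i/\comp{X_i}$ divide are forced to be rank one; crucially, once rank-one-ness is known, applying cut-transpose $\tw(B,X_i)$ swaps the roles so that $S$ becomes a cut, which is a direct computation using \cref{thm:cutmap} (note $S\Delta X_i$ relates to $\comp S$ or $S$ appropriately, since $S\cap X_i=\{s_i\}$).

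I expect the main obstacle to be the combinatorial bookkeeping showing $B[X_i,\comp{X_i}]$ has rank $\le 1$: this requires simultaneously controlling the entries $B[s_i,t]$ for $t$ outside $X_i$, the entries $B[t,t']$ for $t\in X_i\setminus\{s_i\}$ and $t'\notin X_i$, and the interaction with $s_j$ ($j\ne i$). The equalities come from $\det$ of various $3\times 3$ and $4\times 4$ principal submatrices of $B$ (equal to those of $A$, whose structure we fully know from $S$ being a cut), and the challenge is to pick the \emph{right} small submatrices so that the system of minor equations cleanly yields proportionality of all the relevant vectors. A secondary subtlety is handling degeneracies (e.g.\ when $\comp S$ is small, or when some of the forced entries vanish despite off-diagonals of $A,B$ being nonzero — which can happen for non-diagonal entries of products like $pq^T$ only if... actually it cannot, so off-diagonals of $A,B$ nonzero keeps things generic), and verifying the final step $\tw(B,X_i)$ has $S$ as a cut via \cref{thm:cutmap} part~2 with $T=X_i$, $S$ there playing the role of $X_i$ here — I would state this last translation carefully to avoid index confusion. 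Once rank-one-ness is established the rest is mechanical.
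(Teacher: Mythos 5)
Your plan follows essentially the same route as the paper's proof: partition $\comp S$ into the indices $t$ for which $A[S+t]\DE B[S+t]$ (versus $\DE B[S+t]^T$), observe that if one class is empty then $S$ itself is a cut in $B$, and otherwise extract rank-one constraints across the resulting partition from PME of the $4\times 4$ submatrices $B[\{s_1,s_2,s,t\}]$ with $s,t$ of opposite types (the paper does this by invoking \cref{lem:size4Matrix} to show each such $B[\{s_1,s_2,s,t\}]$ has both $\{s_1,s\}$ and $\{s_1,t\}$ as cuts, which directly yields the proportionality of rows and columns making $X_1,X_2$ cuts of $B$), then apply \cref{thm:cutmap} with the two cuts $X_1, X_2$ of $B$ to get $X_1\Delta X_2=S$ as a cut of $\tw(B,X_i)$. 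The only slight slip is in your final appeal to \cref{thm:cutmap}: one should take the lemma's cut to be $X_i$ and the set $T$ to be the \emph{other} cut $X_j$ (so $T\Delta X_i=S$), which requires having established that both $X_1$ and $X_2$ are cuts of $B$, as the paper does.
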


\begin{proof}
Without loss of generality, let $S = \{1,2\}$. 
Let $3 \leq t \leq n$. 
Since $A\PME B$, it follows that $A[\{1,2,t\}]\PME B[\{1,2,t\}]$. 
From~\cref{lem:noCut}, we have $A[\{1,2,t\}]\dge B[\{1,2,t\}]$.
Hence, there exists a diagonal matrix $D_t$ with $D_t[1,1]=1$ such that 
\[D_t A[\{1,2,t\}] D_t^{-1} = B[\{1,2,t\}] \text{ or} \] 
\[ D_t A[\{1,2,t\}] D_t^{-1} = B[\{1,2,t\}]^T . \] 
For any $t$ for which the former condition holds, we will have
\begin{eqnarray}
    \frac{B[1,t]}{B[2,t]}  &= \frac{A[1,t]A[2,1]}{A[2,t]B[2,1]}
    &= \frac{A[1,3]A[2,1]}{A[2,3]B[2,1]} \text{ and }     \label{eq:BcAc} \\
    \frac{B[t,1]}{B[t,2]}  &= \frac{A[t,1]A[1,2]}{A[t,2]B[1,2]} 
    &= \frac{A[3,1]A[1,2]}{A[3,2]B[1,2]} .
    \label{eq:BrAr}
\end{eqnarray}
The last two equalities hold because $\{1,2\}$ is a cut in $A$.
For any $t$ for which the later condition holds, we will have
\begin{eqnarray}
    \frac{B[1,t]}{B[2,t]}  &= \frac{A[t,1]A[1,2]}{A[t,2]B[2,1]} 
    &= \frac{A[3,1]A[1,2]}{A[3,2]B[2,1]}  \text{ and }     \label{eq:BcAr}  \\
    \frac{B[t,1]}{B[t,2]}  &= \frac{A[1,t]A[2,1]}{A[2,t]B[1,2]}
    &= \frac{A[1,3]A[2,1]}{A[2,3]B[1,2]}. 
    \label{eq:BrAc}
\end{eqnarray}
If equations (\ref{eq:BcAc}) and (\ref{eq:BrAr})
hold for every $3 \leq t \leq n$,  
or if equations (\ref{eq:BcAr}) and (\ref{eq:BrAc})
hold for every $3 \leq t \leq n$,  then 
$\{1,2\}$ will be a cut of $B$. 

Suppose that is not true. 
It follows that 
\[\frac{A[1,3]A[2,1]}{A[2,3]} \neq \frac{A[3,1]A[1,2]}{A[3,2]} .\]
Let $P \subseteq \{3,4, \dots, n\}$ be the set of indices 
for which equations (\ref{eq:BcAc}), (\ref{eq:BrAr})
hold and let $Q := \{3,4, \dots, n\} \setminus P$ be the set of indices 
for which equations (\ref{eq:BcAr}), (\ref{eq:BrAc})
hold.

We will show that $P \cup \{1\}$ is a cut in $B$. 
Consider two indices $s \in P$ and $t \in Q$. 
Consider the set $T = \{1,2, s, t\}$. 
Since equations (\ref{eq:BcAc}) and (\ref{eq:BrAr})
hold for $s$ and do not hold for $t$,
we have that 
\[\frac{B[1,t]}{B[2,t]}  \neq \frac{B[1,s]}{B[2,s]} \text{ or }
\frac{B[t,1]}{B[t,2]}  \neq \frac{B[s,1]}{B[s,2]} .
\]
Hence, $\{1,2\}$ is not a cut in $B[T]$ and $B[T] \not \DE A[T]$.
But, we have that $A[T] \PME B[T]$. 
Hence, there must be a cut in $B[T]$ (\cref{lem:size4Matrix}) 
In fact, $B[T]$ will have more than one cut. 
Because if $B[T]$ has a unique cut, say $\{1,t\}$,
then that will also be a unique cut of $A[T]$ (\cref{lem:size4Matrix}).
But, $A[T]$ has a cut $\{1,2\}$. 

So, we conclude that $B[T]$ has cuts $\{1,s\}$ and $\{1,t\}$. 
Hence, we can write 
\begin{eqnarray*}
        B[s,t]/B[1,t] &=&  B[s,2]/B[1,2] \text{ and } \\
        B[t,s]/B[2,s] &=&  B[t,1]/B[2,1] 
\end{eqnarray*}
Using these equations for every $s \in P$ and every $t \in Q$, 
we get that $X = P \cup \{1\}$ is a cut in $B$.  Similarly, we can show that $X'=P\cup \{2\}$ is a cut in $B$. From \cref{thm:cutmap}, $X\Delta X'=\{1,2\}$ is a cut of $\tw(B,X)$ and $\tw(B,X')$. 
\end{proof}

In the following lemma, we show that a minimal cut of $A$ of size greater than two is also a cut of $B$.

\begin{lemma}
\label{lem:Greaterthan2commoncut}
Let $A$ and $B$ be two $n\times n$ matrices over $\F$ with nonzero off-diagonal entries. Let $A\PME B$, and $S\subseteq n$ be a minimal cut in $A$ of size greater than two. Then,  $S$ is also a cut in $B$. 
\end{lemma}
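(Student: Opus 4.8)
The plan is to mirror the argument of \cref{lem:cutsizetwo}, but exploit the fact that $|S| \geq 3$ to rule out the ``cut-transpose'' alternative and conclude directly that $S$ is a cut of $B$. Let $S$ be a minimal cut of $A$ with $|S| = k \geq 3$, write $S = \{s_1, \dots, s_k\}$, and fix an arbitrary $t \in \comp{S}$. The first step is to consider the $(k+1) \times (k+1)$ submatrix $A[S+t]$. By \cref{lem:minCutGeneral} (the ``in particular'' clause applied with $T = \{t\}$), the matrix $A[S+t]$ has \emph{no cut}. Since $A \PME B$ implies $A[S+t] \PME B[S+t]$, \cref{lem:noCut} gives $A[S+t] \dge B[S+t]$: there is an invertible diagonal matrix $D_t$ with $D_t \, A[S+t] \, D_t^{-1}$ equal to $B[S+t]$ or to $B[S+t]^T$.

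\textbf{Handling the transpose ambiguity.} The first real issue is that the choice of ``transpose or not'' in $A[S+t] \dge B[S+t]$ could, a priori, depend on $t$. I would handle this by comparing two different indices $t_1, t_2 \in \comp{S}$ simultaneously. If $|\comp{S}| \geq 2$, pick $t_1 \neq t_2$ in $\comp{S}$ and look at $A[S \cup \{t_1, t_2\}] \PME B[S \cup \{t_1, t_2\}]$. The key claim is that $A[S \cup \{t_1, t_2\}]$ still has no cut except possibly one not properly containing $S$: by \cref{lem:minCutGeneral} with $T = \{t_1, t_2\}$, any cut $X$ of $A[S \cup \{t_1, t_2\}]$ satisfies $S \subseteq X$ or $S \subseteq \widetilde{X}$, which (since $|\comp{S}\cap (S\cup\{t_1,t_2\})| = 2$) forces $X = S$, $X = S+t_1$, $X = S+t_2$, or their complements within $S \cup \{t_1,t_2\}$ --- but $X = S$ and its complement $\{t_1,t_2\}$ are ruled out because $S$ is a cut of $A$ (so $A[S,\comp S]$ and $A[\comp S, S]$ have rank $1$, but that does not force rank $1$ on the restriction to two columns... actually I need to be more careful). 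The cleaner route: since $A[S+t_i] \dge B[S+t_i]$ via some $D_{t_i}$, and $A[S] \dge B[S]$ via the restriction, the diagonal matrices on the common block $S$ must be compatible up to scaling; if for $t_1$ we are in the ``no transpose'' case and for $t_2$ in the ``transpose'' case, then on $A[S]$ we would need both $A[S] \DE B[S]$ and $A[S] \DE B[S]^T$ with compatible diagonal matrices, which — combined with the fact that $A$ has nonzero off-diagonal entries — forces $B[S]$ to be diagonally similar to its own transpose in a restricted way; I would argue this is consistent only when $S$ is a cut of $B$ anyway, or derive a contradiction with minimality. I expect \textbf{this compatibility/transpose-consistency argument to be the main obstacle}, and it may instead be cleaner to first prove it assuming all $t$ give the ``no transpose'' case (after possibly replacing $B$ by $B^T$, which is legitimate since transposing preserves both PME and the cut structure), then show the transpose case cannot occur separately for $|S| \geq 3$.

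\textbf{Assembling the cut of $B$.} Once we know that for every $t \in \comp{S}$ we have $D_t \, A[S+t] \, D_t^{-1} = B[S+t]$ with $D_t$ normalized so $D_t[s_1, s_1] = 1$, the restrictions to the block $S$ all agree (they all equal the unique diagonal matrix $D$ with $D A[S] D^{-1} = B[S]$ and $D[s_1,s_1]=1$, using that $A[S]$ is irreducible because its off-diagonal entries are nonzero and $|S|\geq 2$). So each $D_t$ extends $D$. Then for $i \in S$ and $t \in \comp{S}$, $B[i,t] = D[i] \, A[i,t] \, D_t[t]^{-1}$, so the column $B[S, t]$ is $D|_S$ entrywise times $A[S,t]$ times a scalar $D_t[t]^{-1}$; since $A[S, \comp S]$ has rank $1$ (as $S$ is a cut of $A$), $A[S,t] = p \cdot q_t$ for a fixed vector $p$ and scalars $q_t$, hence $B[S,t] = (D|_S \odot p)\, q_t D_t[t]^{-1}$, which shows $B[S, \comp S]$ has rank $\leq 1$. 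The symmetric computation with rows gives $\rank B[\comp S, S] \leq 1$. Since $S$ has the right size, $S$ is a cut of $B$. I would finish by noting the edge case $|\comp{S}| = 2$: here $|S| \geq 3$ already means $n \geq 5$, and the argument above with a single extra index still works since we only need consistency of the $D_t$ on $S$, which comes from the uniqueness of the diagonal similarity on the irreducible block $A[S]$.
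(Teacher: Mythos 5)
Your overall plan — use \cref{lem:minCutGeneral} to see that $A[S+t]$ is cut-free, invoke \cref{lem:noCut} to get $A[S+t]\dge B[S+t]$, and then assemble a global diagonal witness for the rank-one structure of $B[S,\comp S]$ and $B[\comp S,S]$ — is a reasonable starting point, and your "assembling the cut" paragraph is correct \emph{once the diagonal data is coherent}. But the transpose ambiguity that you flag as "the main obstacle" is a genuine gap, and you never close it. The issue is real: $A[S+t]\dge B[S+t]$ only gives that $A[S+t]\DE B[S+t]$ \emph{or} $A[S+t]\DE B[S+t]^T$, and a priori the choice can flip as $t$ varies whenever $B[S]$ happens to be diagonally similar to its own transpose (a condition that can certainly hold for irreducible matrices). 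In that case the restricted diagonal witnesses $D_{t}|_S$ need not agree (one conjugates $A[S]$ to $B[S]$, the other to $B[S]^T$), and your computation $B[S,t]\propto D|_S\odot A[S,t]$ breaks down for the "transposed" indices $t$: for those you get $B[S,t]\propto D'_S{}^{-1}\odot A[t,S]^T$ instead, and $A[S,t]$ and $A[t,S]^T$ are generally not parallel. Showing that these two vectors \emph{are} proportional (equivalently, that the transpose choice is consistent) is essentially equivalent to the statement you are trying to prove, so the argument as sketched is circular. Your stated hope — "derive a contradiction with minimality" or "show the transpose case cannot occur separately" — is left entirely unproven.

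The paper's proof avoids this problem by never reasoning about the diagonal witnesses. Instead, for a fixed $s\in\comp S$ and each $t\in\comp{S+s}$, it proves the transpose-invariant statement that $\{s,t\}$ is a cut of $B[S\cup\{s,t\}]$. The mechanism is to apply \cref{lem:cutsizetwo} to the pair $\bigl(A[S\cup\{s,t\}],\,B[S\cup\{s,t\}]\bigr)$ with the size-two cut $\{s,t\}$: either $\{s,t\}$ is already a cut of $B[S\cup\{s,t\}]$, or $B[S\cup\{s,t\}]$ has a cut $X$ containing $s$ but not $t$; in the latter case, since $|S\cup\{s,t\}|\geq 5$, one of $X-s$ or $\widetilde X-t$ has size $\geq 2$ and would be a cut of $B[S+t]$ or $B[S+s]$ respectively — contradicting the fact (via \cref{lem:minCutGeneral} and \cref{lem:noCut}) that those submatrices are cut-free. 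Having $\{s,t\}$ a cut of $B[S\cup\{s,t\}]$ for all $t$ immediately gives that every column $B[S,t]$ is a nonzero multiple of $B[S,s]$ and every row $B[t,S]$ is a nonzero multiple of $B[s,S]$, so $S$ is a cut of $B$. Notice that your first, abandoned line of attack — applying \cref{lem:minCutGeneral} with $T=\{t_1,t_2\}$ and classifying the cuts of $A[S\cup\{t_1,t_2\}]$ — was actually heading in this direction: the only cut of $A[S\cup\{t_1,t_2\}]$ is $\{t_1,t_2\}$ (your worry that "$S$ is ruled out" is mistaken; $\{t_1,t_2\}$ is indeed a cut there since $S$ is a cut of $A$). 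Pushing that through on the $B$ side, rather than switching to the diagonal-witness argument, is the way to make the proof go.
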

\begin{proof}
Let $s\in\comp{S}$. We show that for all $t\in \comp{S+s}$, the set $T_t:=\{s,t\}$ is a cut in $B[S+T_t]$. This will imply that $$B[S, t]=\alpha\cdot B[S, s] \text{ and } B[t, S]=\beta\cdot B[s, S]$$ for some non-zero $\alpha,\beta\in\F$. Hence, $S$ is a cut in $B$.

Since $S$ is a minimal cut in $A$ of size greater than two, from~\cref{lem:Blowerhalf}, there are no cuts in both the matrices $A[S+s]$ and $A[S+t]$. We have that $A\PME B$. Therefore, applying~\cref{lem:noCut}, $A[S+s]\dge B[S+s]$ and $A[S+t]\dge B[S+t]$. This implies that both $B[S+s]$ and $B[S+t]$ have no cuts. 

For the sake of contradiction, assume that $T_t$ is not a cut in $B[S+T_t]$. Note that $T_t$ is a cut in $A[S+T_t]$ of size two. Then, from~\cref{lem:cutsizetwo}, there exists a cut $X\subseteq S+T_t$ in the matrix $B[S+T_t]$ such that $s\in X$ but $t\notin X$. Since $|S+T_t|\geq 5$, either $|X|> 2$ or the size of $\widetilde X:=(S+T_t)\setminus X$ is greater than $2$. If $|X|>2$, then $X-s$ is a cut in $B[S+t]$ as $X$ is a cut of $B[S+T_t]$. Otherwise, $\widetilde X-t$ is a cut in $B[S+s]$. In both the cases, we have contradictions. Thus, $T_t$ is a cut in $B[S+T_t]$ for all $t\in \comp{S+s}$. This completes our proof.
\end{proof}

Let $S$ be a minimal cut of $A$. If $S$ is of size greater than two, then from \cref{lem:Greaterthan2commoncut}, $S$ is also a cut of $B$. Otherwise, if $|S| = 2$ and $S$ is not a cut of $B$, then from \cref{lem:cutsizetwo}, there exists a cut $X$ in $B$ such that $S$ is a cut of $\tw(B,X)$. Hence, from now on, we can assume that $A$ has a minimal cut $S$, which is also a cut of $B$.

Now, we go to the inductive step. Since $A\PME B$,  any principal submatrix of $A$ and the corresponding principal submatrix of $B$ are also principal minor equivalent. We fix one principal submatrix corresponding to set $\comp{S}+t$ for some $s$ in $S$ and try to get a cut sequence for $A$ and $B$ using the cut sequence for $A[\comp{S}+s]$ and $B[\comp{S}+s]$, which we get from induction hypothesis. For this, we show \cref{cl:smallToBig}. Before that, we state the following observation.

\begin{observation} \label{obs:cutTwist}
    Let $A$ be an $n\times n$ matrix with non-zero off-diagonal entries. Let $S\subset [n]$ and $X\subset S$ such that $X$ is a cut of $A[S]$. 
    Then,\begin{enumerate}
        \item If $X$ is a cut of $A$, then  $\tw(A[S],X) = \tw(A,X)[S]$.
        \item If $\comp{S}+X$ is a cut of $A$, then $\tw(A[S],X)=\tw(A,\comp{S}+X)[S]$
    \end{enumerate} 
\end{observation}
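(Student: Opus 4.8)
The plan is to prove both parts by writing out the block decomposition of $A$ induced simultaneously by the cut and by the set $S$, reading off the explicit rank-one factorizations of the off-diagonal blocks forced by \cref{def:twist-operation}, and then comparing the two sides block by block. Part~1 is the base case, and part~2 I would derive from it by a transpose trick, so that I never have to redo the block bookkeeping twice.

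For part~1, assume $X$ is a cut of $A$ and $X\subseteq S$, so $S\setminus X\subseteq\comp{X}$. Since every entry of the blocks $A[X,\comp{X}]$ and $A[\comp{X},X]$ is off-diagonal, hence nonzero, the ``first nonzero row/column'' in \cref{def:twist-operation} is just the row/column indexed by $\min X$. Write $A[X,\comp{X}]=p\,q^T$, $A[\comp{X},X]=u\,v^T$ with $q^T=A[\min X,\comp{X}]$ and $u=A[\comp{X},\min X]$ (so $p_{\min X}=v_{\min X}=1$). The point is that the factorizations used to compute $\tw(A[S],X)$ are obtained by just restricting $q$ and $u$ to the coordinates of $S\setminus X$, keeping the same $p,v$, because the first nonzero row of $A[X,S\setminus X]$ and the first nonzero column of $A[S\setminus X,X]$ are again indexed by $\min X$. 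By \cref{def:twist-operation} the four blocks of $\tw(A[S],X)$ are then $A[X]$, $p\,(u|_{S\setminus X})^T$, $(q|_{S\setminus X})\,v^T$, $A[S\setminus X]^T$; restricting the blocks $A[X]$, $p\,u^T$, $q\,v^T$, $A[\comp{X}]^T$ of $\tw(A,X)$ to rows and columns in $S$ produces exactly these same four blocks (using $S\setminus X\subseteq\comp{X}$ for the last one). Hence $\tw(A[S],X)=\tw(A,X)[S]$.

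For part~2, set $Y:=\comp{S}+X=\comp{S}\cup X$, a cut of $A$, with $\comp{Y}=S\setminus X$. I would use the identity $\tw(A^T,\comp{Z})\DE\tw(A,Z)$ for any cut $Z$ (already invoked in the proof of \cref{thm:cutmap}), together with the facts that $Z$ is a cut of $A$ iff $\comp{Z}$ is, and that the set of cuts is invariant under transpose. Applying this with $Z=Y$ gives $\tw(A,\comp{S}+X)\DE\tw(A^T,S\setminus X)$; moreover $S\setminus X$ is a cut of $A^T$ (since $\comp{S}\cup X$ is a cut of $A$) and is a cut of $A^T[S]=A[S]^T$ (since $X$, hence its complement $S\setminus X$ inside $S$, is a cut of $A[S]$). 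So part~1, applied to $A^T$ with the cut $S\setminus X\subseteq S$, yields $\tw(A[S]^T,S\setminus X)=\tw(A^T,S\setminus X)[S]$, and applying the transpose identity once more inside the index set $S$ gives $\tw(A[S]^T,S\setminus X)\DE\tw(A[S],X)$. Chaining the three relations produces $\tw(A[S],X)\DE\tw(A,\comp{S}+X)[S]$, and one pins down equality by selecting, in $\tw(A,Y)$, the rank-one factorizations of $A[Y,\comp{Y}]$ and $A[\comp{Y},Y]$ whose distinguished row/column is the one indexed by $\min X\in X\subseteq Y$ (rather than $\min Y$).

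The step I expect to be the main obstacle is exactly this bookkeeping of the rank-one decompositions: \cref{def:twist-operation} fixes a specific decomposition via the ``first nonzero row/column'' rule, so one must check that restriction to $S$ is compatible with that rule. For part~1 it is, because $\min X$ stays the distinguished index; for part~2 the distinguished index of $A[Y,\comp{Y}]$ is $\min Y$, which may lie in $\comp{S}$ rather than in $X$, and then the identity holds only up to diagonal similarity (by \cref{remark:twist}, which is all that is needed downstream) unless one re-selects the decomposition as above. Everything else is routine block-matrix algebra with the structure dictated by the two cuts.
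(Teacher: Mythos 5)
The paper states \cref{obs:cutTwist} without proof, so there is no reference argument to compare against; your proposal fills a genuine gap. Your part~1 is correct and exactly matches what a direct block computation gives: with all off-diagonal entries nonzero, the ``first nonzero row/column'' of $A[S][X,S\setminus X]$ and $A[S\setminus X,X]$ is the one indexed by $\min X$, i.e.\ the restriction of the vectors $q,u$ used for $\tw(A,X)$, while $p,v$ are unchanged; comparing the four blocks gives literal equality, including the observation that $\bigl(A[\comp X]^T\bigr)[S\setminus X]=A[S\setminus X]^T$.

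Your diagnosis of part~2 is the substantive point. With $Y:=\comp S\cup X$, the distinguished index used by $\tw(A,Y)$ is $\min Y$, which can lie in $\comp S$ rather than in $X$. Writing $A[Y,\comp Y]=p'q'^T$, $A[\comp Y,Y]=u'v'^T$ with the paper's normalization $p'_{\min Y}=v'_{\min Y}=1$, and $A[X,S\setminus X]=\tilde p\tilde q^T$, $A[S\setminus X,X]=\tilde u\tilde v^T$ with $\tilde p_{\min X}=\tilde v_{\min X}=1$, one gets $\tilde p\tilde u^T=\tfrac{v'_{\min X}}{p'_{\min X}}(p'|_X)(u')^T$ and $\tilde q\tilde v^T=\tfrac{p'_{\min X}}{v'_{\min X}}q'(v'|_X)^T$. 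So $\tw(A[S],X)$ and $\tw(A,Y)[S]$ agree only up to conjugation by $\diag\bigl(\lambda I_X,\;I_{S\setminus X}\bigr)$ with $\lambda=v'_{\min X}/p'_{\min X}$, and $\lambda=1$ exactly when $\min Y\in X$. This means item~2 of \cref{obs:cutTwist}, read with the paper's canonical decomposition from \cref{def:twist-operation}, is in general a diagonal-similarity statement rather than a literal equality; your last sentence about ``pinning down equality by re-selecting the factorization'' amounts to replacing the canonical $\tw$ by a diagonally similar variant, which establishes $\DE$ (and is covered by \cref{remark:twist}) but not the ``$=$'' as printed. Fortunately this looseness is harmless downstream: in \cref{cl:smallToBig} the observation's conclusion is only used to push a chain of equalities through to the final conclusion $A_k[\comp S+s]\dge B[\comp S+s]$, and each step commutes with diagonal similarity, so $\DE$ suffices. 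In short: part~1 is right, part~2 should either be stated up to $\DE$ or be proven as you do and then weakened; either way your approach is sound and exposes a small imprecision in the observation as stated.

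One stylistic remark: your transpose trick for part~2 is valid (the chain $\tw(A[S],X)\DE\tw(A[S]^T,S\setminus X)=\tw(A^T,S\setminus X)[S]\DE\tw(A,Y)[S]$ using part~1 on $A^T$, with the middle equality exact by part~1 and the outer two $\DE$'s coming from the transpose identity), but a direct block computation for part~2 with $Y$ in place of $X$ is hardly longer and makes the scaling factor $\lambda$ explicit, which is what one actually needs to see why the ``$=$'' can fail.
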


\begin{claim} \label{cl:smallToBig}
   Let $A$ and $B$ be two $n\times n$ matrices with non-zero off-diagonal entries and a common cut $S\subset [n]$  such that $A\PME B$. Let $s\in S$ and $A[\comp{S}+s]$ and  $B[\comp{S}+s]$ be cut-transpose equivalent with respect to cut sequence $( \Tilde{X}_1,\Tilde{X}_2,\dots, \Tilde{X}_{k})$.  Let $A_0=A$ and $X_i=\begin{cases}
       \Tilde{X}_i\cup S & \text{ if } s\in \Tilde{X}_i\\
       \Tilde{X}_i & \text{ otherwise}
   \end{cases}.$  Then,
   \begin{enumerate}
       \item     For each $i\in [k], X_i$ and $S$ are cuts of $A_{i-1}$ where $A_i=\tw(A_{i-1},X_i)$ and $A_k[\comp{S}+s]\DE B[\comp{S}+s]$.
        \item If $S$ is a minimal cut of $A$, then $S$ is also a minimal cut of $A_i$ for each $i\in [k]$. 
   \end{enumerate}
\end{claim}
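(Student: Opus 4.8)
## Proof Proposal for Claim~\ref{cl:smallToBig}

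The plan is to prove both parts simultaneously by induction on $i$, tracking a precise invariant: after applying the first $i$ cut-transposes, the matrix $A_i$ has $S$ as a (minimal) cut, and $A_i[\comp{S}+s]$ equals the $i$-th matrix $\tilde A_i$ in the cut-transpose sequence relating $A[\comp{S}+s]$ and $B[\comp{S}+s]$. The base case $i=0$ is immediate since $A_0 = A$ and $S$ is a common cut (minimal in $A$ by hypothesis for part~2). For the inductive step, assume $S$ is a cut of $A_{i-1}$ and $A_{i-1}[\comp{S}+s] = \tilde A_{i-1}$, and that $\tilde X_i$ is a cut of $\tilde A_{i-1} = A_{i-1}[\comp{S}+s]$. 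I need to show $X_i$ is a cut of $A_{i-1}$, that $S$ remains a cut of $A_i = \tw(A_{i-1}, X_i)$, and that $A_i[\comp{S}+s] = \tw(\tilde A_{i-1}, \tilde X_i) = \tilde A_i$.

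The first sub-step — showing $X_i$ is a cut of $A_{i-1}$ — splits into the two cases in the definition of $X_i$. If $s \notin \tilde X_i$, then $X_i = \tilde X_i \subseteq \comp{S}$, and since $\tilde X_i$ is a cut of $A_{i-1}[\comp{S}+s]$ with $s \in S$ playing the role of the extra index $t$, I can apply \cref{lem:Blowerhalf} (with the roles of $S$ and $\comp{S}$ swapped, using that the cut set of a matrix equals that of its transpose and $\comp{S}$ is a cut whenever $S$ is) to conclude $X_i$ is a cut of $A_{i-1}$. If $s \in \tilde X_i$, then $X_i = \tilde X_i \cup S$; here I instead look at the complement $\comp{X_i} = \comp{S} \setminus \tilde X_i$ inside $\comp{S}+s$, which is a cut of $A_{i-1}[\comp{S}+s]$ not containing $s$, apply \cref{lem:Blowerhalf} to get that it is a cut of $A_{i-1}$, and then $X_i$ is a cut since the cut property is closed under complementation. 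The second sub-step — that $S$ stays a cut of $A_i$ — follows from \cref{thm:cutmap}: I need to check that $S$ and $X_i$ are in the configuration of part~1 of that lemma (one contained in the other or in a complement), and indeed either $X_i = \tilde X_i \subseteq \comp{S}$ or $S \subseteq X_i = \tilde X_i \cup S$, so $S$ is a cut of $\tw(A_{i-1}, X_i) = A_i$. The third sub-step — the restriction identity $A_i[\comp{S}+s] = \tw(\tilde A_{i-1}, \tilde X_i)$ — is exactly what \cref{obs:cutTwist} is designed for: case $s \notin \tilde X_i$ uses part~1 of the observation (with $X_i \subseteq \comp{S}+s$ a cut of both $A_{i-1}$ and $A_{i-1}[\comp{S}+s]$), and case $s \in \tilde X_i$ uses part~2 (since then $X_i = \tilde X_i \cup S = (\comp{S}+s)$-complement-of-$\tilde X_i$-complemented... more precisely $\comp{S}+\text{(something)}$ matches the hypothesis of part~2 after identifying $\comp{S}+s$ with the ``$S$'' of the observation and $\tilde X_i \cap (\comp{S}+s)$ with the small cut). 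After running the induction to $i=k$, part~1's final assertion $A_k[\comp{S}+s] \DE B[\comp{S}+s]$ follows since $\tilde A_k \dge B[\comp{S}+s]$ by definition of cut-transpose equivalence — though I should be slightly careful here about $\dge$ versus $\DE$; the claim states $\DE$, so I may need to absorb a possible transpose into the statement or note that the relevant restriction happens to land in the diagonally-similar case, which is consistent with how this claim gets used downstream.

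For part~2, minimality: I need that if $S$ is a minimal cut of $A_{i-1}$ then it is minimal in $A_i$. Suppose $Y \subsetneq S$ were a cut of $A_i = \tw(A_{i-1}, X_i)$. Since $Y \subseteq S$ and $S$, $X_i$ are in the part-1 configuration of \cref{thm:cutmap} relative to each other, I want to transfer $Y$ back through the cut-transpose. The cleanest route: $Y \subseteq S$, and either $Y \subseteq \comp{X_i}$ (when $X_i \subseteq \comp S$) or $Y \subseteq X_i$ (when $S \subseteq X_i$), so $Y$ or $\comp Y$ is a subset of $X_i$ or $\comp{X_i}$; then part~1 of \cref{thm:cutmap} gives that $Y$ is a cut of $\tw(A_i, X_i) = A_{i-1}$ (using that cut-transpose is an involution, \cref{remark:twist}), contradicting minimality of $S$ in $A_{i-1}$.

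The main obstacle I anticipate is the bookkeeping in the second and third sub-steps of the inductive step: verifying that the set-theoretic configuration of $S$ relative to $X_i$ always lands in the ``good'' case~1 of \cref{thm:cutmap}, and matching the precise hypotheses of \cref{obs:cutTwist} (which ``$S$'', which cut, whether we are in its case~1 or case~2) in the two branches $s \in \tilde X_i$ versus $s \notin \tilde X_i$. There is also a subtlety in the base case of part~2 and in threading $\DE$ versus $\dge$ through the induction — cut-transpose equivalence is defined up to $\dge$ at the last step only, and the internal matrices are related by genuine cut-transposes, so the restriction identities must be stated with $=$ (or $\DE$, absorbing \cref{remark:twist}'s diagonal-similarity ambiguity) rather than $\dge$ at intermediate stages, and I should make sure the final $\DE$ in part~1 is justified by how $s$ sits relative to the last twist. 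None of these is deep, but getting the quantifiers and the ``which set plays which role'' exactly right is where the care goes.
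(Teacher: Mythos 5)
Your proposal is correct and follows essentially the same route as the paper's own proof: in both, the case $s \notin \tilde X_i$ uses \cref{lem:Blowerhalf} directly and part~1 of \cref{obs:cutTwist}, the case $s \in \tilde X_i$ passes to the complement $(\comp{S}+s)\setminus \tilde X_i$ before applying \cref{lem:Blowerhalf} and then uses part~2 of \cref{obs:cutTwist}, and both persist $S$ as a (minimal) cut via part~1 of \cref{thm:cutmap} plus the involution property. The $\DE$-versus-$\dge$ subtlety you flag is real — the paper's own proof, like yours, only delivers $A_k[\comp S+s] \dge B[\comp S+s]$, and indeed the main text immediately restates the conclusion as $\dge$ and absorbs the possible transpose by replacing $B$ with $B^T$; so the $\DE$ in the claim's wording should be read as $\dge$ (one small note: in the $s\notin\tilde X_i$ branch, \cref{lem:Blowerhalf} applies verbatim with $t=s$ and no role-swapping, so that part of your argument is simpler than you feared).
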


\begin{proof}
  Given that there exists a sequence of matrices $(A[\comp{S}+s]=\Tilde{A}_0, \Tilde{A}_1,\dots,\Tilde{A}_{k})$ such that $\Tilde{A}_{i}=\tw(\Tilde {A}_{i-1},\Tilde{X}_i)$ where $\Tilde{X}_i$ is a cut of $\Tilde{A}_{i-1}$ and $\Tilde{A}_k\DE B[\comp{S}+s]$ or $B[\comp{S}+s]^T.$ 
    If $s\in \Tilde{X}_1$, then  $X_1=\Tilde{X}_1\cup S$. Since $S$ is a cut of $A=A_0$ and $(\comp{S}+s)\setminus \Tilde{X}_1$ is a cut of $A[\comp{S}+s]$, from \cref{lem:Blowerhalf}, $(\comp{S}+s)\setminus \Tilde{X}_1$ is a cut of $A_0$ which in turn implies its complement, that is, $X_1$ is a cut of $A_0$. 
    Since $X_1= \comp{\comp{S}+s}+ \Tilde{X}_1,$ from \cref{obs:cutTwist}, $\tw(A_0,X_1)[\comp{S}+s]=\tw(\Tilde{A}_0,\Tilde{X}_1)$.
    Note that in this case, $S\subseteq X_1$.
    
    If $s\notin \Tilde{X}_1$, then $X_1=\Tilde{X}_1$. Note that from \cref{lem:Blowerhalf}, $X_1$ is also a cut of $A_0$ and $X_1\subset \comp{S}$. 
    Hence from \cref{obs:cutTwist}, $\tw(A_0,X_1)[\comp{S}+s]=\tw(\Tilde{A}_0,\Tilde{X}_1)$. Given that $A_1=\tw(A_0,X_1)$ and $\Tilde{A}_1=\tw(\Tilde{A}_0,\Tilde{X}_1)$. 
    Hence, $A_1[\comp{S}+s]=\Tilde{A}_1$. Since, either $X_1\subset \comp{S}$ or $S\subset X_1$, from \cref{thm:cutmap}, $A_1$ also has cut $S$. The minimality of cut $S$ also follows from \cref{thm:cutmap} when $S$ is a minimal cut of $A$.  Iteratively, in a similar way, we can show that for each $i\in \{2,3,\dots,k\}, X_i$ and $S$ are cuts of $A_{i-1}$ and $A_i[\comp{S}+s]=\Tilde{A}_i$. Also, we can show that if $S$ is a minimal cut of $A$, then it is also a minimal cut of $A_i$ for each $i\in \{2,3,\dots,k\}$. 
\end{proof}

Since $A[\comp{S}+s]\PME B[\comp{S}+s]$, from induction hypothesis, we get that they are cut-transpose equivalent with respect to a cut sequence of length, say $k\leq 2(|\comp{S}|+1)\leq 2n-2$. Using \cref{cl:smallToBig}, we can get another matrix $A'$ from $A$ through a sequence of cut-transpose operations such that $A'[\comp{S}\cup \{s\}]\dge B[\comp{S}\cup \{s\}]$ and $S$ is a minimal cut of $A'$. We can go even further and assume that $A'[\comp{S}\cup \{s\}]\DE B[\comp{S}\cup \{s\}]$. This is because when $A'[\comp{S}\cup \{s\}]\DE B[\comp{S}\cup \{s\}]^T$, then we can work with $B^T$ instead of $B$.  The following lemma shows that we can get $B$ from $A'$ by using at most one cut-transpose operation. 

The number of cut transpose operations from $A$ to $A'$ is $k\leq 2n-2$. If $S$ has size $2$ and it is not a cut in $B$, then we need one cut-transpose operation from \cref{lem:cutsizetwo}. From \cref{lem:SplussToN}, we might need one more cut-transpose operation from $A'$ to $B$. This completes the proof of \cref{thm:main-one} by giving a cut-sequence of size at most $2n$ from $A$ to $B$.

\begin{lemma} \label{lem:SplussToN}
Let $A$ and $B$ be two $n\times n$ matrices over $\F$ with nonzero off-diagonal entries and $A\PME B$. Let $S\subseteq [n]$ be a minimal cut in $A$ and also a cut in $B$. Let $s\in S$ such that $A[\comp{S}+s]\DE B[\comp{S}+s]$. Then, either $A\DE B$ or $\tw(A,\comp{S})\DE B$.
\end{lemma}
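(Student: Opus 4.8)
The plan is to isolate a gluing statement and then run a short case analysis driven by what happens on the $S$-side. First, a few reductions: since $A$ has all off-diagonal entries nonzero it is irreducible, so $\comp S$ is a cut of $A$, $\tw(A,\comp S)$ is well defined and, by \cref{lem:PMEwithTwist}, $\tw(A,\comp S)\PME A\PME B$; moreover $\tw(A,\comp S)$ again has all off-diagonal entries nonzero and still has $S$ as a cut. The key auxiliary claim I would prove is the following \emph{gluing lemma}: if $M,N$ are $n\times n$ matrices with nonzero off-diagonal entries such that $M\PME N$, $S$ is a cut of both $M$ and $N$, and there exist $s\in S$, $s'\in\comp S$ with $M[\comp S+s]\DE N[\comp S+s]$ and $M[S+s']\DE N[S+s']$, then $M\DE N$.

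For the gluing lemma, write $M$ and $N$ in block form along $S$, so $M[\comp S,S]=u_Mv_M^T$ and $M[S,\comp S]=p_Mq_M^T$ (all four vectors entrywise nonzero, as the off-diagonal entries are), and likewise for $N$. Let $E$ (on $\comp S+s$, normalized by $E[s]=1$) and $H$ (on $S+s'$) witness the two given diagonal similarities. From $E\,M[\comp S+s]\,E^{-1}=N[\comp S+s]$, the $\comp S$-block shows $E|_{\comp S}$ conjugates $M[\comp S]$ to $N[\comp S]$, and restricting to the $s$-column gives $E|_{\comp S}\,u_M=\lambda u_N$ with $\lambda=v_N[s]/v_M[s]$; symmetrically, $H\,M[S+s']\,H^{-1}=N[S+s']$ shows $H|_S$ conjugates $M[S]$ to $N[S]$ and, restricting to the $s'$-row, $(H|_S)^{-1}v_M|_S=\nu v_N|_S$ with $\nu=u_N[s']/(u_M[s']H[s'])$. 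Now glue by setting $D[i]=E[i]$ on $\comp S+s$ and $D[i]=H[i]/H[s]$ on $S+s'$; the two definitions agree on the overlap $\{s,s'\}$, since each forces the ratio $N[s',s]/M[s',s]$ there. Checking $D\,M\,D^{-1}=N$ is by cases: entries with both indices inside $\comp S\cup\{s\}$, and both inside $S\cup\{s'\}$, are handled by $E$ and $H$; for a cross entry with $i\in\comp S\setminus\{s'\}$ and $j\in S\setminus\{s\}$ one computes $D[i]\,M[i,j]\,D[j]^{-1}=\lambda\nu H[s]\cdot N[i,j]$, and $\lambda\nu H[s]=1$ follows from $H[s]/H[s']=N[s,s']/M[s,s']$ together with the order-two minor identity $M[s,s']M[s',s]=N[s,s']N[s',s]$ (a consequence of $M\PME N$); the reverse cross entries then follow from $M[i,j]M[j,i]=N[i,j]N[j,i]$. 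Hence $M\DE N$.

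Given the gluing lemma, fix any $s'\in\comp S$ (which exists since $\abs{\comp S}\ge 2$). Because $S$ is a minimal cut of $A$, the principal submatrix $A[S+s']$ has no cut --- by \cref{lem:minCutGeneral} when $\abs S\ge 3$, and trivially when $\abs S=2$ as then $A[S+s']$ is $3\times 3$ --- so, being irreducible and satisfying $A[S+s']\PME B[S+s']$, \cref{lem:noCut} gives $A[S+s']\dge B[S+s']$: either $A[S+s']\DE B[S+s']$ or $A[S+s']\DE B[S+s']^T$. In the first case I apply the gluing lemma to $(M,N)=(A,B)$ with this $s$ and $s'$: the hypothesis supplies $A[\comp S+s]\DE B[\comp S+s]$, the current case supplies $A[S+s']\DE B[S+s']$, and $S$ is a cut of both, so $A\DE B$.

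In the second case I transfer to $A':=\tw(A,\comp S)$. Directly from the definition of $\tw$, the restriction $A'[\comp S+s]$ is obtained from $A[\comp S+s]$ by rescaling the row and the column indexed by $s$ by reciprocal factors, hence is diagonally similar to $A[\comp S+s]$, so $A'[\comp S+s]\DE B[\comp S+s]$; likewise $A'[S+s']$ is a diagonal conjugate of $A[S+s']^T$, which is diagonally similar to $B[S+s']$ because $A[S+s']\DE B[S+s']^T$, so $A'[S+s']\DE B[S+s']$. Since $S$ is a cut of $A'$ and $A'\PME B$, the gluing lemma yields $A'\DE B$, i.e.\ $\tw(A,\comp S)\DE B$. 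One of the two cases always occurs, which proves the lemma. I expect the main obstacle to be the gluing lemma, and within it the identity $\lambda\nu H[s]=1$: this is exactly where one must use that the rank-one factorizations of the off-diagonal blocks are transported consistently --- information pinned down only by the single overlap indices $s,s'$ --- together with the low-order principal-minor equalities; the block bookkeeping in the $\tw(A,\comp S)$ case is routine but must be tracked carefully against the chosen rank-one decompositions, which by \cref{remark:twist} matter only up to diagonal similarity.
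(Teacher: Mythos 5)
Your proof is correct. The core idea matches the paper's: restrict to the principal submatrix $A[S+s']$ for some $s'\in\comp S$, observe (via \cref{lem:minCutGeneral} and \cref{lem:noCut}) that minimality of $S$ forces $A[S+s']\dge B[S+s']$, and then splice the resulting partial diagonal similarity with the one hypothesized on $A[\comp S+s]$. The genuine difference is organizational: you extract a reusable ``gluing lemma'' (two diagonal similarities on overlapping index sets $\comp S+s$ and $S+s'$, consistent on the overlap, plus the rank-one cut structure controlling the cross blocks, imply global diagonal similarity), then apply it unchanged in both cases, handling the transpose case by first passing from $A$ to $\tw(A,\comp S)$ and showing that $\tw$ leaves the relevant restrictions diagonally similar. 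The paper instead folds the gluing argument into two inline case computations: in Case I it directly builds $D$ from $D_1,D_2$ and verifies $B=DAD^{-1}$ block by block; in Case II it builds a different $D$ (using $D_1^{-1}$ on the $S$-block) and verifies $B=D\,\tw(A,\comp S)\,D^{-1}$ directly. Your factoring is a bit cleaner and makes the ``agreement on the overlap $\{s,s'\}$'' explicit as the crux, whereas the paper's version bakes that normalization (both $D_1,D_2$ fixed to be $1$ at the overlap index) into the construction of $D$ without comment. One small remark: the $2\times 2$ minor identity $M[s,s']M[s',s]=N[s,s']N[s',s]$ you invoke for the reverse cross entries is indeed supplied by $M\PME N$, but it is also derivable just from the two block similarities together with the cut structure (the same bookkeeping you used to pin down $\lambda\nu H[s]=1$ works symmetrically), so the gluing lemma could be stated without a PME hypothesis at all; this does not affect correctness here.
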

\begin{proof}
Without loss of generality, assume that $S=[i]$ and $s=i$. Then, from the hypothesis, $B[\comp{S}+i]\PME A[\comp{S}+i]$. Since $S$ is a minimal cut in $A$, using~\cref{lem:minCutGeneral} and~\cref{lem:noCut}, there exists an $(i+1)\times (i+1)$ invertible diagonal matrix $D_1$ such that $D_1[i+1,i+1]=1$ and  $$ D_1\cdot A[[i+1]]\cdot D_1^{-1}=B[[i+1]] \text{ or  }B[[i+1]]^T.$$ From the hypothesis, there exists another $(n-i+1)\times (n-i+1)$ invertible diagonal matrix $D_2$ such that $D_2[i,i]=1$ and 
\begin{equation}
\label{eqn:SplussToN-one}
B[\comp{S}+i]=D_2\cdot A[\comp{S}+i]\cdot D_2^{-1}.
\end{equation}
We assume that the rows and columns of $D_2$ are indexed by $\comp{S}+i$. Next, we divide our proof into the following two cases.

\paragraph*{Case I:} In this case, we assume that 
\begin{equation}
\label{eqn:SplussToN-case-I-one}
D_1\cdot A[[i+1]]\cdot D_1^{-1}=B[[i+1]],
\end{equation}
and show $A\DE B$. Let $D$ be an $n\times n$ invertible matrix defined as follows: For all $k\in[n]$,
\[
D[k,k]=
\begin{cases}
D_1[k,k]                      &   \text{ if } k\in[i]\\
\frac{D_2[k,k]}{D_2[i+1,i+1]}  &   \text{ otherwise }.
\end{cases}
\]
We will show that $B$ is equal to $DAD^{-1}$. Since $S$ is a common cut in both the matrices $A$ and $B$, the rank-one submatrices $A[S, \comp{S}]$ and $B[S, \comp{S}]$ can be written as follows. 
\begin{align}
 A[S,\comp{S}] = A[S, i+1]\cdot \frac{A[i, \comp{S}]}{A[i, i+1]}  &\text{ and } A[\comp{S}, S] = A[\comp{S}, i]\cdot \frac{A[i+1, S]}{A[i+1, i]} \label{eqn:SplussToN-case-I-two-A}\\
 B[S,\comp{S}] = B[S, i+1]\cdot \frac{B[i, \comp{S}]}{B[i, i+1]}  &\text{ and } B[\comp{S}, S] = B[\comp{S}, i]\cdot \frac{B[i+1, S]}{B[i+1, i]} \label{eqn:SplussToN-case-I-two-B}
\end{align}
From~\cref{eqn:SplussToN-one} and~\cref{eqn:SplussToN-case-I-one},
\begin{eqnarray*}
B[i,i+1]        &=&  A[i,i+1]\cdot D_2^{-1}[i+1, i+1]\\
B[i, \comp{S}]  &=&  A[i, \comp{S}]\cdot D_2^{-1}[\comp{S}], \text{ and }\\ 
B[S, i+1]       &=&  D_1[S]\cdot A[S,i+1]
\end{eqnarray*}
Therefore, using the above equation and~\cref{eqn:SplussToN-case-I-two-B},
\begin{align*}
B[S, \comp{S}] &= D_1[S]\cdot A[S, i+1]\cdot \frac{D_2[i+1, i+1]\cdot A[i, \comp{S}]\cdot D_2^{-1}[\comp{S}]}{A[i, i+1]}\\
&=D[S]\cdot A[S,\comp{S}]\cdot D^{-1}[\comp{S}]
\end{align*}
Similarly, we can show that $$B[\comp{S}, S]=D[\comp{S}]\cdot A[\comp{S}, S]\cdot D^{-1}[S].$$ Applying~\cref{eqn:SplussToN-case-I-one} and~\cref{eqn:SplussToN-one}, we get that  
\begin{eqnarray*}
B[S]         &=&  D[S]\cdot A[S]\cdot D^{-1}[S]  \text{ and }\\
B[\comp{S}] &=&  D[\comp{S}]\cdot A[\comp{S}]\cdot D^{-1}[\comp{S}].
\end{eqnarray*} 
Thus, $B=DAD^{-1}$.

\paragraph*{Case II:} In this case, we assume that 
\begin{equation}
\label{eqn:SplussToN-case-II-one}
D_1\cdot A[[i+1]]\cdot D_1^{-1}=B[[i+1]]^T,
\end{equation}
 and show $B\DE \tw(A, \comp{S})$. Let $D$ be an $n\times n$ invertible diagonal matrix defined as follows: For all $k\in[n]$,
\[
D[k,k]=
\begin{cases}
D_1^{-1}[k,k]                      &   \text{ if } k\in[i]\\
\frac{D_2[k, k]}{D_2[i+1, i+1]}  &   \text{ otherwise }.
\end{cases}
\]
We will prove that $B$ is equal to $D\cdot \tw(A, \comp{S})\cdot D^{-1}$. Since $S$ is a cut, the matrix $A$ has the following structure.
\[
A=
\begin{blockarray}{ccc}
                    &  S             & \comp{S}\\
\begin{block}{c(cc)}
S                   & A[S]                                           & A[S, i+1]\cdot \frac{A[i,\comp{S}]}{A[i, i+1]}\\
&&\\
\comp{S}            & \frac{A[\comp{S}, i]}{A[i+1,i]}\cdot A[i+1, S] & A[\comp{S}] \\
\end{block}
\end{blockarray}.
\]
Thus, $\tw(A, \comp{S})$ can be written as follows.
\[
\tw(A,\comp{S})=
\begin{blockarray}{ccc}
                    &  S             & \comp{S}\\
\begin{block}{c(cc)}
S                   & A[S]^T                                           & A[i+1, S]^T\cdot \frac{A[i,\comp{S}]}{A[i, i+1]}\\
&&\\
\comp{S}            & \frac{A[\comp{S}, i]}{A[i+1,i]}\cdot A[S, i+1]^T & A[\comp{S}]  \\
\end{block}
\end{blockarray}.
\]
From~\cref{eqn:SplussToN-one} and~\cref{eqn:SplussToN-case-II-one}, we have that
\begin{eqnarray*}
B[i,i+1]        &=&   A[i,i+1]\cdot D_2^{-1}[i+1, i+1]\\
B[S, i+1]       &=&   D_1^{-1}[S]\cdot A[i+1, S]^T\\
B[i,\comp{S}]   &=&   A[i,\comp{S}]\cdot D_2^{-1}[\comp{S}].
\end{eqnarray*}
Using the above equation and~\cref{eqn:SplussToN-case-I-two-B}, 
\begin{align*}
B[S, \comp{S}] &= D_1^{-1}[S]\cdot A[ i+1, S]^T\cdot \frac{D_2[i+1, i+1]\cdot A[i, \comp{S}]\cdot D_2^{-1}[\comp{S}]}{A[i, i+1]}\\
&=D[S]\cdot \tw(A,\comp{S})[S,\comp{S}]\cdot D^{-1}[\comp{S}]
\end{align*}
Similarly, we can show that $$B[\comp{S}, S]=D[\comp{S}]\cdot \tw(A,\comp{S})[\comp{S}, S]\cdot D^{-1}[S].$$ Applying~\cref{eqn:SplussToN-case-II-one} and~\cref{eqn:SplussToN-one}, we get that
\begin{eqnarray*}
B[S]        &=&  D[S]\cdot \tw(A, \comp{S})[S]\cdot D^{-1}[S], \text{ and }\\
B[\comp{S}] &=&  D[\comp{S}]\cdot \tw(A, \comp{S})[\comp{S}]\cdot D^{-1}[\comp{S}].
\end{eqnarray*}
Thus, $B=D\cdot \tw(A,\comp{S})\cdot D^{-1}$.
\end{proof}





\sgc{Proof reference appendix!}







\section{Algorithm for Principal Minor Equivalence Testing}

In this section, we give a proof of \cref{thm:main-two} by giving polynomial time algorithm for testing whether two matrices are principal minor equivalent. For reducible matrices, the problem reduces to smaller instances of principal minor equivalence testing for irreducible matrices from \cref{lem:redToIr}. Using \cref{ob:irreducible-matrix-and-directed-graph}, we can find these instances in polynomial time. Hence, it is sufficient to give a polynomial time algorithm for irreducible matrices. 

In \cref{algo:thm-main-one}, given two irreducible matrices $A$ and $B$ as input, we output a cut sequence with respect to which $A$ and $B$ are cut-transpose equivalent if $A\PME B$ otherwise, we output "No". The algorithm is directly based on the proof of characterization result. As mentioned earlier, we first reduce to an instance where all the off-diagonal entries are non-zero. The following claim describes how to get such an instance.

\begin{claim} \label{cl:YSubstitution}
Let $\F$ be a field of size greater than $10n^5$. Let $A$ and $B$ be two $n\times n$ irreducible matrices over $\F$. Then, in $\poly(n)$ time, we can find a diagonal matrix $D\in \F^{n\times n}$ such that $A+D$ and $B+D$ are nonsingular and all entries of $\adj{A+D}$ and $\adj
{B+D}$ are nonzero.
\end{claim}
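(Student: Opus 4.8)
The plan is to reduce the task to finding a single diagonal matrix $D$ over $\F$ that avoids the zero set of one explicit nonzero polynomial of polynomially bounded degree, and then to locate such a $D$ deterministically. Let $Z=\diag(z_1,\dots,z_n)$ be a diagonal matrix of fresh indeterminates and set
\[
  P(z_1,\dots,z_n)\;=\;\det(A+Z)\,\det(B+Z)\,\prod_{i,j\in[n]}\adj{A+Z}[i,j]\,\prod_{i,j\in[n]}\adj{B+Z}[i,j].
\]
If $D$ is any diagonal matrix over $\F$ with $P(D)\neq0$, then $A+D$ and $B+D$ are nonsingular (the first two factors are nonzero) and every entry of $\adj{A+D}$ and of $\adj{B+D}$ is nonzero, which is exactly what the claim asks for. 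So it suffices to produce such a $D$.

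Next I would argue that $P\not\equiv0$ and bound $\deg P$. Expanding along the diagonal shows that $\det(A+Z)$ is monic in the monomial $z_1\cdots z_n$, hence a nonzero polynomial of degree $n$, and likewise for $\det(B+Z)$. For the adjugate entries one invokes the classical fact of Hartfiel and Loewy, also used in \cref{lem:AdjugateCut}, that for an irreducible matrix $C$ every entry of $\adj{C+Z}$ is a nonzero polynomial (intuitively, a short directed path between $i$ and $j$ in the digraph of $C$ together with self-loops on the remaining indices contributes a surviving monomial). Since $A$ and $B$ are irreducible, all $2n^2$ adjugate-entry factors are nonzero polynomials of degree at most $n-1$. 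As $\F$ is a domain, $P\not\equiv0$, and $\deg P\le 2n+2n^2(n-1)<10n^5<|\F|$ (a bound of $2n^3$ already suffices).

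It remains to find, deterministically and in $\poly(n)$ time, an assignment $D=\diag(d_1,\dots,d_n)\in\F^n$ with $P(D)\neq0$. I would do this greedily, one coordinate at a time: having fixed $d_1,\dots,d_{i-1}$ so that $P(d_1,\dots,d_{i-1},z_i,\dots,z_n)\not\equiv0$, observe that as a polynomial in $z_i$ this has degree at most $\deg P<|\F|$, so at most that many choices of $d_i\in\F$ make it vanish identically; pick any other $d_i$. The one nontrivial implementation point — and the step I expect to be the main obstacle — is that carrying out this selection requires deciding whether a given partially substituted polynomial vanishes identically, which is a priori a polynomial identity testing instance. This is where the structure is used: each factor, after substituting $d_1,\dots,d_{i-1}$, has the form $\det(C+Z_J)$ or $\pm\adj{C+Z_J}[k,\ell]$ for a constant matrix $C$ (namely $A$ or $B$ with the committed diagonal shifts) and a symbolic diagonal $Z_J$ supported on the not-yet-fixed index set $J$; since the variables of $Z_J$ occupy pairwise distinct entries, no two monomials of such a factor collide, so the factor vanishes identically if and only if every corresponding minor of the constant matrix obtained by deleting rows and columns of a partial permutation on the symbolic positions vanishes. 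This last condition reduces to a bipartite matching (or matroid intersection) computation, hence is decidable in polynomial time. Running the greedy loop $n$ times then outputs the desired $D$, and the whole procedure runs in $\poly(n)$ time.
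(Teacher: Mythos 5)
Your proposal is correct, but it takes a genuinely different route from the paper's. The paper sidesteps any symbolic zero test: for each of the $2n^2+2$ required conditions it explicitly builds a separate scalar diagonal matrix over $\F$ enforcing that one condition (a ``Type~I'' shift $aI$ making $A+aI$ nonsingular, and a ``Type~II'' shift read off a shortest $i\to j$ path in the digraph of $A$, which leaves a surviving monomial by the Hartfiel--Loewy path argument), then glues these $2n^2+2$ witnesses together by Lagrange interpolation into one diagonal $\widetilde D$ whose entries are univariate polynomials in a single indeterminate $y$. The resulting product polynomial is then \emph{univariate}, provably nonzero, and of degree at most $10n^5$, so a good value of $y$ is found by brute-force evaluation; no matching or matroid machinery appears anywhere. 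You instead keep $n$ independent diagonal variables, argue nonvanishing of the multivariate product from irreducibility (the same Hartfiel--Loewy fact; note it is the path argument cited in the paper's proof, not a consequence of \cref{lem:AdjugateCut}), and derandomize greedily coordinate by coordinate. The step you flag as the main obstacle is indeed the nontrivial one: after a partial substitution each factor is $\det(C+Z_J)$ or a cofactor of $C+Z_J$ for a constant matrix $C$ with a symbolic diagonal supported on $J$, and deciding whether it vanishes identically is a matrix-completion / mixed-matrix-rank problem. You are right that this is decidable in polynomial time, but plain bipartite matching does not suffice here because of the constant block $C$; it is matroid intersection (Murota's mixed-matrix rank) that carries the weight, so the parenthetical should really be your main justification. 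Both routes are valid: the paper's is more elementary and self-contained, while yours is conceptually more direct and uses the field-size hypothesis more loosely.
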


\begin{remark}
When the size of the underlying field $\F$ is not greater than $10n^5$, we can construct an extension $\mathbb K$ of $\F$ such that $|\mathbb K|>10n^5$ and work with the larger field $\mathbb K$. We can also construct such an extension $\mathbb K$ in time $\poly(n)$.
\end{remark}

\begin{proof}
Given $A$ and $B$, we need to construct the following two types of diagonal matrices over $\F$ in $\poly(n)$ time. 
\begin{description}
    \item[Type I:] Find diagonal matrices $D_A$ and $D_B$ such that both $A+D_{A}$ and $B+D_B$ are nonsingular.
    
    \item[Type II:] For all $i,j\in[n]$, find diagonal matrices $A_{i,j}$ and $B_{i,j}$ such that $$\adj{A+A_{i,j}}[i,j]\neq 0 \text{ and } \adj{B+B_{i,j}}[i,j]\neq 0.$$
\end{description}

Before describing the construction of the above-mentioned diagonal matrices, we first discuss how to use them to get the diagonal matrix $D$ as promised in the claim. Using univariate polynomial interpolation, we combine all the above-mentioned diagonal matrices to a single $n\times n$ diagonal matrix $\widetilde D$ as follows. Let $T$ be a subset of $\F$ of size $2n^2+2$. Fix a bijection $\phi$ from $T$ to the set of diagonal matrices $$\{D_A,D_B\}\sqcup\{A_{i,j}, B_{i,j}\,\mid\, i, j\in[n]\}.$$ For each $i\in [n],$ let $P_i$ be a univariate polynomial in $y$ such that for each $e\in T,$ $P_i(e)=\phi(e)[i,i]$.  We can find $P_i$ in $\poly(n)$ time using Lagrange interpolation such that its degree is at most $2n^2+1$. Then, the diagonal matrix $\widetilde D$ is defined as $\widetilde D[i,i]=P_i$ for all $i\in[n]$. Observe that for each $e\in T,$  after substituting $y$ by $e$ in $\widetilde D$, we get $\phi(e)$. Thus, both $A+\widetilde D$ and $B+\widetilde D$ are nonsingular, and all entries of $\adj{A+\widetilde D}$ and $\adj{B+\widetilde D}$ are nonzero. In other words, the univariate polynomials $\det(A+\widetilde D)$, $\det(B+\widetilde D)$, $\adj{A+\widetilde D}[i,j]$ and $\adj{B+\widetilde D}[i,j]$ for each $i,j\in [n]$ are  nonzero. Note that each of these polynomials has a degree at most $(2n^2+1)\times n$. Now, we have found a matrix with univariate polynomials as its entries that satisfy the condition of our claim. 

Now consider the polynomial $$P=\det(A+\widetilde D)\cdot \det(B+\widetilde D)\cdot \prod_{i,j\in[n]}\adj{A+\widetilde D}[i,j]\cdot \adj{B+D}[i,j].$$ From~\cite{Csanky76, Berkowitz84}, we know that the determinant of an $n\times n$ matrix whose entries are univariate polynomials of at most $\poly(n)$ degree can be computed in $\poly(n)$ time. Thus, the polynomial $P$ can be computed in time $\poly(n)$. The degree of $P$ is at most $d=(2n^3+n)\times (2n^2+2)\leq 10n^5$. Therefore, for any subset $S\subseteq \F$ of size $d+1$, there exists an $a\in S$ such that $P(a)$ is nonzero. Find such a point $a$ in $S$. Given the polynomial $P$, this can be done in $\poly(n)$ time. This implies that all the polynomials in the product are also nonzero at $a\in S$. Hence, after substituting $y$ by $a$ in $\widetilde D$, we get a matrix $D$ that satisfies the condition of our claim. Next, we describe how to find diagonal matrices of \textbf{Type I} and \textbf{Type II} in $\poly(n)$ time. 

\paragraph*{Find Type I diagonal matrices.} Let $y$ be an indeterminate and $D'$ be a diagonal matrix with each diagonal entry is $y$. Then, the coefficient of $y^n$ in $\det(A+D')$ is one, hence, $\det(A+D')$ is nonzero. Compute the polynomial $\det(A+D')$. Since it is an univariate polynomial of degree $n$, in $\poly(n)$time, we can find a point $a\in\F$ such that the evaluation of $\det(A+D')$ at $a$ is nonzero. Then, the matrix $D_A$ we get by substituting $y=a$ in $D'$. Similarly, we can find $D_B$ in $\poly(n)$ time.

\paragraph*{Finding Type II diagonal matrices.} Let $G_A$ be the graph such that its vertex set in $[n]$ and $(i,j)$ is an edge inf $G_A$ if and only if $i\neq j$ and $A[i,j]\neq 0$.  Let $i,j\in[n]$. Since $A$ is irreducible, there exists a path from $i$ to $j$ in $G_A$. Let $$P=(i_0,i_1,i_2,\ldots, i_k) \text{ with } i_0=i,\ i_k=j,$$ be a shortest path from $i$ to $j$. In particular, when $i=j$, $P$ is $(i_0=i)$. We can compute such a path $P$ in time $\poly(n)$. Let $D'$ be a $n\times n$ diagonal matrix and $y$ be an indeterminate such that  for all $e\in[n]$,
\[
D'[e,e]=
\begin{cases}
0 & \text{if } e\in P\setminus \{i\}\\
y & \text{ otherwise.}
\end{cases}
\]
Next, following the proof of~\cite[Theorem~1]{Hartfiell84}, one can show that $$\adj{A+D'}[i,j]=\det\left((A+D')\left [[n]-j,[n]-i\right ]\right)\neq 0.$$
From~\cite{Csanky76, Berkowitz84}, we can compute $\adj{A+D'}[i,j]$ in time $\poly(n)$. It is a polynomial of degree at most $n-1$. Therefore, in $\poly(n)$ time, we can find a point $a\in\F$ such that the evaluation of $\adj{A+D'}[i,j]$ at $y=a$ is nonzero. Then, the diagonal matrix $A_{i,j}$ we get by substituting $y=a$ in $D'$. Similarly, we find $B_{i,j}$ for all $i,j\in[n]$.  
\end{proof}

Now, we show how to find a minimal cut in an irreducible matrix efficiently.

\begin{lemma} 
\label{lem:MinimalCutInPolynomialTime}
Let $A$ be an $n\times n$ irreducible matrix over a field $\F$. Then, we can test whether $A$ has a cut in $\poly(n)$ time. Moreover, if there exists a cut in $A$, then a minimal cut of $A$ can be computed using $\poly(n)$ time. 
\end{lemma}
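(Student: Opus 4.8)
The plan is to reduce the task to submodular function minimization. For every $X$ with $\emptyset \neq X \subsetneq [n]$, define
\[
f(X) := \rank\bigl(A[X,\comp{X}]\bigr) + \rank\bigl(A[\comp{X},X]\bigr),
\]
a quantity computable by Gaussian elimination in $\poly(n)$ $\F$-operations. Since $A$ is irreducible, both blocks $A[X,\comp{X}]$ and $A[\comp{X},X]$ are nonzero for every such $X$, so $f(X)\geq 2$; and by \cref{def:cut}, $X$ is a cut of $A$ precisely when $2\leq |X|\leq n-2$ and $f(X)=2$. Hence $A$ has a cut if and only if $f$ attains the value $2$ somewhere on $\{X : 2\leq |X|\leq n-2\}$, and any such minimizer \emph{is} a cut.

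The structural fact I would establish is that $f$ is submodular on $2^{[n]}$. The map $X\mapsto \rank(A[X,\comp{X}])$ is the cut-rank (connectivity) function of $A$, a classical submodular function (one can prove it via $\rank(A[X,\comp{X}]) = \dim R_X - \dim(R_X\cap L_X)$, where $R_X$ is the span of the rows of $A$ indexed by $X$ and $L_X=\Span\{e_i : i\in X\}$, or simply quote the rank-width literature); and $X\mapsto \rank(A[\comp{X},X]) = \rank(A^T[X,\comp{X}])$ is the cut-rank function of $A^T$, hence also submodular. A sum of submodular functions is submodular, so $f$ is.

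The only remaining difficulty is the size restriction $2\leq |X|\leq n-2$, which I encode by forcing two elements in and two out. For each pair of distinct $i,j\in[n]$ and each pair of distinct $k,\ell\in[n]\setminus\{i,j\}$, minimize $f$ over the interval $\{X : \{i,j\}\subseteq X\subseteq [n]\setminus\{k,\ell\}\}$; the substitution $X=\{i,j\}\cup Y$ turns this into \emph{unconstrained} minimization of the submodular function $Y\mapsto f(\{i,j\}\cup Y)$ on the ground set $[n]\setminus\{i,j,k,\ell\}$, solvable in polynomial time with $\poly(n)$ evaluation-oracle calls by the classical combinatorial submodular minimization algorithms. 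There are $O(n^4)$ such intervals (none when $n\leq 3$, matching the fact that there is then no cut): $A$ has a cut iff the value $2$ is attained in some interval, in which case the returned minimizer is a cut.

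To upgrade ``a cut'' to a \emph{minimal} cut, note that a cut of smallest cardinality cannot properly contain another cut, so it suffices to find a minimum-cardinality cut. In each interval subproblem I would instead minimize the integer-valued submodular function $g(X):=2n\cdot f(X)+|X|$; since $f$ is integer-valued and $|X|<n$, a minimizer of $g$ over the interval is a minimizer of $f$ over that interval of least cardinality. Selecting, over all $O(n^4)$ intervals, a minimizer whose $f$-value equals $2$ and whose cardinality is smallest yields a minimum-cardinality, hence inclusion-wise minimal, cut of $A$ in $\poly(n)$ time. I do not expect a deep obstacle here; the care-intensive points are getting the submodularity of the cut-rank function right (or cited correctly), confirming that restriction to a Boolean interval preserves submodularity and faithfully encodes the size constraints, and verifying that the weighting $2n f(\cdot)+|\cdot|$ genuinely selects a minimum-cardinality cut rather than merely some cut.
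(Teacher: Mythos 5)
Your proposal is correct and follows essentially the same route as the paper's proof: establish submodularity of $X\mapsto\rank(A[X,\comp{X}])+\rank(A[\comp{X},X])$, encode the size constraints by enumerating $O(n^4)$ pairs $(\{i,j\},\{k,\ell\})$ forced in and out and minimizing over the resulting Boolean interval, and recover a minimum-cardinality (hence inclusion-minimal) cut by instead minimizing a weighted combination of the rank function and $|X|$ (you use the coefficient $2n$, the paper uses $n+1$; either dominates the cardinality term). The one cosmetic difference is in the submodularity argument: the paper rewrites $\rank(A[X,\comp{X}]) = \dim(\sum_{e\in X}V_e)-|X|$ with $V_e=\Span\{A[e,\cdot],\chi_e\}$, making submodularity immediate, whereas your identity $\dim R_X-\dim(R_X\cap L_X)$ is the same quantity (equal to $\dim(R_X+L_X)-|X|$) but would need that extra rewriting step to yield submodularity directly; your fallback of citing the cut-rank/rank-width literature is also fine.
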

\sgc{Take care bit complexity stuff, and change writing according to that!}
\begin{proof}
Let $2^{[n]}$ denote the set of all subsets of $[n]$. We first show that the functions $g_1,g_2:2^{[n]}\rightarrow\mathbb{Z}$, defined as $$\forall X\in2^{[n]},\ g_1(X):=\rank(A[X,\comp{X}]) \text{ and } g_2(X):=\rank(A[\comp{X}, X]),$$ are submodular functions. For each $i\in [n]$, let $V_i$ be the subspace of $\F^n$ spanned by the $i$th row vector of $A$ and the characteristic vector $\chi_i$ for the set $\{i\}$. Let $f:2^{[n]}\rightarrow\mathbb{Z}$ be the function defined as $$\forall X\in2^{[n]},\ f(X)=\dim\left(\sum_{e\in X}V_e\right).$$ It is not hard to verify that the function $f$ is a submodular function. Observe that a subset of row vectors of $A[X,\comp{X}]$ indexed by $T\subseteq X$ are linearly independent if and only if the set $\{\chi_e\, \mid\, e\in X \}\sqcup \{A[e',[n]]\mid e'\in T \}$ are linearly independent. Therefore, for all $X\in2^{[n]}$, $$f(X)=g_1(X)+|X|.$$ 
Since $f$ is a submodular function, $g_1$ is a submodular function. Similarly, we can show that  $g_2$ is also a submodular function.

Since $g_1$ and $g_2$ are submodular functions, their sum $g=g_1+g_2$ is also a submodular function. For any set $T=\{t_1,t_2\}\sqcup \{t_3,t_4\}$ with four distinct elements from $[n]$, let $g_T$ be a function defined on subsets of $\comp{T}$ such that $$\forall X\subseteq \comp{T},\ g_T(X)=g(X\cup \{t_1,t_2\}).$$ 
For any $X\subseteq \comp{T}$ and $a,b\in\comp{T}$,
\begin{align*}
g_T(X\cup \{a\}) + g_T(X\cup \{b\}) &= g(X\cup \{a,t_1,t_2\}) + g(X\cup \{b,t_1,t_2\})\\
&\geq g(X\cup \{t_1,t_2\})+ g(X\cup \{a,b,t_1,t_2\})\ \  (\text{ submodularity of $g$})\\
&=g_T(X)+g_T(X\cup\{a,b\}).
\end{align*}
From the above, $g_T$ is a submodular function. Note that if there exists a cut $S$ in $A$ with $\{t_1,t_2\}\subseteq S$ and $\{t_3,t_4\}\subseteq \comp{S}$ if and only if the minimum value of function $g_T$ is at most 2. One can also observe that for any subset $X\subseteq \comp{T}$, $g_T(X)$ can be computed in $\poly(n)$ time. Thus, using the submodular minimization algorithm in~\cite[Chapter~45]{Sch03B}, we can compute the minimum the value of $g_T$ for any set $T=\{t_1,t_2\}\sqcup\{t_3,t_4\}$ of four distinct elements from $[n]$ in $\poly(n)$ time. There are at most $n^4$  such subsets $T$, and we can test whether $A$ has a cut by computing the minimum value of $g_T$ for all such possible subsets $T$. Thus, we can test whether $A$ has a cut in $\poly(n)$ time.

Now, we discuss how to find a minimal cut. For a subset $T=\{t_1,t_2\}\sqcup\{t_3,t_4\}$ with four distinct elements from $[n]$, let $g'_T$ be the function on subsets of $\comp{T}$ such that $$\forall X\subseteq \comp{T},\  g'_T(X)= (n+1)g_T(X)+|X|.$$ Since both $g_T$ and the cardinality function are submodular, $g'_T$ is also a submodular function. Next observe that for $X\subseteq \comp T$, the set $X$ minimizes $g'_T$ if and only if for any $S\subseteq [n]$ with $t_1,t_2\in S$ but $t_3,t_4\notin S$ the following holds:
\begin{enumerate}
    \item $g(X\cup\{t_1,t_2\})\leq g(S)$.
    \item if $g(X\cup\{t_1,t_2\})= g(S)$, then $|X\cup\{t_1,t_2\}|\leq |S|$.
\end{enumerate}
Therefore, a minimizing set of $g'_T$ gives a minimal cut that contains both $t_1\text{ and }t_2$ but not $t_3$ and $t_4$, if such a cut exists. Now, using~\cite[Theorem~45.1]{Sch03B}, we can compute minimizing sets for the submodular functions $g'_T$ for all possible subsets $T$, and thus, we get a minimal cut in $\poly(n)$ time if $A$ has a cut. 
\end{proof}


\begin{algorithm}
\caption{Algorithm to test equal corresponding principal minors of two irreducible matrices}
\label{algo:thm-main-one}
\textbf{Input:} Two $n\times n$ irreducible matrices $A$ and $B$ over $\F$\\
\textbf{Output:} If $A\PME B$, then returns a cut sequence $\mathcal X$ of subsets of $[n]$ such that $A$, $B$ are cut-transpose equivalent with respect to $\mathcal X$. Otherwise, returns ``No''.\\
\begin{algorithmic}[1]
\State Using~\cref{cl:YSubstitution}, get $D$ and $A'\gets \adj{A+D}$ and $B'\gets \adj{B+D}$.
\State \Call{Finding-Cut-Sequence}{$A'$, $B'$, $[n]$}\\

\Function{Finding-Cut-Sequence}{$A$, $B$, $I$} 
\If{$|I|\leq 3$, \emph{or}, $A$ has \emph{no} cut}
    \If{$A$ is \emph{not} diagonally equivalent to $B$}
        \State \Return ``No''.
    \Else 
        \State \Return empty sequence.
    \EndIf
\Else
    \State $\widetilde B\leftarrow B$ \label{algo-thm-main-one-1}
    \State Using~\cref{lem:MinimalCutInPolynomialTime}, find a minimal cut $S\subseteq I$ in $A$.\label{algo-thm-main-one-base-case}
    \If{$|S|\geq 3$, \emph{and}, $S$ is \emph{not} a cut of $B$}
        \State \Return ``No''.
    \ElsIf{$|S|=2$, \emph{and}, $S$ is \emph{not} a cut of $B$}
        \State $X\gets$\Call{Min-Cut-size-Two}{$A$, $B$, $S$, $I$}
        \If{$X=$ ``No''}
            \State \Return ``No''.
        \EndIf
        \State $\widetilde B\leftarrow \tw(B,X)$\label{algo-thm-main-one-2}
    \EndIf  

    \State Let $s\in S$. 
    \State $\mathcal X'\leftarrow$\Call{Finding-Cut-Sequence}{$A(\comp{S}+s)$, $\widetilde B(\comp{S}+s)$, $\comp{S}+s$}.
    \If {$\mathcal X'=$ ``No''}
        \State \Return ``No''.
    \EndIf
    \State Let $\mathcal X'=(X_1', X_2',\ldots, X_k')$.
    \State $A_0\leftarrow A$.

    \For{$i=1$ to $k$}\label{algo-thm-main-one-3}
    \If{$s\in X_i'$}
        \State $X_i\gets X_i'\cup S$
    \Else
        \State $X_i\gets X_i'$
   \EndIf
    \State $A_i\leftarrow \tw(A_{i-1}, X_i)$.\label{algo-thm-main-one-5}
    \EndFor

    \If{$A_k\dge \widetilde B$}
        \State $\mathcal X \leftarrow (X_1, X_2,\ldots, X_k)$
    \ElsIf{$\tw(A_k, \comp{S}) \dge \widetilde B$}
        \State $\mathcal X \leftarrow (X_1, X_2,\ldots, X_k, \comp{S})$
    \Else \State \Return ``No''.\label{algo-thm-main-one-8}
    \EndIf


    \If{$|S|=2$, \emph{and}, $S$ is \emph{not} a cut of $B$}
        \State $\mathcal X\leftarrow (\mathcal X, X)$.\label{algo-thm-main-one-4}
    \EndIf

    \State \Return $\mathcal X$.
\EndIf
\EndFunction
\end{algorithmic}
\end{algorithm}

\begin{algorithm}[H]
\caption{Function for handling $|S|=2$ case in function \textsc{Cut-transpose} of~\cref{algo:thm-main-one}}
\label{algo:min-cut-two-algo}
\begin{algorithmic}[H]
\Function{Min-Cut-size-Two}{$A$, $B$, $I$, $S$}
\State $P\gets \emptyset$, and  $Q\gets \emptyset$
\State Let $s\in S$.
\For {$t\in I\setminus S$}
\If{$A(S+t)\DE B(S+t)$}
   \State $P\gets P\cup \{t\}$.
\ElsIf{$A(S+t)\DE B(S+t)^T$}
   \State $Q\gets Q\cup \{t\}$.
\Else
   \State \Return ``No''.
\EndIf
\EndFor

\State $X\gets P\cup \{s\}$. \label{algo2cutforsize2}

\If{$X$ is \emph{not} a cut of $B$}
   \State \Return ``No''.
\Else
  \State \Return $X$.
\EndIf
\EndFunction
\end{algorithmic}
\end{algorithm}


\subsection{Proof of Correctness of~\cref{algo:thm-main-one}}
\label{subsec:proof-of-correctness-algo1}

In the algorithm, we first find an invertible diagonal matrix $D$ such that $A+D$ and $B+D$ are invertible and off-diagonal entries of $\adj{A+D}$ and $\adj{B+D}$ are non-zero. The existence of such $D$ is guaranteed by \cref{cl:YSubstitution}. From \cref{lem:inversibleAdjugate}, $A\PME B$  if and only if $\adj{A+D}\PME \adj{B+D}$. Also, from \cref{lem:ResultForAplusDimpliesForA}, $\adj{A+D}$ and $\adj{B+D}$ are cut-transpose equivalent with respect to a cut sequence $\mathcal X$ if and only if $A$ and $B$ are cut-transpose equivalent with respect to $\mathcal X$. Hence, it is sufficient to show proof of the correctness of function \textsc{Finding-Cut-Sequence}  just for input matrices with non-zero off-diagonal entries. We do this in the following lemma.

\begin{lemma}
\label{lem:correctness-of-twist-function}
Let $A$ and $B$ be two matrices over $\F$ such that their rows and columns are indexed by elements in $I$. Let the off-diagonal entries of $A$ and $B$ be nonzero. Then, given $(A,B,I)$ as input to the function \textsc{Finding-Cut-Sequence} in~\cref{algo:thm-main-one}, it does the following:
\begin{enumerate}
\item If $A\PME B$, then it returns a sequence $\mathcal X$ of less than $2|I|$ many subsets of $I$ such that $A$ and $B$ are cut-transpose equivalent with respect to $\mathcal X$.
\item Otherwise, it returns ``No''.
\end{enumerate}
\end{lemma}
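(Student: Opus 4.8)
The plan is to prove this by induction on $|I|$, mirroring the structure of the proof of \cref{thm:main-one}, and verifying that each branch of \textsc{Finding-Cut-Sequence} faithfully implements the corresponding case in that proof. For the base case $|I| \le 3$ (or, more generally, when $A$ has no cut), \cref{lem:noCut} tells us that $A \PME B$ holds if and only if $A \dge B$; the algorithm checks precisely this using \cref{cl:DE-in-poly-time} and returns the empty sequence or ``No'' accordingly, so correctness is immediate and the returned sequence has length $0 < 2|I|$. For the inductive step, assume the lemma holds for all index sets of size smaller than $|I|$, and suppose $|I| \ge 4$ and $A$ has a cut. First I would handle the ``No'' output: if $A \not\PME B$, I must argue the algorithm never returns a sequence. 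This follows because every operation the algorithm performs on $B$ (the possible cut-transpose $\widetilde B \leftarrow \tw(B,X)$, and the recursive calls) preserves the relevant $\PME$ relations, so a wrong answer at the top could only arise from a wrong answer in a recursive call or a failed diagonal-equivalence check, both excluded by the induction hypothesis and \cref{cl:DE-in-poly-time}. The key structural facts are: \cref{lem:PMEwithTwist} (cut-transpose preserves $\PME$), so $\widetilde B \PME B$; the fact that $A \PME B$ implies $A[\comp S + s] \PME \widetilde B[\comp S + s]$ for the chosen minimal cut $S$; and the fact that cut-transpose and diagonal equivalence are the only moves ever applied, all of which preserve principal minors.

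Next I would establish correctness of the ``yes'' branch under the hypothesis $A \PME B$. Let $S$ be the minimal cut of $A$ found via \cref{lem:MinimalCutInPolynomialTime}. If $|S| \ge 3$, then by \cref{lem:Greaterthan2commoncut} $S$ is also a cut of $B$, so the algorithm does not abort at that check and sets $\widetilde B = B$. If $|S| = 2$ and $S$ is not a cut of $B$, then \cref{lem:cutsizetwo} guarantees that the set $X = \{t \in \comp S : A[S+t] \DE B[S+t]\} \cup \{s\}$ is a cut of $B$ and that $S$ is a cut of $\tw(B,X)$; this is exactly what \textsc{Min-Cut-size-Two} computes (here I need the small observation that the membership test $A[S+t]\DE B[S+t]$ versus $A[S+t]\DE B[S+t]^T$ partitions $\comp S$ into the sets called $P$ and $Q$ in \cref{lem:cutsizetwo}, so $X = P \cup \{s\}$ matches), so again the algorithm does not return ``No'' and sets $\widetilde B = \tw(B,X)$, which is $\PME$ to $B$ by \cref{lem:PMEwithTwist} and has $S$ as a common cut with $A$. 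In all surviving cases $A \PME \widetilde B$ and $S$ is a minimal cut of $A$ that is also a cut of $\widetilde B$.

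Now apply the induction hypothesis to the recursive call on $(A[\comp S + s], \widetilde B[\comp S + s], \comp S + s)$: since these are $\PME$ with nonzero off-diagonal entries and index set of size $|\comp S| + 1 < |I|$, the call returns a cut sequence $\mathcal X' = (X_1',\dots,X_k')$ of length $k < 2(|\comp S| + 1)$ witnessing their cut-transpose equivalence. \cref{cl:smallToBig} then shows that the lifted sequence $(X_1,\dots,X_k)$ built in the \textbf{for} loop (taking $X_i = X_i' \cup S$ if $s \in X_i'$, else $X_i = X_i'$) consists of valid cuts of the successive matrices $A_0 = A, A_1, \dots$, that $S$ remains a minimal cut of each $A_i$, and that $A_k[\comp S + s] \DE \widetilde B[\comp S + s]$. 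Then \cref{lem:SplussToN} (applied to $A_k$ and $\widetilde B$, possibly after replacing $\widetilde B$ by $\widetilde B^T$ — which the $\dge$ checks in the algorithm absorb, since $\dge$ allows transposition) gives that either $A_k \dge \widetilde B$ or $\tw(A_k, \comp S) \dge \widetilde B$, so the algorithm sets $\mathcal X$ to $(X_1,\dots,X_k)$ or $(X_1,\dots,X_k,\comp S)$ respectively and does not reach line~\ref{algo-thm-main-one-8}. Finally, if the $|S|=2$ non-cut case occurred, the algorithm appends $X$, and since $S$ is a cut of $\tw(B,X)$ and $\tw(\tw(B,X),X) = B$, cut-transpose equivalence of $A$ with $\widetilde B = \tw(B,X)$ via $\mathcal X$ yields cut-transpose equivalence of $A$ with $B$ via $(\mathcal X, X)$. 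The length bound: $k \le 2(|\comp S|+1) = 2|I| - 2|S| + 2 \le 2|I| - 2$ since $|S| \ge 2$, plus at most one appended $\comp S$ and at most one appended $X$, giving total length at most $2|I|$, and strictly less whenever $|S| > 2$ or no appends occur; a short case check confirms the strict bound $< 2|I|$ claimed.

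The main obstacle I expect is bookkeeping around the transpose: \cref{lem:SplussToN}, \cref{lem:size4Matrix}, and \cref{lem:cutsizetwo} all produce conclusions up to $\dge$ (i.e., up to transposition), whereas the recursive structure wants an honest $\DE$ at the level of $\comp S + s$; reconciling these — in particular justifying the remark in the proof of \cref{thm:main-one} that ``when $A'[\comp S + s] \DE B[\comp S + s]^T$ we work with $B^T$ instead of $B$'' without disturbing the earlier choice of $S$ as a common cut (note $S$ is a cut of $B$ iff it is a cut of $B^T$) — is the delicate part, and I would spell out carefully that all the intermediate claims are invariant under this replacement.
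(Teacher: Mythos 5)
Your proposal is correct and follows essentially the same route as the paper's proof: induction on $|I|$ with base case via \cref{lem:noCut}, then finding a minimal cut $S$, invoking \cref{lem:Greaterthan2commoncut} or \cref{lem:cutsizetwo} to establish a common cut (possibly after a cut-transpose on $B$), recursing on $A[\comp S+s]$ and $\widetilde B[\comp S+s]$, lifting via \cref{cl:smallToBig}, and closing with \cref{lem:SplussToN}, including the transpose bookkeeping (replacing $\widetilde B$ by $\widetilde B^T$ where needed, absorbed by $\dge$). Your length accounting is slightly loose in the middle (writing $k \le 2(|\comp S|+1)$ instead of the strict $k < 2(|\comp S|+1)$ from the induction hypothesis), but the final strict bound $< 2|I|$ does go through once that is corrected: $k \le 2|I| - 2|S| + 1$ plus at most two appended cuts, giving $\le 2|I|-1$ when $|S|=2$ and better otherwise.
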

\begin{proof}
We use induction to prove the above lemma.
\paragraph*{Base case.} The base case of our induction is either $|I|\leq 3$, or $A$ has no cut. From~\cref{lem:noCut}, if $|I|\leq 3$ or $A$ has no cut, then $A\PME B$ if and only if either $A\dge B$. Therefore, for the base case,  the function \textsc{Finding-Cut-Sequence} in~\cref{algo:thm-main-one} returns ``No'' when $A\not\PME B$. Otherwise, it returns an empty sequence. 

\paragraph*{Inductive step.} In~\cref{algo-thm-main-one-base-case}, the function \textsc{Finding-Cut-Sequence} (in~\cref{algo:thm-main-one}) computes a minimal cut $S$ in the matrix $A$. If $|S|\geq 3$ and $A\PME B$, then from~\cref{lem:Greaterthan2commoncut}, $S$ is also a cut in $B$. This implies that if $S$ is not a cut in $B$, then $A\not\PME B$. Therefore, when $|S|\geq 3$ and $S$ is not a cut of $B$, the function \textsc{Finding-Cut-Sequence} returns ``No''. 

Now, consider the case when the size of the minimal cut $S$ is two, but it is not a cut in $B$. Then, ~\cref{algo:thm-main-one} calls the function \textsc{Min-cut-size-Two} of ~\cref{algo:min-cut-two-algo}. It returns ``No'' when either a minor corresponding to set $S+t$ where $t\in I\setminus S$ is not same for $A$ and $B$ or when $X$, defined in \cref{algo2cutforsize2}, is not a cut of $B$. If a minor is not the same, then obviously $A\not \PME B$. Otherwise from ~\cref{lem:cutsizetwo}, $A\not \PME B$ when $X$ is not a cut of $B$. If $A\PME B$, then from ~\cref{lem:cutsizetwo} $X$ is a cut of $B$ such that $S$ is a cut of $\tw(B,X)$.

Note that $\widetilde B$ is initially assigned to $B$ at~\cref{algo-thm-main-one-1} of~\cref{algo:thm-main-one}. If the function \textsc{Min-cut-size-Two} in~\cref{algo:min-cut-two-algo} returns a cut $X$, then $\widetilde B$ is reassigned to $\tw(B,X)$ at~\cref{algo-thm-main-one-2} of~\cref{algo:thm-main-one}. Thus, at the end of~\cref{algo-thm-main-one-2} of~\cref{algo:thm-main-one}, we have two matrices $A$ and $\widetilde B$ such that $S$ is a common cut of them, and also $S$ is a minimal cut in $A$.

Let $s\in S$, $M=A[\comp{S}+s]$, and $N=\widetilde B[\comp{S}+s]$. Since the cardinality of $S$ is at least two, the size of $\comp{S}+s$ is less than $|I|$. Therefore, from the induction hypothesis, the function \textsc{Finding-Cut-Sequence} on input $(M,N,\comp{S}+s)$ returns $\mathcal X'$ as follows:
\begin{enumerate}
\item If $M\PME N$, then $\mathcal X'=(X_1', X_2', \ldots, X_k')$ such that $k< 2|\comp{S}+s|$, $X_i'\subseteq \comp{S}+s$, and $\mathcal X'$ produces a sequence of matrices $(M=M_0, M_1, M_2, \ldots, M_k)$ satisfying the following: 
\begin{equation}
\label{eqn:correctness-of-twist-function-two}
\forall i\in[k],\ M_i=\tw(M_{i-1}, X_i') \text{ where } X_i' \text{ is a cut in } M_{i-1},\text{ and }M_k\dge N.
\end{equation}
\item Otherwise, $\mathcal X'=$``No''. 
\end{enumerate}
If $M\not\PME N$, then $A\not\PME \widetilde B$. Applying~\cref{lem:PMEwithTwist}, $A\not\PME \widetilde B$ implies that $A\not\PME B$. Therefore, when  $\mathcal X'=$``No'', the function \textsc{Finding-Cut-Sequence} also returns ``No''. 

Now assume that $M\PME N$, and $\mathcal X'=(X_1', X_2', \ldots, X_k')$ satisfies~\cref{eqn:correctness-of-twist-function-two}. Let $(X_1, X_2,\ldots, X_k)$ be the sequence of subsets of $I$ defined by the `for loop' in~\cref{algo-thm-main-one-3} of~\cref{algo:thm-main-one}. From this sequence of subsets, the function \textsc{Finding-Cut-Sequence} defines a sequence of matrices $(A=A_0, A_1, A_2,\ldots, A_k)$ such that $A_i=\tw(A_{i-1}, X_i)$ for all $i\in[k]$. From \cref{cl:smallToBig}, we get that $S$ is a minimal cut of $A_k$ and $A_k[\comp{S}+s]\dge \widetilde B[\comp{S}+s]$. Suppose $A_k[\comp{S}+s]\DE \widetilde B[\comp{S}+s]$, then from~\cref{lem:SplussToN}, we have either $A_k\DE \widetilde B$ or $\tw(A_k, \comp{S})\DE \widetilde B$ when $A_k\PME \widetilde B$. Suppose $A_k[\comp{S}+s]\DE \widetilde B[\comp{S}+s]^T$ or in other words $A_k[\comp{S}+s]\DE \widetilde B^T[\comp{S}+s]$. Then, from~\cref{lem:SplussToN}, we have either $A_k\DE \widetilde B^T$ or $\tw(A_k, \comp{S})\DE \widetilde B^T$ when $A_k\PME \widetilde B^T\PME B$.

 Hence, if $A_k\not \dge \widetilde B$ and $\tw(A_k, \comp{S})\not \dge \widetilde B$, it follows that $A\not\PME \widetilde B$, in which case, the function \textsc{Finding-Cut-Sequence} returns ``No''. When $A_k\dge \widetilde B$, the function \textsc{Finding-Cut-Sequence} defines $\mathcal X$ as $(X_1, X_2,\ldots, X_k)$, and when $\tw(A_k, \comp{S})\dge \widetilde B$, it defines $\mathcal X$ as $(X_1, X_2,\ldots, X_k, \comp{S})$. Therefore, at the end of~\cref{algo-thm-main-one-3} in~\cref{algo:thm-main-one}, we obtain a  sequence $\mathcal X$ such that $A$ and $B$ are cut-transpose equivalent with respect to $\mathcal X$.

Note that if $S$ is a minimal cut in $A$ of size $2$ and it is not a cut in $B$, then $\widetilde B$ is defined as $\tw(B, X)$. This implies that $B=\tw(\widetilde B, X)$. Therefore, in~\cref{algo-thm-main-one-4} of~\cref{algo:thm-main-one}, $\mathcal X$ is updated by appending $X$ at its end. Thus, we finally have a sequence $\mathcal X$ of subsets of $I$ such that $A$ and $B$ are cut-transpose equivalent with respect to $\mathcal{X}$.

\end{proof}

\subsection{Time complexity of~\cref{algo:thm-main-one}}
\label{subsec:tim-complexity-algo-thm-main-one}
\sgc{Take care of the bit complexity stuff, and change writing according to that!}
~\cref{algo:thm-main-one} first computes a diagonal matrix $D$ such that both $A+D$ and $B+D$ are nonsingular and all the entries of $A'=\adj{A+D}$ and $B'=\adj{B+D}$ are nonzero.~\cref{cl:YSubstitution} ensures that we can compute such a diagonal matrix $D$ in $\poly(n)$ time. Then, it calls the function \textsc{Finding-Cut-Sequence}. The function makes at most one recursive call to itself such that the size of the input matrices reduces by at least one. The matrix operations like finding the cut-transpose of a matrix with respect to a given cut, finding a minimal cut (\cref{lem:MinimalCutInPolynomialTime}), testing diagonal equivalence (\cref{cl:DE-in-poly-time}), and others can be performed in polynomial time in terms of the matrix size. Hence, the overall runtime of the algorithm is polynomial.
\section{PIT for Sum of two DET1}
\label{sec:PIT}

In this section, we show \cref{thm:main-three}. Given two sequences of $n\times n$ matrices $(A_0,A_1,\ldots, A_m)$ and $(B_0,B_1,\ldots, B_m)$ over a field $\F$ such that the rank of $A_i$ and $B_i$ is at most $1$ for $1\leq i\leq n$, the goal is to decide whether two polynomials $P_1=\det(A_0+A_1y_1+\ldots+A_my_m)\text{ and } P_2=\det(B_0+B_1y_1+\ldots+B_my_m)$ are the same in $\poly(m,n)$ time. First, we consider the case when $A_0$ and $B_0$ are the zero matrix. Then, we reduce the general case where there are no constraints on $A_0$ and $B_0$ to this case. Then, we give a polynomial time reduction from this problem to the problem of equivalence testing of principal minors of two $m\times m$ matrices. For integers $p\text{ and }q$, let $0_p$ and $0_{p,q}$ denote the $p\times p$ and $p\times q$ matrix, respectively, with all zeros.

   \subsection[A0=B0=0]{$\boldsymbol{A_0=B_0=0_n}.$} 
 Let $A_j=u_{1,j}\cdot v_{1,j}^T$ and $B_j=u_{2,j}\cdot v_{2,j}^T$ for each $j\in [m]$ where $u_{1,j},v_{1,j},u_{2,j},v_{2,j}\in \F^n$.  Let $U_i, V_i$ be $n\times m$ matrices such that their $j$th column are $u_{i,j}$ and $v_{i,j}$, respectively, for $i\in \{1,2\}$ and $j\in [m]$. Let $Y$ be an $m\times m$ diagonal matrix with indeterminate $y_i$ as the $i$th diagonal entry. Then, \begin{equation}\label{eq:AequalUXV} A_1y_1+\ldots+A_my_m= U_1YV_1^T \text{ and } B_1y_1+\ldots+B_my_m=U_2YV_2^T.\end{equation}
 For a subset $T$ of $[m],$ let  $y_T=\prod_{e\in T}y_e$, $U_{i,T}=U_i[[n],T]$ and $V_{i,T}=V_i[[n],T]$ for $i\in \{1,2\}$.
 Using the Cauchy-Binet formula for multiplying two rectangular matrices, 
 \begin{equation*}\det(U_iYV_i^T)= \sum\limits_{T\subseteq [m], |T|=n}\left(\det(U_{i,T})\det(V_{i,T}) y_T \right)  \:\: \text{ for } i\in \{1,2\}.\end{equation*} Hence, by comparing coefficients of monomials of $P_1$ and $P_2$, we get
 \begin{equation}\label{eq:Det1ToUVE}P_1=P_2 \iff \det(U_{1,T})\det(V_{1,T}) = \det(U_{2,T})\det(V_{2,T}) \:\: \forall T\subseteq [m] \text{ with }|T|=n.\end{equation}
Now, we discuss how to test the latter part mentioned above. First, we find a set $T$ of size $[n]$ such that $\det(U_{1,T})\det(V_{1,T})$ is non-zero using a matroid intersection algorithm for matroids represented by $U_1$ and $V_1$ in $\poly(m,n)$ time. If such $T$ doesn't exist, then $P_1=0$. Similarly, we can check whether $P_2$ is zero and decide whether $P_1=P_2$. Suppose such a set $T$ exists and without loss of generality, let $T=[n]$. If $\det(U_{1,[n]})\det(V_{1,[n]})\neq \det(U_{2,[n]})\det(V_{2,[n]}),$ then $P_1\neq P_2$ from \cref{eq:Det1ToUVE}.

Suppose $\det(U_{1,[n]})\det(V_{1,[n]})= \det(U_{2,[n]})\det(V_{2,[n]})$. Now, we have to check this for other sets $T$ of size $n$. Let 
$U_i'=U_{i,[n]}^{-1}\cdot U_i \text{ and } V_i'=V_{i,[n]}^{-1}\cdot V_i\text{ for }i=1,2.$ Since $U_i=U_{i,[n]}\cdot U_i', V_i=V_{i,[n]}\cdot V_i'$ for $i=1,2$ and $\det(U_{1,[n]})\det(V_{1,[n]})= \det(U_{2,[n]})\det(V_{2,[n]})$, for any set $T$ of size $n$,
\begin{equation} \label{eq:UvToU'V'}
    \det(U_{1,T})\det(V_{1,T}) = \det(U_{2,T})\det(V_{2,T}) \iff  \det(U_{1,T}')\det(V_{1,T}') = \det(U_{2,T}')\det(V_{2,T}')
\end{equation}

Note that $U_{i,[n]}'=V_{i,[n]}'=I_n$. For $i=1,2$, let $\widehat U_i$ and $\widehat V_i$ be the $n\times (m-n)$ matrices defined as $ U_i'[[n],[m]\setminus [n]]$ and $ V_i' [[n],[m]\setminus [n]]$, respectively. For $i\in \{1,2\}$ and a set $T=T_1'\sqcup T_2'$  of size $n$ with $T_1'\subseteq [n], T_2'\subseteq [m]-[n]$ such that $T_2'=\{n+e\mid e\in T_2\}$ where $T_2\subseteq [m-n]$, 
\begin{equation} \label{eq:U'V'minors}\det(U_{i,T}')= \sigma(T) \det(U_i'[[n]\setminus T_1', T_2']) \text{ and } \det(V_{i,T}')= \sigma(T) \det(V_i'[[n]\setminus T_1', T_2']) \end{equation}  
where $\sigma:\binom{[m]}{n}\xrightarrow{}\{1,-1\}$ is some sign function on $n$ sized subsets of $[m]$. Since $U_i'[[n]\setminus T_1', T_2']= \widehat U_i[T_1,T_2]$ and $V_i'[[n]\setminus T_1', T_2']= \widehat V_i[T_1,T_2] $  where $T_1=[n]\setminus T_1'$, using \cref{eq:Det1ToUVE,eq:UvToU'V',eq:U'V'minors} we get
\begin{eqnarray} \label{eq:P1P2}
    P_1=P_2 & \iff \det(\widehat U_1[T_1,T_2])\det(\widehat V_1[T_1,T_2]) = \det(\widehat U_2[T_1,T_2])\det(\widehat V_2[T_1,T_2])\\
     & \text{ for each }  T_1\subseteq[n], T_2\subseteq [m-n] \text{ with } |T_1|=|T_2| \nonumber 
\end{eqnarray}

Let $A$ and $B$ be the $m\times m$ matrices defined as follows: 
\[
A=
\left[
\begin{array}{c c c |  c}
 &  0_{m-n} &  & \widehat V_1^T\\
 &              &  & \\
\hline
 &-\widehat U_1 &  &0_{n}
\end{array}
\right]\quad\text{and}\quad
B=
\left[
\begin{array}{c c c |  c}
 &  0_{m-n} &  & \widehat V_2^T\\
 &              &  & \\
\hline
 &-\widehat U_2 &  &0_{n}
\end{array}
\right].
\]
Let us consider the principal minors of $A$ and $B$. If a set $T$ is a subset of $[m-n]$ or $[m]-[m-n]$, then the corresponding principal minors of both $A$ and $B$ are zero. Consider a set $T=T_1'\sqcup T_2$ such that $T_2\subseteq [m-n]$ and $T_1'\subseteq [m]-[m-n]$ such that $T_1'=\{m-n+e \mid e\in T_1\}$ where $T_1\subseteq [n]$. Then, \[A[T]=\left[
\begin{array}{c c c |  c}
 &  0_{|T_2|} &  & \widehat V_1[T_1,T_2]^T\\
 &              &  & \\
\hline
 &-\widehat U_1[T_1,T_2] &  &0_{|T_1|}
\end{array}
\right] \quad\text{and}\quad
B[T]=\left[
\begin{array}{c c c |  c}
 &  0_{|T_2|} &  & \widehat V_2[T_1,T_2]^T\\
 &              &  & \\
\hline
 &-\widehat U_2[T_1,T_2] &  &0_{|T_1|}
\end{array}
\right].\] 
Note that if $|T_1|\neq |T_2|$, then both $\det(A[T])$ and $\det(B[T])$ are zero. If $|T_1|=|T_2|$, then 
\begin{equation}\label{eq:minorDefAB}
    \det(A[T])=  \det(\widehat U_1[T_1,T_2])\det(\widehat V_1[T_1,T_2]) ;  \det(B[T])=  \det(\widehat U_2[T_1,T_2])\det(\widehat V_2[T_1,T_2]).
\end{equation}
From above discussion and \cref{eq:minorDefAB}, 
\begin{eqnarray} \label{eq:PMEAB}
    A\PME B\iff \det(\widehat U_1[T_1,T_2])\det(\widehat V_1[T_1,T_2]) =  \det(\widehat U_2[T_1,T_2])\det(\widehat V_2[T_1,T_2])\\  \forall T_1\subseteq[n], T_2\subseteq [m-n] \text{ with } |T_1|=|T_2|. \nonumber
\end{eqnarray}

From Eq. \ref{eq:P1P2} and Eq. \ref{eq:PMEAB}, $P_1=P_2\iff A\PME B$. Note that $A$ and $B$ can be computed in  $\poly(m,n)$ time. From \cref{thm:main-one}, we can check whether $A\PME B$ in $\poly(m)$ time.
This completes the proof of \cref{thm:main-three} when $A_0$ and $B_0$ are the zero matrix.

\subsection[A0nq0]{No constraint on $\boldsymbol{A_0}$ and $\boldsymbol{B_0}$}

From \cref{eq:AequalUXV}, $P_1=\det(A_0+U_1YV_1^T)$ and $P_2=\det(B_0+U_2YV_2)^T$. From \cite[Lemma 4.3]{DBLP:conf/stoc/GurjarT17} $P_1=\det(C_1)$ and $P_2=\det(C_2)$ such that 
\[C_1=  \begin{pmatrix}
    I_m & Y & 0_{m,n}\\
    0_m & I_m & V_1^T\\
    U_1 & 0_{n,m} &  A_0
\end{pmatrix} \text{ and } C_2 =  \begin{pmatrix}
    I_m & Y & 0_{m,n}\\
    0_m & I_m & V_2^T\\
    U_2 & 0_{n,m} &  A_0
\end{pmatrix}.\]
  If we compute $P_i=\det(C_i)$ using the Generalized Laplace Theorem by fixing the first $m$ rows, we get that $P_i$ is multilinear and the coefficient of $y_T$ for a subset $T$ of $[m]$ is $$\sigma(T)\det(C_i[[2m+n]\setminus [m],\phi(T)])$$ where $\sigma$ is some sign function depending on set $T$ and $\phi:2^{[m]}\xrightarrow{}\binom{[2m+n]}{m+n}$ such that $\phi(T)= T\cup \{e+m\mid e\notin T \}\cup ([2m+n]\setminus [2m]).$  Hence,
  \begin{equation} \label{eq:C1C2}
  P_1=P_2\iff  \det(C_1[\comp{[m]},\phi(T)]) = \det(C_2[\comp{[m]},\phi(T)]) \:\: \forall T\subseteq[m].  \end{equation}
Let $V$ be the $(m+n)\times (2m+n)$ matrix
 $\begin{pmatrix}
     I_m & I_m & 0_{m,n}\\
     0_{n,m} & 0_{n,m} & I_n
 \end{pmatrix}.$ 
 For any set $T'\subset [2m+n]$ of size $m+n$ which is not in image of $\phi,$ $\det(V[[m+n],T'])=0$ and if $T'$ belongs to image of map $\phi,$ then $\det(V[[m+n],T'])=1$. Hence,
 \begin{eqnarray} \label{eq:VC1C2}
      \det(C_1[\comp{[m]},\phi(T)]) = \det(C_2[\comp{[m]},\phi(T)]) \:\: \forall T\subseteq[m] \iff  \nonumber\\ 
      \det(C_1[\comp{[m]},T'])\det(V_{T'})= \det(C_2[\comp{[m]},T'])\det(V_{T'})\:\: \forall T'\in \binom{[2m+n]}{m+n}.
 \end{eqnarray}
 Note that the right-hand side of the above equation is similar to \cref{eq:Det1ToUVE}. Hence, using similar arguments from the previous section, checking the later part of Eq. \ref{eq:VC1C2} can be reduced to checking whether principal minors of two $(2m+n)\times (2m+n)$ matrices (that can be computed in $\poly(n)$ time) are the same. Hence, from \cref{eq:C1C2}, checking whether $P_1=P_2$ reduces to checking whether principal minors of two $(2m+n)\times (2m+n)$ matrices are the same. This completes the proof of \cref{thm:main-three}.



\bibliographystyle{alpha} 
\bibliography{biblio.bib}
\appendix
\section{Missing Proofs from~\cref{sec:prelim}}
\label{appendix:missing-proofs-from-prprelim}
\repstatement{lem:PMEwithTwist}{Let $A$ be an $n\times n$ irreducible matrix over a field $\F$ with a cut $X\subset [n]$. Then, $A\PME \tw(A,X)$.}

\begin{proof}
Let $q$ and $u$ be the first non-zero row of $A[X,\comp{X}]$ and the first non-zero column of $A[\comp{X},X]$, respectively. Without loss of generality, we can assume that the cut $X$ is a prefix of the index set and hence $A$ can be written as follows: 
\[A=
\begin{blockarray}{ccc}
         & X          & \comp{X}\\
\begin{block}{c(cc)}
X        & M          & p\cdot q^T\\
&&\\
\comp{X} & u\cdot v^T & N\\
\end{block}
\end{blockarray},
\]
where $p, v\in\F^{|\comp{X}|}$ and $q, u\in\F^{|X|}$. Then, from \cref{def:twist-operation},\sgc{May need reference fixing!} 
\[\tw(A,X)=
\begin{blockarray}{ccc}
         & X          & \comp{X}\\
\begin{block}{c(cc)}
X        & M          & p\cdot u^T\\
&&\\
\comp{X} & q\cdot v^T & N^T\\
\end{block}
\end{blockarray}.
\]
Let $S\subseteq [n]$. Observe that if $S$ is a subset of $X$ or $\comp{X}$, then $\det(A(S))= \det(\tw(A,X)(S))$. Now consider that  $S=S_1\sqcup S_2$ such that $S_1$ and $S_2$ are nonempty subsets of  $X$ and $\comp{X}$, respectively.  Next, we prove that $\det(A(S))=\det(\tw(A,X)(S))$.

Assume that the coordinates $p, v$ are indexed by $X$ and the coordinates of $q, u$ are indexed by $\comp{X}$. By $p_{S_1}, q_{S_2}, u_{S_1}$ and $v_{S_2}$, we denote the projection of the respective vectors on the respective coordinates.  Let $A'=A[S]$ and $B'=\tw(A, X)[S]$. Then, 
\[
A'=
\begin{pmatrix}
M[S_1]                 & p_{S_1}\cdot q_{S_2}^T\\
&\\
u_{S_2}\cdot v_{S_1}^T & N[S_2]
\end{pmatrix}
,\:\:\text{ and }\:\: 
B'=
\begin{pmatrix}
M[S_1]                 & p_{S_1}\cdot u_{S_2}^T\\
&\\
q_{S_2}\cdot v_{S_1}^T & N[S_2]^T
\end{pmatrix}.
\]
If either of $p_{S_1}, q_{S_2}, u_{S_2}$, or $v_{S_1}$ is the zero vector, then $$\det(A')=\det(B')=\det(M[S_1])\det(N[S_2]).$$ Next, assume that all of them are nonzero.

Let $\ell=|S|$, $k=|S_1|$, and $K=[k]$. Suppose that the rows and columns of $A'$ and $B'$ are indexed by $[\ell]$, and the rows and columns of $M[S_1]$ are indexed by $K$.  For each $i\in K$, let $M_i$ denote the $k\times k$ matrix obtained by removing $i$th column of $M[S_1]$ and appending $p_{S_1}$ as the $k$th column. For $j\in \comp{K}$, let $N_j$ denote the $(l-k)\times (l-k)$ matrix obtained by removing $j$th column of $N[S_2]$ and adding $u_{S_2}$ as the first column. Using the Generalized Laplace Theorem (see~\cite[Theorem~3.1]{Ahmadieh23}),  $\det(A')$ can be written as follows.
$$\det(A') = \sum_{T\subseteq [\ell],\, |T|=k} (-1)^{\sum K+\sum T}\det(A'[K, T])\det(A'[\comp{K},\comp{T}]).$$
Note that for all $T\subset [\ell]$ with $|T\cap \comp{K}|\geq 2$, the submatrix $A'[K, T])$ is not full rank since $\rank(A'[K,\comp{K}])\leq 1$. Therefore, for all such $T\subseteq [\ell]$ with $|T|=k$,  $\det(A'[K, T])=0$. This implies that 
$$\det(A') = \det(M[S_1])\det(N[S_2]) + \sum_{i\in K,j\in \comp{K}}(-1)^{j-i}\det(A'[K, K-i+j])\det(A'[\comp{K},\comp{K}+i-j]).$$
Observe that for all $i\in K$ and $j\in \comp{K}$, $$\det(A'[K, K-i+j])=q_{S_2}[j-k]\det(M_i)\text{, and }\det(A[\comp{K}, \comp{K}+i-j])=v_{S_1}[i]\det(N_j).$$ Therefore, 

\begin{align*}
\det(A') &= \det(M[S_1])\det(N[S_2]) + \sum_{i\in K,j\in \comp{K}}(-1)^{j-i}v_{S_1}[i]q_{S_2}[j-k]\det(M_i)\det(N_j)\\
           &= \det(M[S_1])\det(N[S_2])+\left(\sum_{i\in K}(-1)^{i} v_{S_1}[i]\det(M_i)\right)\left(\sum_{j\in \comp{K}}(-1)^{j} q_{S_2}[j-k]\det(N_j)\right).
\end{align*}
Similarly, using the Generalized Laplace Theorem (see~\cite[Theorem~3.1]{Ahmadieh23}), we compute $\det(B')$. For each $j\in \comp{K}$, let $\widetilde{N}_j$ denote the matrix obtained by removing $j$th row of $N[S_2]$ and adding $q_{S_2}$ as the first row. Then, 
$$\det(B')=\det(M[S_1])\det(N[S_2])+\left(\sum_{i\in K}(-1)^{i} v_{S_1}[i]\det(M_i)\right)\left(\sum_{j\in \comp{K}}(-1)^{j} u_{S_2}[j-k]\det(\widetilde N_j)\right).$$ 
Let $P$ be the following $(|S_2|+1)\times (|S_2|+1)$ matrix
\[
P=
\begin{pmatrix}
0       & q_{S_2}^T\\
u_{S_2} & N[S_2]
\end{pmatrix}.
\]
Then, \[(-1)^k\det(P)=  \sum_{j\in\comp{K}}(-1)^j q_{S_2}[j-k]\det(N_j)= \sum_{j\in \comp{K}}(-1)^ju_{S_2}[j-k]\det(\widetilde{N}_j).\] The above equalities follow from the expression for computing the determinant of $P$ by fixing the first row and the first column of $P$, respectively. Hence, $\det(A[S])=\det(\tw(A,X)[S])$.
\end{proof}

{\em 
\paragraph*{Notations.} Suppose that $A$ is an $n\times n$ irreducible matrix over a field $\F$ with a cut $X\subset [n]$. Let $q$ and $u$ be the first non-zero row of $A[X,\comp{X}]$ and the first non-zero column of $A[\comp{X},X]$, respectively. Then, the matrix $A$ has the following structure:
\[A=
\begin{blockarray}{ccc}
         & X          & \comp{X}\\
\begin{block}{c(cc)}
X        & M         & p\cdot q^T\\
&&\\
\comp{X} & u\cdot v^T & N\\
\end{block}
\end{blockarray}\,,
\]
where $p, v\in\F^{|X|}$ and $q, u\in\F^{|\comp{X}|}$. Without loss of generality, assume that $X=[\ell]$. Then, $\comp{X}=[n]\setminus [\ell]$. For each $i\in X$, let 
\begin{enumerate}
\item $M_i^C$ denote the $\ell\times \ell$ matrix obtained by removing $i$th column of $M$ and appending $p$ as the $\ell$th column.
\item $M_i^R$ denote the $\ell\times \ell$ matrix obtained by removing $i$th row of $M$ and appending $v^T$ as the $\ell$th row.
\end{enumerate}
For each $j\in \comp{X}$, let $\comp{X}_j$ denote the set $\comp{X}-j$.
Let $p^A\in\F^{|X|}$ and $q^A\in\F^{|\comp{X}|}$ be defined as follows: for all $i\in X$ and $j\in\comp{X}$,
\begin{equation}
\label{eqn:vector1-submatrix-in-adj}
p_A[i]=(-1)^{\ell+i+1}\det(M_i^C)\ \  \text{ and }\ \  q_A[j-\ell] = \sum_{k\in\comp{X}}(-1)^{k+j}q[k-\ell]\cdot \det(N[\comp{X}_j, \comp{X}_k]). 
\end{equation}
Similarly, let $v^A\in\F^{|X|}$ and $u^A\in\F^{|\comp{X}|}$ be defined as follows: for all $i\in X$ and $j\in \comp{X}$,
\begin{equation}
\label{eqn:vector2-submatrix-in-adj}
v_A[i] = (-1)^{\ell+i+1}\det(M_i^R)\ \  \text{ and }\ \  u_A[j-\ell]=\sum_{k\in\comp{X}}(-1)^{k+j}u[k-\ell]\cdot \det(N[\comp{X}_k, \comp{X}_j]).
\end{equation}
Based on the above notations, we have the following claim.
}
\begin{claim}
\label{cl:submatrix-in-adj}
Considering the notations defined above, 
\[
A^{\mathrm{adj}}[X, \comp{X}]=p_A\cdot q_A^T\ \  \text{ and }\ \  A^{\mathrm{adj}}[\comp{X}, X]=u_A\cdot v_A^T,
\]
where $p_A, q_A, u_A$ and $v_A$ are defined as~\cref{eqn:vector1-submatrix-in-adj} and~\cref{eqn:vector2-submatrix-in-adj}.
\end{claim}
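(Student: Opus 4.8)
The goal is to compute the off-diagonal blocks $A^{\mathrm{adj}}[X,\comp{X}]$ and $A^{\mathrm{adj}}[\comp{X},X]$ of the adjugate, and show they have rank at most one with the explicit rank-one factorizations claimed. The plan is to work directly from the definition of the adjugate: each entry $A^{\mathrm{adj}}[i,j]$ (for $i\in X$, $j\in\comp{X}$) equals $(-1)^{i+j}\det(A[\,[n]-j,\,[n]-i\,])$, the signed $(j,i)$-cofactor. So I would fix $i\in X$ and $j\in\comp{X}$ and analyze the minor obtained by deleting column $i$ and row $j$ from $A$. The key structural input is that $X$ is a cut, so $A[X,\comp{X}]=p\cdot q^T$ and $A[\comp{X},X]=u\cdot v^T$ are rank one; after deleting row $j\in\comp{X}$ and column $i\in X$, the surviving matrix still has a block structure with the top-right block a submatrix of $p q^T$ and the bottom-left block a submatrix of $u v^T$.

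The heart of the computation is a Laplace / generalized Laplace expansion of the minor $\det(A[\,[n]-j,\,[n]-i\,])$ along the rows indexed by $X-i$ (there are $\ell-1$ of them) versus the remaining rows. Because the bottom-left block is a submatrix of the rank-one matrix $u v^T$, in any Laplace term the "bottom-left" portion has rank at most one, so it contributes a nonzero $(\ell-1)\times(\ell-1)$ subdeterminant only when we pick exactly one column from the $X$-columns (all but one) — this is the same vanishing argument used in the proof of \cref{lem:PMEwithTwist}. Tracking which single $X$-column is "borrowed" by the bottom block, the expansion collapses to a sum whose factors separate into (a) a determinant built from $M=A[X]$ with one column replaced by $p$ — giving the $M_i^C$ matrices and hence $p_A[i]$ — and (b) a signed combination of $(|\comp{X}|-1)\times(|\comp{X}|-1)$ minors of $N=A[\comp{X}]$ weighted by the entries of $q$, with row $j$ deleted — giving exactly $q_A[j-\ell]$ as in \cref{eqn:vector1-submatrix-in-adj}. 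Putting the pieces together with the correct signs yields $A^{\mathrm{adj}}[i,j]=p_A[i]\,q_A[j-\ell]$, i.e. $A^{\mathrm{adj}}[X,\comp{X}]=p_A q_A^T$. The claim for $A^{\mathrm{adj}}[\comp{X},X]=u_A v_A^T$ is symmetric: it follows by applying the same argument to $A^T$ (whose adjugate is $(A^{\mathrm{adj}})^T$) and noting that $A^T$ has the same cut $X$ with the roles of $p,q$ and $u,v$ interchanged, so one reads off $v_A$ (the $M_i^R$-determinants, coming from replacing a row of $M$ by $v^T$) and $u_A$ in place of $p_A$ and $q_A$.

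I expect the main obstacle to be bookkeeping of signs: the cofactor sign $(-1)^{i+j}$, the signs introduced when moving the borrowed column into canonical position inside $M_i^C$ (hence the $(-1)^{\ell+i+1}$ in the definition of $p_A[i]$), and the alternating signs in the Laplace expansion over the choice of which column of $N$ (after deleting $j$) pairs with the borrowed entry of $q$. None of these is conceptually hard, but assembling them so that the final product has no residual sign requires care; the cleanest route is to first verify the identity for a single convenient choice, say by reducing (via a permutation similarity that preserves the cut) to the case $X=[\ell]$ with $i=\ell$ and $j=\ell+1$, check the signs there, and then argue the general entry follows by the definitions \cref{eqn:vector1-submatrix-in-adj}–\cref{eqn:vector2-submatrix-in-adj} which have been set up precisely to absorb the position-dependent signs. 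A secondary point to be careful about: one must confirm that the "first non-zero row $q$" and "first non-zero column $u$" normalization used in the statement is consistent with the decomposition $A[X,\comp{X}]=p q^T$, $A[\comp{X},X]=u v^T$ used here — but since the claim only asserts existence of a rank-one factorization $p_A q_A^T$, any consistent choice works and this is not a real difficulty.
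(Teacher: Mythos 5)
Your plan matches the paper's proof: expand each cofactor $A^{\mathrm{adj}}[i,j]=(-1)^{i+j}\det\left(A[(n)_j,(n)_i]\right)$ via the Generalized Laplace Theorem, observe that the rank-one block $A[X,\comp{X}]=p\cdot q^T$ forces the chosen column set $T$ in each surviving term to contain all of $X-i$ plus exactly one index from $\comp{X}$, and then read off the product $p_A[i]\cdot q_A[j-\ell]$; the other block is handled by a symmetric computation. One small bookkeeping slip in your sketch: you say ``rows indexed by $X-i$,'' but since $i$ is deleted as a \emph{column}, the minor $A[(n)_j,(n)_i]$ still contains all $\ell$ rows of $X$ while its columns restrict to $X-i$, and the paper's expansion is indeed along those $\ell$ rows (choosing $\ell$ columns), not along an $(\ell-1)$-row set.
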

\begin{proof}
The proof of the above claim will closely follow the proof of~\cite[Lemma~4.5]{Ahmadieh23}. For all $i\in[n]$, let $(n)_i$ denote the set $[n]\setminus \{i\}$. Then, for all $i\in X$  and $j\in\comp{X}$, $$A^{\mathrm{adj}}[i,j]=(-1)^{i+j}\cdot \det\left(A[(n)_j, (n)_i]\right).$$
For any $T\subseteq (n)_i$, let $m_T$ denote the number of elements in $T$ that are greater than $i$. For example, if $T$ contains exactly one element from $\comp{X}$ and $|T|=\ell$, then $m_T=\ell-i+1$. Now, using the Generalized Laplace Theorem (see~\cite[Theorem~3.1]{Ahmadieh23}), for all $i\in X$  and $j\in\comp{X}$,
\[
\det\left(A[(n)_j, (n)_i]\right) =\sum_{T\subseteq (n)_i,\ |T|=\ell}(-1)^{m_T+\sum X+\sum T}\det(A[X, T])\cdot \det(A[\comp{X}_j, \comp{T}_i]). 
\]
Since the rank of $A[X, \comp{X}]$ is at most one, for all $\ell$-size subsets $T$ of $[n]_i$ with $|T\cap \comp{X}|\geq 2$, the value of $\det(A[X, T])$ is zero. Thus, from the above equation, 
\begin{align*}
\det\left(A[(n)_j, (n)_i]\right) &= (-1)^{\ell-i+1}\cdot \sum_{k\in\comp{X}}(-1)^{k-i}\det(M_i^C)\cdot q[k-\ell]\cdot \det(A[\comp{X}_j, \comp{X}_k])\\
&=(-1)^{\ell+1}\det(M_i^C) \cdot\sum_{k\in\comp{X}}(-1)^{k} q[k-\ell]\cdot \det(A[\comp{X}_j, \comp{X}_k])
\end{align*}
This implies that, for all $i\in X$ and $j\in\comp{X}$, 
\begin{align*}
A^{\mathrm{adj}}[i,j] &= (-1)^{\ell+i+1}\det(M_i^C)\cdot\sum_{k\in\comp{X}}(-1)^{k+j} q[k-\ell]\cdot \det(A[\comp{X}_j, \comp{X}_k])\\
&= p_A[i]\cdot q_A[j-\ell].
\end{align*}
Similarly, we can show that for all $i\in X$ and $j\in\comp{X}$, $$A^{\mathrm{adj}}[j,i] = u_A[j-\ell]\cdot v_A[i].$$ This completes the proof of the above claim.
\end{proof}

\repstatement{lem:AdjugateCut}{\em Let $A$ be an $n\times n$ matrix over a field $\F$. Let $D$ be an $n\times n$ diagonal matrix over $\F$ such that $A+D$ is non-singular. Then, $A$ and $\adj{A+D}$ have the same set of cuts.}
\begin{proof}
Observe that $A$ and $A+D$ have the same set of cuts. Next, we show that $A+D$ and $\adj{A+D}$ have the same set of cuts, implying that $A$ and $\adj{A+D}$ have the same set of cuts. From~\cref{cl:submatrix-in-adj}, any cut in $A+D$ is also a cut in $\adj{A+D}$. For the converse direction, assume that $X$ is a cut in $\adj{A+D}$. Therefore, $X$ is also a cut in $(A+D)^{-1}$ since $A+D$ is non-singular, $$\adj{\adj{A+D}}=\det(A+D)^{n-2}(A+D).$$ Thus, again using~\cref{cl:submatrix-in-adj}, $X$ is also a cut in $A+D$.
\end{proof}

\repstatement{lem:AdjugateTwistEquality}{
Let $A$ be an $n\times n$ irreducible matrix over a field $\F$. Then, a cut $X\subseteq [n]$ of $A$ is also a cut of $A^{\mathrm{adj}}$ and $$\tw(A, X)^{\mathrm{adj}}\DE \tw(A^{\mathrm{adj}}, X).$$ 
}
\begin{proof} 
Without loss of generality, assume that $X=[\ell]$. Then, $\comp{X}=[n]\setminus [\ell]$. Let $q$ and $u$ be the first non-zero row of $A[X,\comp{X}]$ and the first non-zero column of $A[\comp{X},X]$, respectively. Then, $A$ can be written as follows. 
\[A=
\begin{blockarray}{ccc}
         & X          & \comp{X}\\
\begin{block}{c(cc)}
X        & M         & p\cdot q^T\\
&&\\
\comp{X} & u\cdot v^T & N\\
\end{block}
\end{blockarray}\,,
\]
where $p, v\in\F^{|X|}$ and $q, u\in\F^{|\comp{X}|}$. From~\cref{def:twist-operation}, \sgc{May need some fixing!}
\[\widetilde A=\tw(A, X)=
\begin{blockarray}{ccc}
         & X          & \comp{X}\\
\begin{block}{c(cc)}
X        & M         & p\cdot u^T\\
&&\\
\comp{X} & q\cdot v^T & N^T\\
\end{block}
\end{blockarray}\,.
\]
Repeating some notations from the above, for each $i\in X$, let 
\begin{enumerate}
\item $M_i^C$ denote the $\ell\times \ell$ matrix obtained by removing $i$th column of $M$ and appending $p$ as the $\ell$th column.
\item $M_i^R$ denote the $\ell\times \ell$ matrix obtained by removing $i$th row of $M$ and appending $v^T$ as the $\ell$th row.
\end{enumerate}
For each $j\in \comp{X}$, let $\comp{X}_j$ denote the set $\comp{X}-j$. Then, using~\cref{cl:submatrix-in-adj}, 
\[A^{\mathrm{adj}}=
\begin{blockarray}{ccc}
         & X          & \comp{X}\\
\begin{block}{c(cc)}
X        & A^{\mathrm{adj}}[X]     & p_A\cdot q_A^T\\
&&\\
\comp{X} & u_A\cdot v_A^T           & A^{\mathrm{adj}}[\comp{X}]\\
\end{block}
\end{blockarray}\,,
\]
where $p_A\in\F^{|X|}$ and $q_A\in\F^{|\comp{X}|}$ are defined as~\cref{eqn:vector1-submatrix-in-adj} and $v_A\in\F^{|X|}$ and $u_A\in\F^{|\comp{X}|}$ are defined as~\cref{eqn:vector2-submatrix-in-adj}. Let $A'$ be the following matrix. 
\[A'=
\begin{blockarray}{ccc}
         & X          & \comp{X}\\
\begin{block}{c(cc)}
X        & A^{\mathrm{adj}}[X]     & p_A\cdot u_A^T\\
&&\\
\comp{X} & q_A\cdot v_A^T           & A^{\mathrm{adj}}[\comp{X}]^T\\
\end{block}
\end{blockarray}\,.
\]
From \cref{remark:twist}, $\tw(A^{\mathrm{adj}}, X)\DE A'$ On the other hand, again applying~\cref{cl:submatrix-in-adj}, 
\[\widetilde A^{\mathrm{adj}}=
\begin{blockarray}{ccc}
         & X          & \comp{X}\\
\begin{block}{c(cc)}
X        & \widetilde A^{\mathrm{adj}}[X]                       & \tilde p\cdot \tilde u^T\\
&&\\
\comp{X} & \tilde q\cdot \tilde v^T              & \widetilde A^{\mathrm{adj}}[\comp{X}]\\
\end{block}
\end{blockarray}\,,
\]
where like~\cref{eqn:vector1-submatrix-in-adj} and~\cref{eqn:vector2-submatrix-in-adj}, $\tilde p, \tilde v\in\F^{|X|}$ and $\tilde q, \tilde u\in\F^{|\comp{X}|}$ are defined as follows: For $i\in X$ and $j\in\comp{X}$,
\begin{align*}
\tilde p[i]     &=(-1)^{\ell+i+1}\det(M_i^C)\ \  \text { and }\ \ \tilde u[j-\ell]=\sum_{k\in\comp{X}}(-1)^{k+j}u[k-\ell]\cdot \det(N^T[\comp{X}_j, \comp{X}_k])\\
\tilde v[i] &=(-1)^{\ell+i+1}\det(M_i^R)\ \ \text{ and }\ \ \tilde q[j-\ell]=\sum_{k\in\comp{X}}(-1)^{k+j}q[k-\ell]\cdot \det(N^T[\comp{X}_k, \comp{X}_j]). 
\end{align*}
Now we show that $A'=\widetilde A^{\mathrm{adj}}$, which in turn implies $\tw(A^{\mathrm{adj}}, X)\DE \widetilde A^{\mathrm{adj}}$. For all $i\in[n]$, let $(n)_i$ denote the set $[n]\setminus \{i\}$. Next, we divide our proof into three cases.
\begin{enumerate}
\item Assume that $i, j\in X$. Then, $$A'[i, j] = A^{\mathrm{adj}}[i, j]=\det(A[(n)_j, (n)_i]).$$ On the other hand, $$\widetilde A^{\mathrm{adj}}[i,j] = \det(\widetilde A[(n)_j, (n)_i]).$$ If $|X|=2$, then $\widetilde A[(n)_j, (n)_i]= A[(n)_j, (n)_i]^T$. Otherwise, $\comp{X}$ is also a cut of $ A[(n)_j, (n)_i]$ and $$ \widetilde A[(n)_j, (n)_i]\DE \tw(A[(n)_j, (n)_i],\comp{X})^T.$$ Therefore, applying~\cref{lem:PMEwithTwist}, $\det(\widetilde A[(n)_j, (n)_i])= \det(A[(n)_j, (n)_i])$ and hence, $A'[i,j]=\widetilde A^{\mathrm{adj}}[i,j]$ for all $i,j\in X$.

\item Assume that $i, j\in \comp{X}$. Then, $$A'[i, j]=A^{\mathrm{adj}}[j,i]=\det(A[(n)_i, (n)_j]).$$ On the other hand, $$\widetilde A^{\mathrm{adj}}[i,j] = \det(\widetilde A[(n)_j, (n)_i]).$$ If $|\comp{X}|=2$,  $\widetilde A[(n)_j, (n)_i]= A[(n)_i, (n)_j]$.  Otherwise, $X$ is also a cut of $ A[(n)_i, (n)_j]$ and $$ \widetilde A[(n)_j, (n)_i] \DE \tw(A[(n)_i, (n)_j],X).$$  Therefore, again applying~\cref{lem:PMEwithTwist}, $A'[i,j]=\widetilde A^{\mathrm{adj}}[i,j]$ for all $i,j\in \comp{X}$.

\item Assume that $i\in X$ and $j\in \comp{X}$. Then,
\begin{align*}
A'[i,j] &= p_A[i]\cdot u_A[j-\ell]\\
 &= (-1)^{\ell+i+1}\det(M_i^C) \cdot \sum_{k\in\comp{X}}(-1)^{k+j}u[k-\ell]\cdot \det(N[\comp{X}_k, \comp{X}_j])\\
 &= (-1)^{\ell+i+1}\det(M_i^C) \cdot \sum_{k\in\comp{X}}(-1)^{k+j}u[k-\ell]\cdot \det(N^T[\comp{X}_j, \comp{X}_k])\\
 &= \tilde p[i]\cdot \tilde u[j] = \widetilde A^{\mathrm{adj}}[i,j].
\end{align*}
Similarly, we can show that $A'[j,i]=\widetilde A^{\mathrm{adj}}[j,i]$.
\end{enumerate}
This completes the proof of our lemma.
\end{proof}


\section{Proof of \cref{lem:size4Matrix}}

\label{appendix:otherProofs}

\em{First, we present the following lemma, which provides an alternative characterization for principal minor equivalence testing and will be used in our proof.  For an $n\times n$ matrix $A$, let $G_A$ be a directed graph on $n$ vertices such that there exists a directed edge $(i,j)$ if and only if $A[i,j]\neq 0$. For a directed cycle $C$ of $G_A$, let $w_A(C)$ denote the weight of the cycle defined as $\prod_{(i,j)\in C}A[i,j]$. }

\begin{lemma} \label{lem:PMEEqualWeightedCycle}
    Let $A$ and $B$ be two $n\times n$ matrices. Then $A\PME B$ if and only if for each subset $S\subseteq [n]$ , the sum of weights of directed Hamiltonian cycles is the same for subgraphs $G_A[S]$ and $G_B[S].$ \end{lemma}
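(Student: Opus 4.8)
The plan is to route both conditions through a single intermediate quantity. For $S\subseteq[n]$, write $h_A(S)$ for the sum of the weights $w_A(C)$ over all directed Hamiltonian cycles $C$ of $G_A[S]$; equivalently $h_A(S)=\sum_{\sigma}\prod_{i\in S}A[i,\sigma(i)]$, where $\sigma$ ranges over the permutations of $S$ consisting of a single cycle (with the convention that for $|S|=1$ this single cyclic permutation is the identity, so $h_A(\{i\})=A[i,i]$, and $h_A(\emptyset)$ is the empty sum). The crux of the proof is the \emph{cycle‑decomposition identity} for the determinant: expanding $\det(A[S])=\sum_{\sigma\in\mathrm{Sym}(S)}\sg(\sigma)\prod_{i\in S}A[i,\sigma(i)]$ and grouping the permutations $\sigma$ according to the set partition $\mathcal P$ of $S$ into the vertex sets of the cycles of $\sigma$, and using that $\sg$ is multiplicative over disjoint cycles with a $k$‑cycle having sign $(-1)^{k-1}$ (independent of which $k$‑cycle it is), one obtains
\[
\det(A[S])\;=\;\sum_{\mathcal P\,\vdash\,S}\ \prod_{T\in\mathcal P}(-1)^{|T|-1}\,h_A(T),
\]
the sum being over all set partitions $\mathcal P$ of $S$. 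I would first state and verify this identity; it is standard, and the only thing to be careful about is the sign bookkeeping and the bijection between directed Hamiltonian cycles of an induced subgraph and single‑cycle permutations of its vertex set.

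Granting the identity, the backward implication is immediate: if $h_A(S)=h_B(S)$ for every $S\subseteq[n]$, then substituting into the right‑hand side above gives $\det(A[S])=\det(B[S])$ for every $S$, i.e.\ $A\PME B$.

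For the forward implication I would induct on $|S|$. The cases $|S|\le 1$ are trivial, since $A\PME B$ forces $A[i,i]=\det(A[\{i\}])=\det(B[\{i\}])=B[i,i]$, i.e.\ $h_A(\{i\})=h_B(\{i\})$ (and $h_A(\emptyset)=0=h_B(\emptyset)$). For $|S|\ge 2$, isolate the trivial partition $\mathcal P=\{S\}$ in the identity:
\[
\det(A[S])\;=\;(-1)^{|S|-1}h_A(S)\;+\;\sum_{\mathcal P\,\vdash\,S,\ \mathcal P\neq\{S\}}\ \prod_{T\in\mathcal P}(-1)^{|T|-1}\,h_A(T).
\]
Every block $T$ appearing in a partition $\mathcal P\neq\{S\}$ satisfies $|T|<|S|$, so by the induction hypothesis $h_A(T)=h_B(T)$; hence the sum over nontrivial partitions takes the same value for $A$ and for $B$. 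Since $A\PME B$ gives $\det(A[S])=\det(B[S])$, subtracting the two expressions yields $(-1)^{|S|-1}h_A(S)=(-1)^{|S|-1}h_B(S)$, hence $h_A(S)=h_B(S)$, which completes the induction.

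The only real obstacle is bookkeeping rather than conceptual: correctly identifying directed Hamiltonian cycles of $G_A[S]$ with single‑cycle permutations of $S$ (including the degenerate $|S|\le 2$ cases) and tracking the signs $(-1)^{|T|-1}$ through the Leibniz expansion of the determinant. Once the cycle‑decomposition identity is in hand, both directions follow by a one‑line substitution and a short induction, respectively.
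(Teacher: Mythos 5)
Your proof is correct, and the high-level strategy matches the paper's: isolate the single-cycle (Hamiltonian) contribution to $\det(A[S])$ and induct on $|S|$ to transfer equality of principal minors to equality of Hamiltonian-cycle sums and back. The difference lies in which determinant expansion you use. You invoke the exact set-partition (cycle-decomposition) identity
\[
\det(A[S])=\sum_{\mathcal P\,\vdash\,S}\ \prod_{T\in\mathcal P}(-1)^{|T|-1}h_A(T),
\]
whereas the paper writes $\det(A[S])$ recursively in terms of the smaller principal minors $\det(A[S\setminus T])$, the diagonal products $\prod_{i\in T}A[i,i]$, and the Hamiltonian-cycle sum, with unspecified signs that it asserts depend only on the block sizes. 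Your version is tighter and fully explicit: the backward implication becomes a one-line substitution needing no induction at all, and the forward implication needs only the inductive equality $h_A(T)=h_B(T)$ for $|T|<|S|$, rather than also invoking equality of the smaller minors as the paper does. Both arguments are sound; the partition identity simply makes the sign bookkeeping transparent, which is exactly the "only real obstacle" you flagged. One small point worth making explicit when you write it up: the convention $h_A(\{i\})=A[i,i]$ is forced by treating the loop at $i$ as the unique Hamiltonian cycle of the one-vertex induced subgraph, which is consistent with the paper's definition of $G_A$ (an edge $(i,j)$ whenever $A[i,j]\neq 0$, including $i=j$); without loops the backward direction would fail already at $|S|=1$.
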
 
\begin{proof}
    We show this by induction on the size of subsets. The base case, when the size of the subset is one, is trivial. Suppose the statement is true for subsets of size $k$. Let $S$ be a subset of size $k+1$ and $\det(A[\emptyset])=\det(B[\emptyset])=1$ and $\mathcal{C}_A$ and $\mathcal{C}_B$ denote the set of Hamiltonian cycles of $G_A[S]$ and $G_B[S]$, respectively. Then, 
    \begin{equation} \label{eq:AS}\det(A[S])=\sum\limits_{T\neq \emptyset, T\subseteq S } \pm (\det(A[S\setminus T])\prod_{i\in T}A[i,i]) \pm\sum_{C\in \mathcal{C}_A} w_A(C).\end{equation} 
   \begin{equation}\label{eq:BS} \det(B[S])=\sum\limits_{T\neq \emptyset, T\subseteq S } \pm (\det(B[S\setminus T])\prod_{i\in T}B[i,i]) \pm\sum_{C'\in \mathcal{C}_B} w_B(C').\end{equation} The backward direction follows directly from \cref{eq:AS,eq:BS} as all the principal minors of $A$ and $B$ are the same, and the signs of corresponding summands are the same in \cref{eq:AS,eq:BS}.
   Now, we show the forward direction. For any non-empty subset $T$, if we consider the sub-matrices $A[S\setminus T]$ and $B[S\setminus T]$, for each subset $T'\subseteq S\setminus T$ the sum of the weights of the directed Hamiltonian cycles in $G_A[T']$ and $G_B[T']$ are same.
   Hence, $\det(A[S\setminus T])=\det(B[S\setminus T])$ by induction hypothesis. Also, the signs are the same in \cref{eq:AS} and \cref{eq:BS} as it just depends on the size of the subsets $T$. This, along with the fact that the sum of weights of the Hamiltonian cycle in $G_A[S]$ and $G_B[S]$ are the same, implies $\det(A[S])=\det(B[S]).$


 \end{proof}

\repstatement{lem:size4Matrix} 
Let $A$ be a $4\times 4$ matrix over $\F$ with all off-diagonal entries are nonzero. Let $B$ be another $4\times 4$ matrix over $\F$ such that $A\PME B$. Then, one of the following two holds:
\begin{enumerate}
\item $A\dge B$.
\item There exists a common cut in $A$ and $B$. Furthermore, for any common cut $X$ of $A$ and $B$, $\tw(A,X)\dge B$.
\end{enumerate}
\sgc{Instead of saying it a `Claim', call it a `Lemma'. The exact reference of the first item in the above statement is shown in \cite[Theorem~4]{Hartfiell84}.}
\begin{proof}
    When $A$ does not have any cut, then from \cref{lem:noCut} $A\DE B$ or $A\DE B^T$. Suppose $A$ has a cut. Then, $B$ must have some cut; otherwise, since $A\PME B$, \cref{lem:noCut} would imply that $A$ has no cut, which is a contradiction.  First, we show that $B$ must have a cut that is common to $A$ using contradiction. Suppose this is not true. Without loss of generality, assume that $A$ has cut $\{1,2\}$ and $B$ has cut $\{1,3\}$. Since $A \PME B$, $A'=D_1AD_1^{-1}\PME D_2BD_2^{-1}=B'$ for any non-singular diagonal matrices $D_1$ and $D_2$. Since off-diagonal entries are non-zero, we can choose $D_1$ such that $A'[1,3]=A'[1,4]=A'[2,3]=1$. Since $A$ and $A'$ has same cuts, $\rank(A'[\{1,2\}, \{3,4\}])=1$ and hence $A'[2,4]=1$. If the above claim is true for $A'$ and $B$ then it is also true for $A$ and $B$. Hence, without loss of generality, we can assume that each entry of $A[\{1,2\},\{3,4\}]$ is one. Similarly, we can assume that each entry of $B[\{1,3\},\{2,4\}]$ is one. Let $A$ be the following matrix with non-zero off-diagonal entries.
     \begin{equation*}A=\begin{pmatrix}
                * & a & 1 & 1\\
                b & * & 1 & 1\\
                c &  dc & * & e \\
                f &  df  & g & *
                \end{pmatrix}\end{equation*}
 Since $A\PME B$, we can represent $B$ as follows by making its size two principal minors the same as of $A$.
     \begin{equation*}B=\begin{pmatrix}
                * & 1 & h & 1\\
                  ab & * & dc & i\\
                \frac{c}{h} & 1 & * & 1\\
                f &  \frac{df}{i} & eg & *
                \end{pmatrix}\end{equation*}
Since $\rank(B[\{2,4\},\{1,3\}])=1$, $abeg=dcf$. After substituting $g$ with $\frac{dcf}{abe}$, 
\begin{equation*}
    A=\begin{pmatrix}
                * & a & 1 & 1\\
                b & * & 1 & 1\\
                c &  dc & * & e \\
                f &  df  & \frac{dcf}{abe} & *
                \end{pmatrix} \text{ and } B= \begin{pmatrix}
                * & 1 & h & 1\\
                  ab & * & dc & i\\
                \frac{c}{h} & 1 & * & 1\\
                f &  \frac{df}{i} & \frac{dcf}{ab} & *
                \end{pmatrix}.
\end{equation*}
Since $A\PME B$, from \cref{lem:PMEEqualWeightedCycle}, we get the following equations by equating the sum of weights of Hamiltonian cycles in $G_A[S]$ and $G_B[S]$ such that $|S|=3$.
\begin{align}
   \label{eq:41}   ac + bdc = \frac{dc^2}{h} + abh \implies \left(a-\frac{dc}{h}\right)(c-bh) =0  & & [S=\{1,2,3\}] \\
   \label{eq:42} af+bdf = if +\frac{abdf}{i} \implies f(a-i)\left(1-\frac{bd}{i}\right)=0 & &[S=\{1,2,4\}]\\ 
   \label{eq:43} ef+\frac{dc^2f}{abe} =  hf + \frac{dc^2f}{abh} \implies f(e-h)\left(1-\frac{dc^2}{abeh}\right)=0 & & [S=\{1,3,4\}]\\
   \label{eq:44} edf +\frac{d^2c^2f}{abe} =\frac{d^2cf}{i}+\frac{dcif}{ab} \implies fd\left(e-\frac{cd}{i}\right)\left(1-\frac{ci}{abe}\right) =0 & & [S=\{2,3,4\}].
\end{align}
Since the off-diagonal entries are non-zero, in each equation, at least one of the last two factors must be zero. This gives 16 different possibilities because there are four equations. Now, we show that each of these possibilities would imply a contradiction.

Note that $\left(a-\frac{dc}{h}\right)=0$ and $(e-h)=0$ together implies $ae=dc$ which in turn implies $\{1,3\}$ is a cut of $A$ which is a contradiction. Hence, $\left(a-\frac{dc}{h}\right)$ and $e-h$ can't be zero together. Similarly, $\left(a-\frac{dc}{h}\right)$ and $a-i$ can't be zero together as it implies $cd=hi$. This implies $\{1,2\}$ is a cut of $B$, which contradicts the condition of no common cut. Hence, if $\left(a-\frac{dc}{h}\right)=0$ then  $\left(1-\frac{bd}{i}\right)$ and $\left(1-\frac{dc^2}{abeh}\right)$ must be zero to make \cref{eq:42,eq:43} zero. $\left(a-\frac{dc}{h}\right)=0$  and $\left(1-\frac{bd}{i}\right)=0$ imply $abh=ci$ which in turn implies $\{1,4\}$ is a cut of $B$. Similarly, $\left(a-\frac{dc}{h}\right)=0$ and $\left(1-\frac{dc^2}{abeh}\right)=0$ imply $c=be$. This implies $\{1,4\}$ is also a cut of $A$, which contradicts the no common cut condition. Note that $\left(a-\frac{dc}{h}\right)=0$
 always led to a contradiction. Hence, it must be that $(c-bh) =0$  so that \cref{eq:41} is satisfied.

 Now, we show that $(c-bh) =0$ would also always lead to a contradiction. If $\left(1-\frac{bd}{i}\right)$ is also zero, then $cd=hi$ which implies $\{1,2\}$ is a cut of $B$ which is a contradiction. Similarly, if $\left(1-\frac{dc^2}{abeh}\right)=0$ then $cd=ae$ which implies $\{1,3\}$ is a cut of $A$ which contradicts the no common cut condition. Hence, to satisfy \cref{eq:42,eq:43}, it must be that $(e-h)$ and $(a-i)$ is equal to zero along with $(c-bh)$. However, then $\{1,4\}$ becomes a cut of both $A$ and $B$. Hence, $(c-bh)$ also can't be zero. This contradicts $\cref{eq:41}$. Hence, if $A$ has a cut and $A\PME B$, then $A$ and $B$ must have a common cut.

Without loss of generality, let that common cut be $\{1,2\}$. Using earlier arguments, we can represent $A$ and $B$ as follows by making all entries of $A[\{1,2\},\{3,4\}]$ and $B[\{1,2\},\{3,4\}]$ one and equating size two principal minors.
\[    A=\begin{pmatrix}
                * & a & 1 & 1\\
                b & * & 1 & 1\\
                c &  dc & * & e \\
                f &  df  & g & *
                \end{pmatrix} \text{ and } B= \begin{pmatrix}
                * & h & 1 & 1\\
                  \frac{ab}{h} & * & 1 & 1\\
                c & dc & * & i\\
                f &  df & \frac{ge}{i} & *
                \end{pmatrix}.
\]
Since $A\PME B$, from \cref{lem:PMEEqualWeightedCycle}, we get the following equations by equating the sum of weights of Hamiltonian cycles in $G_A[S]$ and $G_B[S]$ for $S=\{1,2,3\}$ and $S=\{1,3,4\}$.
\begin{align}
     \label{eq:51}   ac+bdc= hc+\frac{abdc}{h} \implies c(a-h)\left(1-\frac{bd}{h}\right)=0  & & [S=\{1,2,3\}]\\
    \label{eq:53} ef+cg =  if + \frac{cge}{i} \implies (e-i)\left(f-\frac{cg}{i}\right)=0 & & [S=\{1,3,4\}]
\end{align}
\cref{eq:51,eq:53} together implies the following four possible cases. $(a-h)=0$ and $(e-i)=0$ implies $A=B$. Hence, $A\DE B$. 

If $\left(1-\frac{bd}{h}\right)=0$ and $\left(f-\frac{cg}{i}\right)=0$, then
\[B= \begin{pmatrix}
                * & bd & 1 & 1\\
                  \frac{a}{d} & * & 1 & 1\\
                c & dc & * & \frac{cg}{f}\\
                f &  df & \frac{fe}{c} & *
                \end{pmatrix} = \begin{pmatrix}
                1 & 0 & 0 & 0\\
                0 & \frac{1}{d} & 0 & 0\\
                0 &  0 & c & 0 \\
                0 &  0  & 0 & f
                \end{pmatrix} \begin{pmatrix}
                * & b & c & f\\
                a & * & dc & df\\
                1 &  1 & * & g \\
                1 &  1  & e & *
                \end{pmatrix} \begin{pmatrix}
                1 & 0 & 0 & 0\\
                0 & d & 0 & 0\\
                0 &  0 & \frac{1}{c} & 0 \\
                0 &  0  & 0 & \frac{1}{f} 
                \end{pmatrix} = DA^TD^{-1}. 
\] Hence, in this case, we get $A\DE B^T$.

If $(a-h)=0$ and $\left(f-\frac{cg}{i}\right)=0$, then 
\[B= \begin{pmatrix}
                * & a & 1 & 1\\
                  b & * & 1 & 1\\
                c & dc & * & \frac{cg}{f}\\
                f &  df & \frac{fe}{c} & *
                \end{pmatrix} = \begin{pmatrix}
                1 & 0 & 0 & 0\\
                0 & 1 & 0 & 0\\
                0 &  0 & c & 0 \\
                0 &  0  & 0 & f
                \end{pmatrix} \begin{pmatrix}
                * & a & c & f\\
                b & * & c & f\\
                1 &  d & * & g \\
                1 &  d  & e & *
                \end{pmatrix} \begin{pmatrix}
                1 & 0 & 0 & 0\\
                0 & 1 & 0 & 0\\
                0 &  0 & \frac{1}{c} & 0 \\
                0 &  0  & 0 & \frac{1}{f} 
                \end{pmatrix} .
\] Here, the matrix in the middle on the right hand side is $\tw(A,\{1,2\})$. Hence, in this case, $B\DE \tw(A,S)$.

Finally, the last case where $\left(1-\frac{bd}{h}\right)$ and $e-i$ are zero. Then,
\[B^T= \begin{pmatrix}
                * & \frac{a}{d} & c & f\\
                  bd & * & dc & df\\
                1 & 1 & * & g\\
                1 &  1 & e & *
                \end{pmatrix} = \begin{pmatrix}
                1 & 0 & 0 & 0\\
                0 & d & 0 & 0\\
                0 &  0 & 1 & 0 \\
                0 &  0  & 0 & 1
                \end{pmatrix} \begin{pmatrix}
                * & a & c & f\\
                b & * & c & f\\
                1 &  d & * & g \\
                1 &  d  & e & *
                \end{pmatrix} \begin{pmatrix}
                1 & 0 & 0 & 0\\
                0 & \frac{1}{d} & 0 & 0\\
                0 &  0 & 1 & 0 \\
                0 &  0  & 0 & 1 
                \end{pmatrix} .
\] Like the previous case, the matrix in the middle is $\tw(A,\{1,2\})$. Hence, in this case, $\tw(A,S)\DE B^T$. This completes the proof of \cref{lem:size4Matrix}.
\end{proof}

\end{document}